\crefname{lemma}{Lemma}{Lemmas}
\crefname{fact}{Fact}{Facts}
\newcommand{\colorconstraints}{\text{Color Constraints}}
\crefname{colorconstraints}{(color constraints)}{Color Constraints}
\crefname{indsetconstraints}{(indset constraints)}{IndSet Constraints}
\crefname{theorem}{Theorem}{Theorems}
\crefname{mtheorem}{Theorem}{Theorems}
\crefname{corollary}{Corollary}{Corollaries}
\crefname{claim}{Claim}{Claims}
\crefname{example}{Example}{Examples}
\crefname{algorithm}{Algorithm}{Algorithms}
\crefname{problem}{Problem}{Problems}
\crefname{definition}{Definition}{Definitions}
\newtheorem{theorem}{Theorem}[section]
\newtheorem*{theorem*}{Theorem}
\newtheorem*{proposition*}{Proposition}
\newtheorem{prop}[theorem]{Proposition}
\newtheorem{lemma}[theorem]{Lemma}
\newtheorem*{lemma*}{Lemma}
\newtheorem{coro}[theorem]{Corollary}
\newtheorem*{conjecture*}{Conjecture}
\newtheorem*{fact*}{Fact}
\newtheorem*{hypothesis*}{Hypothesis}
\theoremstyle{definition}
\newtheorem*{definition*}{Definition}
\newtheorem{algorithm}[theorem]{Algorithm}
\theoremstyle{remark}
\newtheorem*{claim*}{Claim}
\newtheorem{remark}[theorem]{Remark}
\newtheorem*{remark*}{Remark}
\newtheorem*{observation*}{Observation}
\let\mathbb\varmathbb
\newcommand{\R}{{\mathbb R}}
\newcommand{\norm}[1]{\lVert #1 \rVert}
\let\epsilon=\varepsilon
\newcommand{\eps}{\varepsilon}
\newcommand{\E}{{\mathbb E}}
\newcommand{\poly}{\mathrm{poly}}
\newcommand{\vol}{\mathrm{vol}}
\newcommand{\size}{\mathrm{size}}
\begin{document}

\title{Beyond Moments: Robustly Learning Affine Transformations with Asymptotically Optimal Error}
\author{He Jia\thanks{Supported in part by NSF awards CCF-2007443 and CCF-2106444.}\\ hjia36@gatech.edu\\ Georgia Tech \and Pravesh K . Kothari\thanks{\texttt{praveshk@cs.cmu.edu}. Supported by NSF CAREER Award \#2047933, NSF \#2211971, an Alfred P. Sloan Fellowship, and a Google Research Scholar Award.}\\ praveshk@cmu.edu \\ CMU \and Santosh S. Vempala\footnotemark[1]\\ vempala@gatech.edu\\ Georgia Tech}


\maketitle

\begin{abstract}
We present a polynomial-time algorithm for robustly learning an unknown affine transformation of the standard hypercube from samples, an important and well-studied setting for independent component analysis (ICA). Specifically, given an $\epsilon$-corrupted sample from a distribution $D$ obtained by applying an unknown affine transformation $x \rightarrow Ax+s$ to the uniform distribution on a $d$-dimensional hypercube $[-1,1]^d$, our algorithm constructs $\hat{A}, \hat{s}$ such that the total variation distance of the distribution $\hat{D}$ from $D$ is $O(\epsilon)$ using poly$(d)$ time and samples. Total variation distance is the information-theoretically strongest possible notion of distance in our setting and our recovery guarantees in this distance are optimal up to the absolute constant factor multiplying $\epsilon$. In particular, if the columns of $A$ are normalized to be unit length, our total variation distance guarantee implies a bound on the {\em sum} of the $\ell_2$ distances between the column vectors of $A$ and $A'$, $\sum_{i =1}^d \norm{a_i-\hat{a}_i}_2 = O(\epsilon)$. In contrast, the strongest known prior results only yield a $\epsilon^{O(1)}$ (relative) bound on the distance between \emph{individual} $a_i$'s and their estimates and translate into an $O(d\epsilon)$ bound on the total variation distance.

Prior algorithms for this problem rely on implementing standard approaches~\cite{Cardoso1998multidimensional} for ICA based on the classical \emph{method of moments}~\cite{FJK:96,NguyenR09} combined with robust moment estimators. We observe that an approach based on $o(\log d)$-degree moments provably fails to obtain non-trivial total variation distance guarantees for robustly learning an affine transformation unless $\epsilon < 1/d^{O(1)}$. Our key innovation is a new approach to ICA (even to outlier-free ICA) that circumvents the difficulties in the classical method of moments and instead relies on a new geometric \emph{certificate} of correctness of an affine transformation. Our algorithm is based on a new  method that iteratively improves an estimate of the unknown affine transformation whenever the requirements of the certificate are not met. 
\end{abstract}

\thispagestyle{empty}
\newpage
\pagenumbering{arabic}
\newpage
\section{Introduction}
We consider the problem of learning affine transformations from samples. Specifically,  we are given i.i.d. points $x \in \R^d$ obtained after applying an unknown affine transformation to a uniform sample from the hypercube $[-1,1]^d$, i.e., $x=As+a$ where $a,A$ are unknown and each coordinate of $s$ is uniformly sampled in $[-1,1]$. The study of efficient algorithms for estimating the unknown affine transformation up to desired error is a major topic in signal processing~\cite{Cardoso1998multidimensional,Comon91,ComonJutten}, with many interesting algorithms and heuristics. It is often called \emph{standard ICA} (a special case of the well-studied \emph{Independent Component Analysis}) or {\em blind deconvolution} or the ``cocktail party" problem. 

Algorithms for recovering the unknown affine transformation $A$ are generally based on higher directional moments of the distribution. The empirical mean and covariance of the transformed samples can be used to find an affine transformation that matches the mean and second moments of the original cube. The correct rotation can be identified by first making the distrbution isotropic (zero mean, identity covariance) and then examining the fourth moment of the empirical distribution. The directions that maximize the fourth moment correspond to the facet normals of the correct rotation, see e.g., ~\cite{FJK:96,NguyenR09}. The ``method of moments" has been extended, using higher moments, to various generalizations of standard ICA, including more general product distributions and underdetermined ICA~\cite{GVX14}.  

While the model has been quite influential and is widely studied, it is reasonable to expect that data will contain errors and will deviate, at least slightly, from the precise model. Recovering the underlying model parameters despite corruption, even arbitrary adversarial noise, is the mainstay of robust statistics, a field that has enjoyed a renaissance over the past decade (and is now called Algorithmic Robust Statistics). Beginning with the robust estimation of the mean of high-dimensional distributions~\cite{DKKLMS16,LaiRV16}, there has been tremendous progress on a variety of well-known and central problems in statistical learning theory, including linear regression~\cite{KlivansKM18,bakshi2021robust}\, covariance estimation~\cite{CDGW19} and Gaussian mixture models~\cite{bakshi2022robustly}. In all these cases, nearly optimal guarantees are known, asymptotically matching statistical lower bounds for the error of the estimated parameters. 

Despite much progress on ICA and on robust estimation, the  {\em robust} version of the problem has thus far evaded solution. The precise problem is as follows: we are given samples from an unknown affine transformation of a cube, after an $\eps$ fraction of the sample has been arbitrarily (adversarially) corrupted; estimate the affine transformation. Information-theoretically, it is possible to estimate an affine transformation so that the resulting distribution is within TV distance $O(\eps)$ of the unknown transformation. But can we find this algorithmically? 

Prior works~\cite{LaiRV16,KS17} obtained some guarantees for ICA in the presence of adversarial outliers by applying robust estimators for moments of data into the classical ICA algorithms based on the method of moments. The resulting guarantees allow recovering a linear transformation $\hat{A}$ so that (up to a permutation) each column of $\hat{A}$ is close to the corresponding column of $A$ up to $(1 \pm \epsilon^{O(1)})$ relative error in $\ell_2$ norm. However, as we discuss next, this guarantee is \emph{extremely} weak and implies no upper bound on the total variation distance. Indeed a total variation guarantee requires (and our methods here will obtain!) a bound of $O(\epsilon)$ on the \emph{sum} of the $\ell_2$ errors over all columns! In particular, robust ICA algorithms from prior works yield a bound on the relevant parameter distance, namely, the total $\ell_2$ error, which is off by a factor $d$ --- the underlying dimension. 

As we next discuss, this abject failure of known methods in obtaining strong recovery guarantees for ICA is in fact an inherent issue in any algorithm that relies on the method of moments and one of our main conceptual contribution is a truly new, non-method-of-moments algorithm for learning affine transformations, even in the non-robust setting. 

\paragraph{Inadequacy of the Method of Moments:} It has been shown that some of the algorithms for ICA are robust to structured noise such as Gaussian noise, i.e., instead of observing $x = As +a$, we see $x = As+a+z$ where $z\sim N(0, \sigma^2 I)$ is Gaussian~\cite{AroraGMS12,BelkinRV12}. However, adversarial noise breaks these classical methods, which are generally based on a constant number of moments. A natural idea is to replace moments with their robust counterparts, given that robust moment estimation is one of the successes of algorithmic robust statistics. However, as we illustrate next, these methods fall short for ICA.

Consider a unit cube whose center is shifted to an unknown point $\mu$. Now the problem simply consists of estimating $\mu$. Robust mean estimation algorithms will solve this problem to within error $O(\eps)$, i.e., one can efficiently find $\tilde{\mu}$ s.t. $\|\mu-\tilde{\mu}\|_2 = O(\eps)$ and this is the best possible bound. However, suppose the center of the cube is the origin, and the estimated center has all coordinates equal to $\eps/\sqrt{d}$. Then, the TV distance between the two corresponding cubes is $1-(1-\frac{\eps}{\sqrt{d}})^d \simeq 1 - e^{-\eps\sqrt{d}}$, very far from the best possible TV distance of $O(\eps)$. On the other hand, if one could estimate the mean with $O(\eps)$ error in $L_1$ norm, this would result in a TV distance bound of $O(\eps)$. This is simply because one can bound the TV distance as the sum over the distances along the marginals, and for each marginal it is bounded by the distance between the means. However, estimating the mean within $L_1$ error $\eps$ is impossible in general, e.g., for a Gaussian. Indeed, almost all the recently developed methods in robust statistics naturally provide guarantees for mean estimation in $\ell_2$ norm and yield no useful guarantees in our setting. 
 \begin{figure}[h]
 \centering
    \includegraphics[width = 0.9\textwidth]{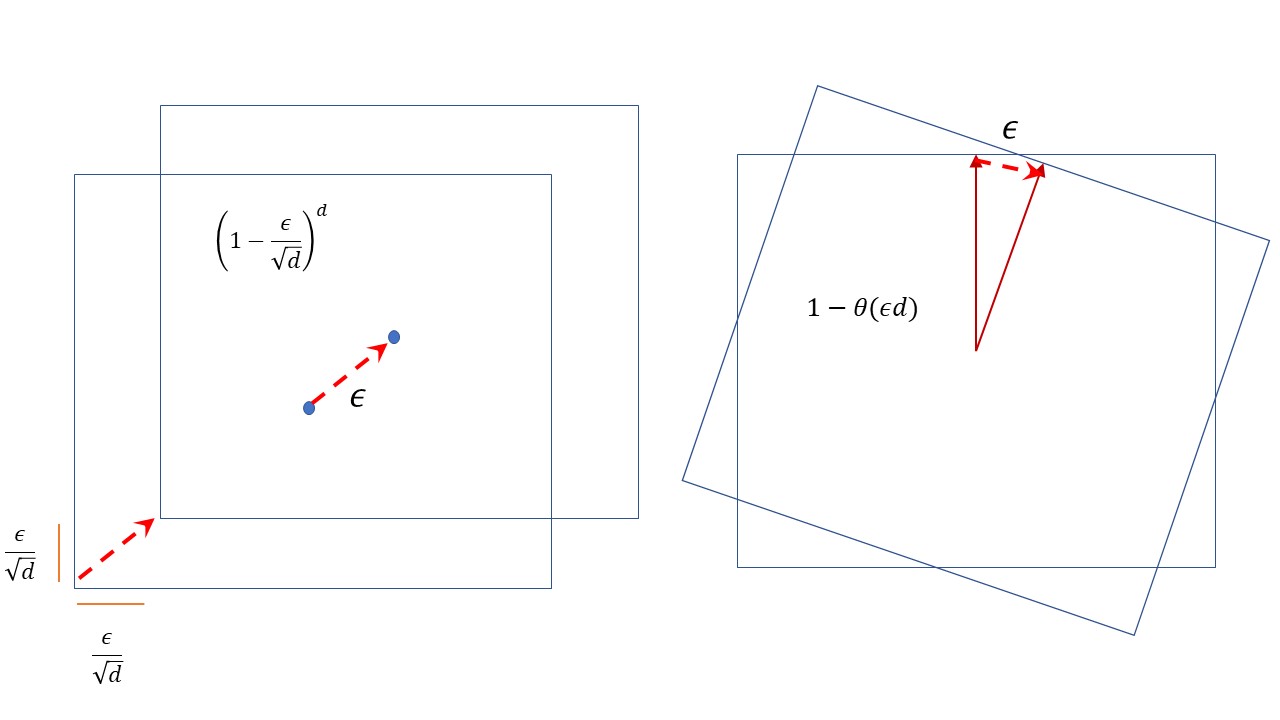}
	\caption{Shifting a cube in Euclidean norm or rotating can create very large TV distance.}
	\label{fig:L1vsL2}
	\end{figure}

The robust covariance and higher-moment estimation methods of ~\cite{KothariSS18} can be used to robustly learn the columns of the unknown linear transformation each to within $\eps$ error in Euclidean norm (after normalizing the covariance to be the identity). However, $\eps$ error in each column means a TV distance error of up to $1-(1-O(\eps))^d$, again growing with the dimension.  

\paragraph{Our Result:} The main contribution of this paper is developing an algorithm for learning affine transformations that circumvents the inherent issues with the method of moments and manages to obtain, using polynomial time and samples, almost optimal recovery guarantees in total variation distance. Specifically, our main result is a polynomial-time algorithm to robustly estimate an unknown affine transformation of the hypercube to within TV distance $O(\eps)$.

\begin{theorem}\label{thm:main}
Given $Y = \{y_1,\ldots, y_n\}$, an $\eps$-corrupted sample of points from an unknown parallelopiped $H = A[-1,1]^d+a$ in $\R^d$, there is a polynomial-time algorithm that outputs a parallelopiped $\widehat{H}=\widehat{A}[-1,1]^d+\widehat{a}$ s.t. $d_{TV}(\hat{H},H) = O(\eps)$.   
\end{theorem}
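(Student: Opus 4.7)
My plan is to reduce to a near-isotropic setting via standard robust estimation, then certify and iteratively correct the candidate affine transformation using a geometric test whose violation directly controls the \emph{total} (summed across facets) error rather than the per-facet error that moment methods control.

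First, I would apply off-the-shelf robust mean and covariance estimators~\cite{DKKLMS16,KothariSS18}, followed by a fourth-moment-based rotation step as in~\cite{FJK:96,NguyenR09}, to obtain an initial estimate $\widehat{A}_0,\widehat{a}_0$ whose columns match the columns of $A$ up to a signed permutation with per-column $\ell_2$ error $\eps^{\Omega(1)}$. This matches the guarantees of the prior robust ICA work of~\cite{LaiRV16,KS17}, but is far from the $O(\eps/d)$ per-column error that would be required to conclude an $O(\eps)$ TV bound directly from per-column control.

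The heart of the argument is a geometric certificate for a candidate $\widehat{A},\widehat{a}$: form the pullback distribution $Z = \widehat{A}^{-1}(X-\widehat{a})$ and test whether (i) each one-dimensional marginal of $Z$ is close to uniform on $[-1,1]$ in a strong one-dimensional distance (Kolmogorov or Wasserstein), (ii) $Z$ is close to a product distribution (the usual ICA condition, tested in a quantitative fashion), and (iii) the sum of the per-coordinate discrepancies in (i) and (ii) is $O(\eps)$. When such a certificate holds, one can couple $Z$ to the uniform distribution on $[-1,1]^d$ by independently recoupling each coordinate at total cost equal to the sum of the marginal errors, giving $d_{TV}(\widehat{H},H)=O(\eps)$. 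The essential point --- and the reason this circumvents the method of moments --- is that the certificate bounds a \emph{sum} of per-coordinate quantities, which is precisely the correct scaling for TV distance; the reasoning that uniform marginals plus (approximate) product structure force $\widehat{A}^{-1}A$ to be close to a signed permutation uses the fact that a nontrivial convex combination of independent uniforms is not uniform (a B-spline), so each marginal being uniform pins down one column of $\widehat{A}$.

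For the iterative refinement, when the certificate fails on the current iterate, the violating coordinate or pair of coordinates supplies both a direction and magnitude for updating $\widehat{A}$; concretely, one reads off a small rotation or column shift that strictly decreases the global certificate violation. The main obstacle, and presumably the technical heart of the paper, is proving that these local corrections make monotone progress in a global potential that sums over all columns, so that the algorithm converges in $\poly(d,1/\eps)$ iterations. This is exactly where moment-based methods break down: matching a moment couples many coordinates at once and gives no monotonicity of the summed error. A secondary, more routine obstacle is establishing empirical concentration of the one-dimensional tests from $\poly(d)$ corrupted samples, which should reduce to standard robust quantile estimation.
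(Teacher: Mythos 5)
There is a genuine gap, and it has two parts. First, the certificate you propose fails completeness under adversarial corruption: an adversary can place all $\eps n$ corrupted points at a single location (e.g.\ near a corner of the cube), which perturbs the empirical one-dimensional marginal of \emph{every} coordinate of $Z=\widehat A^{-1}(X-\widehat a)$ by $\Theta(\eps)$ in Kolmogorov or Wasserstein distance simultaneously; your condition (iii), that the sum over coordinates of the discrepancies be $O(\eps)$, is then violated by a factor of $d$ even when $\widehat A=A$, $\widehat a=a$. This is exactly the $O(d\eps)$-versus-$O(\eps)$ issue the paper is built to overcome, and the missing mechanism is the pairwise (two-dimensional) control of corruption: the paper checks that the empirical mass of the intersection of any two bands is at most a constant times the product of their masses (justified for the true distribution by \cref{lem:intersection_of_bands}), deletes points in violating intersections (shown to be mostly outliers, \cref{lemma:delete-noise}), and then invokes a combinatorial lemma on almost pairwise disjoint sets (\cref{lem:intersection-sum}) to conclude that the corruption effectively partitions among the $d$ coordinate directions, so the per-coordinate corruption fractions $\eps_i$ sum to $O(\eps)$. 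Without such a mechanism your summed certificate cannot even be evaluated to accuracy $O(\eps)$ from corrupted data. Relatedly, your condition (ii) --- quantitative closeness of $Z$ to a product distribution, strong enough to support a coordinate-wise coupling in TV --- is not efficiently testable from $\poly(d)$ samples; the paper never tests product structure, only one-dimensional densities and pairwise band intersections, and its soundness argument comes from cube geometry (slab and section volume bounds, logconcavity of marginals), not from an ICA identifiability/B-spline argument.

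Second, the iterative refinement is asserted rather than proved: you yourself flag the monotone-progress claim as "presumably the technical heart," and that is precisely where the paper's work lies. The paper's update is concrete robust gradient descent on one facet normal at a time: it forms the set of points outside the band $\{x:\abs{\hat a_i^\top x}>1\}$, robustly estimates its mean $\tilde\mu$ in Euclidean norm via bounded-covariance robust mean estimation (\cref{thm:robust-mean}), and updates $a\mapsto (a-\beta\tilde\mu)/\norm{a-\beta\tilde\mu}$. Provable progress rests on structural facts with no counterpart in your sketch: the mean of the outside region has $e_i$-projection at most $1-\Omega(\delta)$ while its $\hat a_i$-projection exceeds $1$ (\cref{fact:mean-proj-bound}), its variance in the relevant directions is small enough that the $O(\sqrt{c})\,\sigma$ robust-mean error is dominated by the $\Omega(\delta)$ signal once the corruption fraction inside the band is $\lesssim\delta$ (\cref{fact:variance-in-coordinate}), and the mass outside the band is $\Theta(\delta)$ (\cref{fact:bound-out}), which both calibrates the step size and converts the final bound $\sum_i\norm{\hat a_i-a_i}_2=O(\eps)$ into a TV bound. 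Absent these ingredients (or a substitute potential function with a proof of decrease), the proposal does not yield the theorem.
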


Our algorithm is based on a new approach to ICA, even without outliers, that does not rely on constant-order moments as in prior works. As we discuss next, a key contribution of our work is the development of a new \emph{certificate} of correctness of the estimated linear transformation that does not rely on low order moments. Our algorithm relies on an iterative update step that makes progress whenever the current guess on the unknown affine transformation fails a check in the certificate.

\subsection{Approach and techniques}

Let us first assume the affine transformation consists of only a shift and a diagonal scaling. In this case, we start with a coarse approximation of the center and side lengths obtained by using the coordinate-wise median, and a scaling of the interval in each coordinate that contains the middle $(1/2)+\eps$ fraction of samples. We then refine this iteratively using the fact that density of the cube is uniform along each coordinate, and, crucially, that the measure of the intersection of two axis-parallel bands is bounded by the product of their individual measures. If the latter condition is violated, then the intersection has a large fraction of corrupted samples, and we simply delete all the points in the intersection and continue. A simple and important idea here is that most of the corrupted points can be partitioned among the coordinate directions.  

Now consider a general rotation, i.e., $A$ is orthonormal and $a=0$. This turns out to be substantially more challenging. The following bound on the TV distance serves as a starting point.
\begin{lemma}\label{fact:dtv-rotation}
Let $H=[-1,1]^d$. Suppose $A, \widehat{A}$ are $d \times d$ matrices; $A$ is orthonormal and $\widehat{A}$ is a matrix with unit length rows. There is an absolute constant $C$ s.t. 
\[
d_{TV}(\widehat{A}H, AH)\le C \sum_i\norm{\widehat{a}_i-a_i}_2.
\]
\end{lemma}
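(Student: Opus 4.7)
The plan is to bound $d_{TV}$ by the normalized volume of the symmetric difference of $U := AH$ and $\widehat U := \widehat A H$. Since both laws are uniform, a short direct computation gives
\[
d_{TV}(\mathrm{Unif}(U),\mathrm{Unif}(\widehat U)) \;=\; 1 - \frac{|U\cap\widehat U|}{\max(|U|,|\widehat U|)} \;\le\; \frac{|U\setminus\widehat U| + |\widehat U\setminus U|}{\max(|U|,|\widehat U|)}\mcom
\]
so it suffices to control each of $|U\setminus\widehat U|$, $|\widehat U\setminus U|$, and the volume ratio $|\widehat U|/|U|$. We may assume $\sum_i\norm{\widehat a_i-a_i}_2$ is below a suitable absolute constant, since otherwise the bound is trivially implied by $d_{TV}\le 1$.

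To bound $|U\setminus\widehat U|$, write each parallelepiped as the intersection of $d$ slabs with unit outer normals $a_i$ and $\widehat a_i$ (using that $A$ is orthonormal and $\widehat A$ has unit-length rows). A union bound over slabs then yields
\[
|U\setminus \widehat U| \;\le\; \sum_{i=1}^d \bigl|\{x\in U : |\iprod{\widehat a_i,x}|>1\}\bigr|\mper
\]
Under the volume-preserving change of variables $x = A^{\top}s$, which sends $U$ onto $[-1,1]^d$, the $i$-th term becomes $2^d\cdot \Pr_{s\sim\mathrm{Unif}([-1,1]^d)}[|\iprod{w_i,s}|>1]$, where $w_i := A\widehat a_i$ is a unit vector satisfying $w_i - e_i = A(\widehat a_i-a_i)$, hence $\norm{w_i-e_i}_2 = \norm{\widehat a_i-a_i}_2 =: \gamma_i$.

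The heart of the argument is then a one-dimensional tail calculation showing $\Pr_s[|\iprod{w_i,s}|>1] = O(\gamma_i)$. Decomposing $w_i = c_i e_i + u_i$ with $u_i \perp e_i$ yields $\norm{u_i}_2 \le \gamma_i$ and $c_i = \sqrt{1-\norm{u_i}_2^2}\in(0,1]$. For each fixed $s_{-i}$, writing $a := \iprod{u_i,s_{-i}}$, the set $\{s_i\in[-1,1] : |c_i s_i + a|>1\}$ has length at most $|a|/c_i$ by a direct case analysis, so its probability under $s_i\sim\mathrm{Unif}([-1,1])$ is at most $|a|/(2c_i)$. Taking expectation over $s_{-i}$ and applying Cauchy--Schwarz, $\E|a| \le \norm{u_i}_2/\sqrt 3$, so the sliver probability is $O(\gamma_i)$. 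Summing over $i$ gives $|U\setminus\widehat U|\le O(2^d\sum_i\gamma_i)$.

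For the reverse direction I use the identity $|\widehat U\setminus U| = (|\widehat U| - |U|) + |U\setminus \widehat U|$, which reduces the task to bounding the volume ratio $|\widehat U|/|U| = 1/|\det \widehat A|$ (by Hadamard, $|\det\widehat A|\le 1$). This is the main obstacle: a first-order perturbation bound is insufficient. The idea is to write $\widehat A = A + E$ and expand $\widehat A\widehat A^{\top} = I + M$ with $M = EA^{\top} + AE^{\top} + EE^{\top}$, and to observe that the unit-row assumption on $\widehat A$ forces $M_{ii} = 0$ for every $i$. This trace-zero property kills the first-order term in $\log\det(I+M)$, yielding $|\det\widehat A| = 1 - O(\norm{M}_F^2) = 1 - O(\norm{E}_F^2)$; since $\norm{E}_F \le \sum_i \gamma_i$ (and we are in the regime $\sum_i \gamma_i$ small), we get $|\widehat U|/|U| - 1 = O((\sum_i\gamma_i)^2) = O(\sum_i\gamma_i)$. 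Combining all bounds yields $d_{TV} \le C\sum_i\norm{\widehat a_i - a_i}_2$ as desired.
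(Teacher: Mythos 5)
Your proof is correct, and it follows the same skeleton as the paper's argument (reduce $d_{TV}$ to the symmetric-difference volume, union-bound over the $d$ facet slivers, and show each sliver $\{x\in AH: |\widehat a_i\cdot x|>1\}$ has volume fraction $O(\norm{\widehat a_i-a_i}_2)$), but it differs in two substantive ways. First, the paper obtains the per-sliver estimate from Lemma~\ref{fact:bound-out}, whose proof computes the sliver volume by an explicit integral and invokes logconcavity facts about one-dimensional marginals of the cube (Lemmas~\ref{lem:positive_mean} and~\ref{lem:logconcave-moments}), yielding matching upper and lower bounds (the lower bound is reused elsewhere in the analysis); your conditional one-dimensional computation (fix $s_{-i}$, bound the bad interval for $s_i$ by $|a|/c_i$, then Cauchy--Schwarz) is more elementary and delivers exactly the upper bound this lemma needs. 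Second, you explicitly address a point the paper's one-sentence derivation glosses over: since $\widehat A$ is only row-normalized rather than orthonormal, $\vol(\widehat A H)\neq\vol(AH)$, and the sliver union bound only controls $AH\setminus\widehat A H$, while the TV distance (normalized by the larger volume) also involves $\widehat A H\setminus AH$. Your observation that the unit-row constraint forces $\tr(\widehat A\widehat A^{\top}-I)=0$, so the volume ratio deviates from $1$ only at second order in $\norm{\widehat A-A}_F$, closes this gap cleanly (together with the harmless reduction to $\sum_i\norm{\widehat a_i-a_i}_2$ small). One small remark: you interpret $\widehat A H$ as the slab intersection $\{x:|\widehat a_i\cdot x|\le 1\ \forall i\}$ with the rows as facet normals; this matches how the paper actually uses the lemma (e.g., $\widehat H=\{x:\hat v_i\le\hat a_{(i)}\cdot x\le\hat u_i\}$ and the sets $S(a)$ in Algorithm~\ref{alg:rotation}), so the reading is the intended one, but it is worth stating explicitly since the image interpretation $\{\widehat A s: s\in H\}$ would require transposing the roles of rows and inverse rows.
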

This lemma follows from Lemma~\ref{fact:bound-out}, in which we prove the fraction of the volume of $\{x\in AH: x\cdot \widehat a_i >1\}$ is tightly bounded by $C\norm{\widehat{a_i}-a_i}_2$.
Crucially, the RHS terms are not squared. If they were, it would be the Frobenius norm and we can hope to robustly estimate to within low error. This norm however is not rotationally invariant, depends on the target basis (as it should!), and could be much larger. 

To learn a rotation, we start with a coarse approximation, where each facet normal is approximated to within poly($\eps$) error. Then, we consider one vector at a time (keeping the others fixed) and iteratively ``improve" it. Our desired objective is to maximize the number of uncorrupted sample points inside the parallelopiped defined by the current vectors. But this is hard to estimate or improve locally. Instead, we focus on minimizing the number of points outside the band $|\widehat{a}_i^\top x|\le 1$. We do this by proving that if the current distance $\|\widehat{a}_i-a_i\|_2 = \delta$, then the mean of points in the difference in this direction i.e., 
$S = \{x: \widehat{a}_i^\top x \ge 1, |a_i^Tx| \le 1\}$ gives us an indication of which direction to move $\widehat{a}_i$ to make it closer to $a_i$. In other words, we can improve the objective of number of points outside the band by a local update. However, the presence of outliers complicates matters, as a small number of outliers could radically alter the location of the mean outside the band. To address this challenge, we prove that estimating the mean of this subset $S$ {\em robustly in Euclidean norm} suffices to preserve the gradient approximately! Alongside, to keep the influence of noise under control, we ensure that pairwise intersections of bands are all small. Roughly speaking, our algorithm is robust gradient descent with provable guarantees. 

We combine the above procedures by alternating between them to robustly learn arbitrary affine transformations.

\section{Preliminaries}

\subsection{Robust Estimation}
\begin{theorem}[Robust Mean Estimation for Bounded Covariance Distributions]\cite{DKKLMS16}
\label{thm:robust-mean}
There exists a polynomial time
algorithm that takes input an $\eps$-corruption sample of a collection $X$ of $n$ points in $\R^d$ where the mean of $X$ is $\mu$ and the covariance of $X$ is $\Sigma$ and outputs an estimate $\tilde\mu$ satisfying
\[
\norm{\tilde \mu -\mu}_2 \le O(\sqrt{\eps})\norm{\Sigma}_2^{1/2}.
\]
\end{theorem}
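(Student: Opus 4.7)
The plan is to use the classical iterative filtering approach of Diakonikolas--Kamath--Kane--Li--Moitra--Stewart. At the heart of the argument is a \emph{stability} property of the uncorrupted sample, which turns a spectral certificate on the corrupted sample into a mean--estimation guarantee.

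First, I would establish a stability lemma for the clean sample. Let $X^\star$ denote the uncorrupted multiset of $(1-\eps)n$ points coming from the true distribution, with mean $\mu$ and covariance $\Sigma$. Using standard matrix Bernstein / Chebyshev-style arguments, I would show that when $n = \mathrm{poly}(d/\eps)$, with high probability the following holds: for every subset $T \subseteq X^\star$ of size at least $(1-2\eps)n$, the empirical mean $\mu_T$ satisfies $\norm{\mu_T - \mu}_2 \le O(\sqrt\eps)\norm{\Sigma}_2^{1/2}$ and the empirical covariance $\Sigma_T$ satisfies $\norm{\Sigma_T}_2 \le O(\norm{\Sigma}_2)$. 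This is the only place where the distribution hypothesis is used and it fixes the target error scale.

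Next, I would describe the filtering algorithm. Initialize the working set $S$ to be all $n$ input points. While the empirical covariance $\hat\Sigma_S$ of $S$ satisfies $\norm{\hat\Sigma_S}_2 > C \norm{\Sigma}_2$ for a suitably large constant $C$: compute a top eigenvector $v$ of $\hat\Sigma_S$, score each $x_i \in S$ by $\tau_i = \iprod{v, x_i - \hat\mu_S}^2$, and remove points with a random threshold rule (e.g., keep $x_i$ with probability $1 - \tau_i / \tau_{\max}$, or use a deterministic threshold at the largest gap). When the loop terminates, output $\tilde\mu = \hat\mu_S$.

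The key invariant to maintain throughout filtering is that the number of uncorrupted points removed is at most the number of corrupted points removed. Granting the invariant, at termination $|S \cap X^\star| \ge (1-2\eps)n$, so by the stability lemma $\norm{\mu_{S \cap X^\star} - \mu}_2 \le O(\sqrt\eps)\norm{\Sigma}_2^{1/2}$. Moreover, because $\norm{\hat\Sigma_S}_2 \le C\norm{\Sigma}_2$ at termination, swapping between $S$ and $S \cap X^\star$ (an $\eps$-fraction change) can only shift the mean by $O(\sqrt\eps)\norm{\hat\Sigma_S}_2^{1/2} \le O(\sqrt\eps)\norm{\Sigma}_2^{1/2}$, via the elementary identity bounding $\norm{\mu_A - \mu_B}_2$ by $\sqrt{|A \triangle B|/|A|}$ times the square root of the spectral norm of the second moment. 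Combining the two bounds by triangle inequality yields the claimed error.

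The main obstacle is the invariant on the filtering step, i.e., showing that whenever $\norm{\hat\Sigma_S}_2 \gg \norm{\Sigma}_2$, the mass of the variance in direction $v$ contributed by the remaining corrupted points exceeds that contributed by the clean points. This follows by decomposing $\hat\Sigma_S$ into contributions from $S \cap X^\star$ and from the surviving corruptions: by the stability property the clean contribution to $v^\top \hat\Sigma_S v$ is $O(\norm{\Sigma}_2)$, so if $v^\top \hat\Sigma_S v > C\norm{\Sigma}_2$ the corruption contribution must dominate, which in turn forces the score distribution on corrupted points to stochastically dominate that on clean points above any threshold in the upper tail. A careful choice of threshold (or the randomized scheme above) then gives the desired invariant in expectation, and the algorithm terminates in polynomially many rounds because each round strictly decreases $|S|$.
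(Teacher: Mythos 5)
This theorem is not proved in the paper at all: it is imported verbatim as a black-box tool from the cited work \cite{DKKLMS16}, so there is no in-paper argument to compare against. Your sketch is the standard stability-plus-spectral-filtering proof from that literature and is essentially correct at this level of detail (modulo routine care about how the stopping threshold $C\norm{\Sigma}_2$ is known or guessed, and a pruning step so the clean-sample concentration holds with $\poly(d/\eps)$ samples), which is exactly the approach of the source the paper cites.
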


\begin{theorem}[Theorem 1.4 in~\cite{KS17}, see also~\cite{LaiRV16}]
\label{thm:robust-ica}
There exists a polynomial time algorithm that given a corrupted sample $X$ of n points in $\R^d$ drawn from a rotated cube $AH$ where $A\in\R^{d\times d}$ is a orthogonal matrix with rows $a_1,\dots,a_d$, outputs component estimates $\hat a_1,\dots,\hat a_d\in\R^d$ with the following guarantee:
the components estimates satisfy with high probability, there exists a permutation $\pi\in S_d$ such that for any $i\in[d]$,
\[
\langle \hat a_i \cdot a_{\pi(i)}\rangle^2\ge 1-O(\sqrt{\eps}).
\]
\end{theorem}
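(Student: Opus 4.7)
The plan is to follow the classical Frieze–Jerrum–Kannan moment-based approach to ICA, but with robust moment estimators in place of empirical ones, executed inside the sum-of-squares framework of \cite{KS17}. Since $A$ is orthogonal and each coordinate of $s$ is uniform on $[-1,1]$ (mean $0$, variance $1/3$), the uncorrupted distribution has covariance $\tfrac{1}{3}I$. We first robustly estimate the mean and covariance to whiten the samples: a polynomial-time procedure analogous to \Cref{thm:robust-mean} produces $\hat\mu,\hat\Sigma$ with $O(\sqrt{\eps})$ spectral error (the cube has bounded fourth moments, which is more than enough). After whitening by $\hat\Sigma^{-1/2}$, the uncorrupted sample becomes $y = R s$ for an orthogonal $R$ that agrees with $\sqrt{3}\,A$ up to an $O(\sqrt{\eps})$ rotation error.

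The key structural fact is that uniform on $[-1,1]$ has \emph{negative} excess kurtosis. For any unit $u$, writing $u=\sum_i \alpha_i r_i$ in the basis of $R$'s rows, a direct computation using $\E[s_i^2]=\tfrac{1}{3}$ and $\E[s_i^4]=\tfrac{1}{5}$ yields
\[
p(u) \;\defeq\; \E[\langle u,y\rangle^4] \;=\; -\tfrac{2}{15}\sum_i \alpha_i^4 + \tfrac{1}{3}\Bigl(\sum_i \alpha_i^2\Bigr)^2\mper
\]
Constrained to $\|u\|_2=1$, this is strictly minimized exactly at $u=\pm r_i$, since on the sphere minimizing $p(u)$ is equivalent to maximizing $\sum_i \alpha_i^4$, which happens only when $\alpha$ is concentrated on a single coordinate. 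So recovering the rows of $R$ reduces to finding the local minima of the quartic form $p$ on the unit sphere.

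To handle $\eps$-corruption we robustly estimate the fourth-moment tensor $M = \E[x^{\otimes 4}]$ using the SoS filtering machinery of \cite{KS17}: the product distribution on the cube admits a constant-degree SoS proof of hypercontractivity (bounds of the form $\E\langle u,s\rangle^{2k}\le C^k \E\langle u,s\rangle^2{}^k$), which feeds into the standard filter-based moment estimator to produce $\hat M$ with $\|\hat M - M\|_F = O(\sqrt{\eps})$. We then run the SoS analog of tensor power iteration / deflation on $\hat M$: solve the degree-$O(1)$ pseudo-expectation program $\min \tilde\E[\langle u,x\rangle^4]$ subject to $\|u\|^2=1$, round to a single unit vector $\hat a_1$, project it out from $\hat M$, and iterate.

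The main obstacle is the \emph{SoS identifiability} step: showing that any $\hat u$ near-optimal for the estimated program must satisfy $\langle \hat u, r_i\rangle^2 \ge 1 - O(\sqrt{\eps})$ for some $i$. Analytically this follows from the curvature of $p$ at each $\pm r_i$ (the Hessian on the sphere is $\Theta(1)$-bounded away from zero in the tangent directions), but translating this curvature bound into a constant-degree SoS proof over the variables $\alpha_i^2$ — which is what is needed because $\hat M$ is only controlled in Frobenius error — is the technical heart of the argument and is where the hypercontractivity of $s$ is crucially used a second time. Combining the whitening error, the $O(\sqrt{\eps})$ Frobenius error in $\hat M$, and the identifiability slack yields $\langle \hat a_i, a_{\pi(i)}\rangle^2 \ge 1 - O(\sqrt{\eps})$; the deflation step preserves this because each extracted direction is almost exactly aligned with some row of $R$, so projecting it out perturbs the remaining moment tensor by $O(\sqrt{\eps})$.
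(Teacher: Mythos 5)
This statement is not proved in the paper at all: it is imported verbatim as a black box (Theorem~1.4 of~\cite{KS17}, see also~\cite{LaiRV16}) and used only as a warm start in Algorithms~\ref{alg:rotation} and~\ref{alg:general}. So there is no in-paper proof to compare against; the question is whether your sketch would itself constitute a proof of the cited guarantee, and as written it does not. Your high-level strategy (whitening, the negative-kurtosis computation $\E[\langle u,y\rangle^4]=-\tfrac{2}{15}\sum_i\alpha_i^4+\tfrac13(\sum_i\alpha_i^2)^2$, and the reduction to minimizing a quartic on the sphere, robustified via SoS moment estimation) is indeed the right skeleton and matches the spirit of~\cite{KS17}, but the two steps that carry all the difficulty are only named, not carried out. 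You say yourself that turning the curvature of $p$ at $\pm r_i$ into a constant-degree SoS identifiability proof ``is the technical heart of the argument''; deferring exactly that step means the proposal is an outline, not a proof.

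Two further points would break the argument as stated even if the identifiability step were supplied. First, the guarantee you assume from robust moment estimation, $\|\hat M - M\|_F = O(\sqrt{\eps})$, is not what the filtering/SoS machinery provides (nor what is needed): the usable guarantee is an SoS-certified bound of the form $|\tilde\E\langle u,x\rangle^4 - \E\langle u,x\rangle^4| \le O(\sqrt{\eps})\,\|u\|_2^4$, uniformly over $u$ inside the proof system, and the identifiability argument has to be run against that error model rather than a Frobenius perturbation. Second, the rounding step ``solve $\min \tilde\E[\langle u,x\rangle^4]$ subject to $\|u\|^2=1$ and round to a single unit vector'' is not sound as described: the quartic has $2d$ exact minimizers $\pm r_1,\dots,\pm r_d$, and a pseudo-distribution spread symmetrically over all of them is feasible and optimal, so naive rounding (e.g., taking $\tilde\E[u]$, which can be $0$) extracts nothing. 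One needs an additional mechanism --- random contractions / Jennrich-style decomposition of the estimated tensor, reweighting, or higher-degree constraints breaking the symmetry --- and the deflation analysis (that projecting out an $O(\eps^{1/4})$-accurate direction does not degrade the remaining components after $d$ rounds) also needs to be argued rather than asserted. Since the paper only cites this theorem, the cleanest fix is simply to invoke~\cite{KS17} as the paper does; if you want a self-contained proof, the rounding/identifiability machinery is the part you must actually supply.
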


\subsection{Logconcave functions}

The following lemmas are either from  \cite{LV07} or direct consequences.

\begin{lemma}\label{lem:intersection_of_bands}
Let $f:\R^2 \rightarrow \R_+$ be an isotropic two-dimensional logconcave density with associated measure $\nu$. Let $u,v$ be unit vectors in $\R^2$ with $|u^\top v| \le \frac{1}{2}$. Consider the bands $H_u = \{x:  u^\top x \ge a_u\}$ and $H_v = \{x: v^\top x \ge a_v\}$ and assume that the marginal densities along $u$ and $v$, $f_u$ and $f_v$ satisfy
$f_u(a_u), f_v(a_v)\ge 1/2$.  
Then,  for a universal constant $C$, 
\[
\nu(H_u \cap H_v) \le C \nu(H_u)\nu(H_v) 
\]
\end{lemma}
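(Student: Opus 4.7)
The plan is to change coordinates aligned with $u$ and $v$, localize the thresholds $a_u, a_v$ using the marginal-density hypotheses, and then compare the joint tail over the $uv$-wedge to the product of the marginal tails. Because the lemma is stated as a direct consequence of results in \cite{LV07}, I expect the argument to reduce to standard 1D and 2D logconcave estimates.

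First, I would introduce the coordinates $(z, w) = (u^\top x, v^\top x)$. Since $|u^\top v| \le 1/2$, the angle $\theta$ between $u$ and $v$ satisfies $|\sin\theta| \ge \sqrt{3}/2$, so this linear map is non-degenerate; it pushes $f$ forward to a 2D logconcave density $g(z,w)$ whose marginals along $z$ and $w$ are exactly $f_u$ and $f_v$. Isotropy of $f$ gives $\mathrm{Cov}(Z, W) = u^\top v \in [-1/2, 1/2]$, so the correlation between $Z$ and $W$ is bounded strictly away from $\pm 1$, and the claim reduces to $\Pr[Z \ge a_u, W \ge a_v] \le C \Pr[Z \ge a_u] \Pr[W \ge a_v]$. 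To localize the thresholds, the LV07 estimate $\sup f_u \le 1$ together with $f_u(a_u) \ge 1/2$ forces $\{f_u \ge 1/2\}$ to be an interval of length at most $2$, so $a_u$ is within $O(1)$ of the mode of $f_u$, and symmetrically for $a_v$; since the modes of 1D isotropic logconcave densities lie within $O(1)$ of the origin, $|a_u|, |a_v| = O(1)$.

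The core step is to bound
\[
\Pr[Z \ge a_u, W \ge a_v] = \int_{a_u}^\infty f_u(z) \, \Pr[W \ge a_v \mid Z = z] \, dz
\]
by $C \Pr[W \ge a_v] \Pr[Z \ge a_u]$. I plan to split the integration at $z = a_u + R$ for a large universal constant $R$. In the bulk regime $z \in [a_u, a_u + R]$, the conditional density of $W$ given $Z = z$ is 1D logconcave as a slice of $g$, and the bounded correlation together with the localization of $a_v$ should force its tail at $a_v$ to be within a universal constant factor of $\Pr[W \ge a_v]$. In the deep-tail regime $z > a_u + R$, the prefactor $f_u(z)$ decays at least exponentially by logconcavity and $\sup f_u \le 1$, and this decay should dominate any possible growth of the conditional tail, keeping the total contribution bounded by $O(\Pr[W \ge a_v] \cdot \Pr[Z \ge a_u])$.

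The principal obstacle is controlling the conditional tail $\Pr[W \ge a_v \mid Z = z]$ uniformly in the deep-tail regime. For general logconcave joint densities, the behavior of the conditional distribution as $z$ grows is more delicate than in the Gaussian case, so the bounded correlation $|\mathrm{Cov}(Z, W)| \le 1/2$ by itself does not immediately yield a pointwise bound. The right approach likely combines a logconcave localization argument in the spirit of Lov\'asz--Simonovits needle decompositions with 1D tail estimates, effectively reducing the 2D bound to an inequality about pairs of 1D logconcave densities with controlled sup-densities and modes. Without the hypothesis $|u^\top v| \le 1/2$ the bound would fail entirely, since correlations near $\pm 1$ produce unbounded joint-to-product tail ratios.
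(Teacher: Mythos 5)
Your proposal reduces the lemma to the claim that $\Pr[W \ge a_v \mid Z = z] \le C\,\Pr[W \ge a_v]$ for $z$ in a window above $a_u$, plus a deep-tail argument, but neither half is actually established, and the first is essentially the lemma itself in sliced form. You write that the bounded correlation "should force" the conditional tail to be comparable to the unconditional one and that a localization argument "likely" works; this is precisely the content of the statement being proved, so as it stands the proposal is circular at its core step. The deep-tail half also does not close as described: if you bound the conditional tail trivially by $1$ there, the contribution beyond $z = a_u + R$ is at best of order $\nu(H_u)e^{-cR/\nu(H_u)}$ (the tail of $Z$ past $a_u$ lives at scale $\nu(H_u)$, since $f_u(a_u)\ge 1/2$), and this need not be $O(\nu(H_u)\nu(H_v))$ because $\nu(H_v)$ can be arbitrarily small independently of $\nu(H_u)$. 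So in both regimes you still need a bound on the conditional tail in terms of $\nu(H_v)$, which is the missing idea. A related symptom is that your split at a \emph{constant} distance $R$ is at the wrong scale: the relevant length scale in the $u$ direction is $c\,\nu(H_u)$, not $O(1)$.

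For comparison, the paper's proof supplies exactly the quantitative ingredient you are missing, working directly with the two-dimensional density rather than with conditionals. From $\nu(H_u)=\delta_u$ and $f_u(a_u)\ge 1/2$ it deduces, via level sets of $f$, that the density must drop by a constant factor within distance $O(\delta_u)$ past the line $u^\top x = a_u$ (otherwise the band beyond that line would carry more than $\delta_u$ of mass), and symmetrically in $v$ at scale $\delta_v$; logconcavity then propagates this into geometric decay of $f$ along every ray entering the wedge $H_u\cap H_v$ from its corner, at scale $c\delta$. A polar-coordinate integral of $r\,f$ over the wedge then gives the $O(\delta_u\delta_v)$ bound. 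If you want to salvage your conditional-slice strategy, you would need to prove an analogous decay statement for the slices $f(\cdot\,|\,Z=z)$ uniformly in $z$ at scale $\nu(H_v)$, which is not simpler than the paper's direct two-dimensional argument. Your preliminary reductions (the change of variables, non-degeneracy from $|u^\top v|\le 1/2$, and the $O(1)$ localization of $a_u,a_v$) are fine but do not touch the main difficulty.
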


\begin{lemma}\label{lem:positive_mean}
Let $f$ be a one-dimensional logconcave density with mean $0$ and variance $\sigma^2$. Let $\mu^+ = \E_f(y|y\ge 0)$. Then there exist universal constants $c_1,c_2$ s.t. 
\[
c_1 \sigma \le \mu^+ \le c_2 \sigma.
\]
\end{lemma}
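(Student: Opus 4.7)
The plan is to express $\mu^+$ cleanly in terms of $\E|y|$ and $\Pr(y\ge 0)$ and then invoke two standard facts about one-dimensional logconcave densities, both of which are available in \cite{LV07}.

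First I would note that since $\E y = 0$, we have $\E(y\,\1_{y \ge 0}) = -\E(y\,\1_{y<0}) = \tfrac{1}{2}\E|y|$, which gives the identity
\[
\mu^+ \;=\; \frac{\E(y\,\1_{y\ge 0})}{\Pr(y\ge 0)} \;=\; \frac{\E|y|}{2\Pr(y\ge 0)}\,.
\]
So it suffices to control $\E|y|$ (from above and below, relative to $\sigma$) and $\Pr(y \ge 0)$ (from above and below by absolute constants).

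For the upper bound on $\mu^+$, Cauchy--Schwarz gives $\E|y| \le \sqrt{\E y^2} = \sigma$, and the classical lower bound $\Pr(y\ge \E y) \ge 1/e$ for logconcave densities yields $\Pr(y\ge 0)\ge 1/e$. Combining these gives $\mu^+ \le \tfrac{e}{2}\sigma$, which proves the upper bound with $c_2 = e/2$. For the lower bound on $\mu^+$, I would invoke the standard moment-comparison inequality for logconcave densities: there is an absolute constant $C$ such that $(\E|y|^p)^{1/p} \le Cp\,\E|y|$ for all $p\ge 1$. Applied with $p=2$ this yields $\sigma \le 2C\,\E|y|$, i.e.\ $\E|y| \ge \sigma/(2C)$. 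Using the trivial bound $\Pr(y\ge 0)\le 1$ then gives $\mu^+ \ge \sigma/(4C)$, so $c_1 = 1/(4C)$ works.

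The only mildly delicate step is the moment-comparison inequality $(\E|y|^p)^{1/p} \lesssim p\,\E|y|$; I would simply cite it as a direct consequence of the results collected in \cite{LV07} rather than re-derive it, since the lemma statement explicitly allows appealing to that reference. Everything else in the argument is elementary manipulation and Cauchy--Schwarz, so there is no substantial obstacle.
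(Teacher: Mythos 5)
Your argument is correct, and in fact the paper never spells out a proof of this lemma at all: it is stated in the preliminaries with the blanket remark that the lemmas there are ``either from \cite{LV07} or direct consequences,'' so your write-up supplies the missing derivation rather than deviating from an in-paper one. The decomposition $\mu^+ = \E|y|/\bigl(2\Pr(y\ge 0)\bigr)$ is valid for mean-zero $y$, the upper bound follows from Cauchy--Schwarz together with $\Pr(y\ge 0)\ge 1/e$ (which is exactly Lemma~\ref{lem:logconcave-balance-weight}, so you can cite it internally rather than as folklore), and the lower bound correctly reduces to the reverse-H\"older/Khinchine-type comparison $(\E|y|^2)^{1/2}\le C\,\E|y|$ for one-dimensional logconcave variables, which is indeed available in \cite{LV07}; that citation is the only external ingredient you lean on. One remark worth adding: the paper's own Lemma~\ref{lem:logconcave-moments} (logconcavity of $M_n(f)/n!$, i.e.\ $M_1^2\ge M_0M_2/2$ for one-sided moments) gives an alternative, fully in-paper route to the lower bound, and this is precisely how the paper argues in the proof of Lemma~\ref{fact:bound-out}; however, that route is cleanest when the one-sided second moment is comparable to $\sigma^2$ (e.g.\ in the symmetric setting used there), whereas your $L_1$-versus-$L_2$ comparison of $|y|$ combined with the trivial bound $\Pr(y\ge 0)\le 1$ handles an arbitrary mean-zero logconcave density without that extra step. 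So your proof is sound and, if anything, slightly more general in how it treats asymmetry.
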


\begin{lemma}\label{lem:logconcave-balance-weight}
For any one-dimensional logconcave density $f:\R\rightarrow \R_+$ with mean $\mu$, we have 
\[
\Pr(X \ge \mu) \ge \frac{1}{e}.
\]
\end{lemma}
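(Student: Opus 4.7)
The plan is to translate so that $\mu = 0$ and identify the extremal logconcave density. After translation the claim becomes $p := \Pr(X \ge 0) \ge 1/e$, and the shifted exponential $f(x) = e^{-(x+1)}\1[x \ge -1]$ shows this is tight: it is logconcave with mean $0$ and $\Pr(X \ge 0) = 1/e$, so one-sided exponentials are the natural extremal candidates.

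First I would invoke the standard $1$D consequence of Pr\'ekopa's theorem: the survival function $\bar F(t) := \Pr(X > t)$ of a one-dimensional logconcave density is itself logconcave. Writing $\bar F(t) = e^{-\phi(t)}$ for a convex $\phi$ on $[0,\infty)$ with $\phi(0) = -\log p$ and right-derivative $\phi'(0^+) = f(0)/p$, the tangent-line inequality $\phi(t) \ge -\log p + t\,f(0)/p$ yields
\[
\bar F(t) \;\le\; p\exp\bigl(-t\,f(0)/p\bigr) \qquad\text{for all } t \ge 0.
\]
Integrating gives $\E[X^+] = \int_0^\infty \bar F \le p^2/f(0)$, and the symmetric argument on the left tail yields $\E[X^-] \le (1-p)^2/f(0)$.

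The key step would then be to combine these with the mean constraint $\E[X^+] = \E[X^-]$ together with a matching lower bound on one of $\E[X^+]$ or $\E[X^-]$ to force $p \ge 1/e$. The required lower bound comes from the extremal property that, among logconcave survival functions with prescribed value and slope at $0$, the exponential (nearly) minimizes the integral; equivalently, the monotone hazard rate $f/\bar F$ for logconcave $\bar F$ permits comparison from below against a matched exponential tail. For the one-sided exponential extremum all of the inequalities above tighten to equalities and the zero-mean condition forces $p = 1/e$ exactly, giving the claim.

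The main obstacle is making this extremality rigorous while handling degenerate configurations such as $f(0)=0$, compactly supported $f$, or a mode far from the origin. A cleaner alternative is to appeal directly to the Bobkov--Nazarov halfspace bound for logconcave measures on $\R^n$ -- any halfspace through the centroid has mass at least $1/e$ -- specialized to $n=1$; this bypasses the delicate extremality verification while still matching the exponential case with equality.
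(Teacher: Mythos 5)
The paper does not actually prove this lemma; it is quoted from Lov\'asz--Vempala \cite{LV07} (it is their one-dimensional Gr\"unbaum-type bound, Lemma 5.4 there), so there is no in-paper argument to compare against and your proposal must be judged on its own merits. Your setup is the right one, and the first half is fine: after centering, the survival function $\bar F(t)=\Pr(X>t)$ is logconcave, and writing $\bar F=e^{-\phi}$ with $\phi$ convex, the supporting line at $0$ gives $\bar F(t)\le p\,e^{-\kappa t}$ with $p=\bar F(0)$, $\kappa=f(0)/p$, hence $\E[X^+]\le p/\kappa$. But the place you identify as ``the key step'' is a genuine gap: the two bounds you actually derive are both \emph{upper} bounds ($\E[X^+]\le p^2/f(0)$ and $\E[X^-]\le(1-p)^2/f(0)$), and together with $\E[X^+]=\E[X^-]$ they imply nothing about $p$; the missing lower bound is then attributed to an unproved ``extremal property'' / hazard-rate comparison, which is exactly the content of the lemma and is never established. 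Your fallback is also problematic: the statement that every halfspace through the centroid of a logconcave measure has mass at least $1/e$ (a Gr\"unbaum-type result, not Bobkov--Nazarov) is standardly \emph{proved} by projecting to one dimension and invoking precisely this lemma, so citing it here is circular in spirit.

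The good news is that the gap closes with the tool you already have, used on both sides of $0$. The supporting-line inequality $\phi(t)\ge\phi(0)+\kappa t$ holds for all $t$, not only $t\ge 0$; for $t\le 0$ it gives $F(t)=1-\bar F(t)\ge 1-p\,e^{-\kappa t}$, and integrating over $[-\phi(0)/\kappa\cdot\,,0]$ (where the right-hand side is nonnegative, using $pe^{\phi(0)}=1$) yields
\[
\E[X^-]=\int_{-\infty}^{0}F(t)\,dt\;\ge\;\frac{\phi(0)-1+p}{\kappa},
\qquad
\E[X^+]\le\frac{p}{\kappa}.
\]
Since the mean is $0$, $\E[X^-]=\E[X^+]$, so $\phi(0)\le 1$, i.e.\ $p\ge e^{-1}$, with equality exactly for your shifted exponential. (The degenerate cases you worry about are benign: if $\kappa=0$ then $\bar F\equiv p$ on $(-\infty,0]$, forcing $p=1$; compact support and interior modes cause no trouble since only convexity of $\phi$ and a one-sided subgradient at $0$ are used.) So the route is salvageable, but as written the proposal does not contain a proof of the decisive inequality.
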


\begin{lemma}\label{lem:logconcave-mean-density}
Let $f$ be a one-dimensional logconcave density with mean $\mu$. Then 
\[
f(\mu)\ge \frac{1}{8}\max f(x).
\]
\end{lemma}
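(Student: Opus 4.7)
The plan is to prove the stronger Fradelizi-type bound $f(\mu) \geq M/e$, where $M = \max_x f(x)$; this implies the stated $M/8$ bound since $1/e > 1/8$. After translating so that $\mu = 0$ and (if necessary) reflecting $f$ about $0$, we may assume a mode $x^\ast$ of $f$ satisfies $x^\ast \geq 0$. The case $x^\ast = 0$ gives $f(\mu) = M$ trivially, so assume $x^\ast > 0$ and set $\alpha = f(0)/M \in (0, 1]$; the goal becomes $\alpha \geq 1/e$.

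The key geometric input is the concavity of $\log f$. On $[0, x^\ast]$, $\log f$ lies above the chord joining $(0, \log \alpha M)$ and $(x^\ast, \log M)$, while on $(-\infty, 0]$ it lies below the linear extrapolation of the same chord. Equivalently,
\[
f(x) \geq M\,\alpha^{1 - x/x^\ast} \text{ for } x \in [0, x^\ast], \qquad f(y) \leq M\,\alpha^{1 - y/x^\ast} \text{ for } y \leq 0.
\]
Both bounds are governed by the single parameter $\alpha$, and in fact use the very same function on each side of $0$.

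Next I would invoke the mean-zero condition $\int_0^\infty y f(y)\,dy = \int_{-\infty}^0 (-y) f(y)\,dy$. Lower-bounding the left-hand side by restricting to $[0, x^\ast]$ and using the chord lower bound, and upper-bounding the right-hand side by the extrapolation bound, gives (after elementary computation with the substitution $\beta = -\log\alpha$)
\[
\frac{M(x^\ast)^2}{\beta^2}(\beta + e^{-\beta} - 1) \leq \int_0^\infty y f(y)\,dy = \int_{-\infty}^0 (-y) f(y)\,dy \leq \frac{M(x^\ast)^2}{\beta^2} e^{-\beta}.
\]
Rearranging yields $\beta \leq 1$, i.e., $\alpha \geq 1/e > 1/8$, as desired.

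The main obstacle I anticipate is purely book-keeping: keeping track of the direction of the concavity inequalities on each side of the mode, and carefully evaluating the two elementary integrals without sign errors. Beyond that, the proof is a clean combination of logconcavity with the balance condition imposed by the mean being zero.
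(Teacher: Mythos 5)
Your argument is correct, and it is self-contained where the paper simply cites \cite{LV07} for this fact without proof; moreover it yields the sharp one-dimensional constant $f(\mu)\ge \max f/e$ (attained by the exponential density $e^{x}\mathbf{1}_{x<0}$), which is stronger than the $1/8$ the paper needs. I checked the two key estimates: on $[0,x^\ast]$ concavity of $\log f$ gives $f(x)\ge M\alpha^{1-x/x^\ast}$, on $(-\infty,0]$ the secant-extrapolation gives $f(y)\le M\alpha^{1-y/x^\ast}$, and the two elementary integrals do evaluate to $\frac{M(x^\ast)^2}{\beta^2}(\beta+e^{-\beta}-1)$ and $\frac{M(x^\ast)^2}{\beta^2}e^{-\beta}$ respectively, so the mean-zero balance forces $\beta\le 1$ as you claim. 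Two small technical points you should dispatch explicitly: (i) the maximum need not be attained (again the exponential density), so either work with a near-maximizer $x^\ast$ satisfying $f(x^\ast)\ge(1-\delta)M$ — your two inequalities hold verbatim with $f(x^\ast)$ in place of $M$ — and let $\delta\to 0$, or state the lemma with $\sup f$; and (ii) note that $\alpha>0$ (equivalently $\beta<\infty$), which holds because the mean of a one-dimensional logconcave density lies in the interior of its support, where $f$ is strictly positive; also handle $\alpha\ge 1$ as the trivial case before dividing by $\beta$. With those remarks added, your proof is a clean, elementary replacement for the external citation used in the paper.
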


\begin{lemma}\label{lem:logconcave-moments}
Suppose $M_n(f)$ are the moments of $f$, i.e.,
\[
M_n(f) = \int_0^{\infty} t^n f(t) \; dt.
\]
If $f$ is logconcave, then the sequence $M_n(f)/n!$ is logconcave.
\end{lemma}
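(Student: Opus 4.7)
Writing $a_n = M_n(f)/n!$, the desired log-concavity $a_n^2 \ge a_{n-1}a_{n+1}$ is equivalent, after clearing factorials, to
\[
(n+1)\,M_n^2 \;\ge\; n\,M_{n-1}\,M_{n+1}.
\]
My plan is to rewrite this as a pointwise-nonnegative double integral by integrating by parts, and then close the argument using the monotone hazard rate property of log-concave densities.

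\emph{Reduction via integration by parts.} Since a log-concave $f$ on $[0,\infty)$ with finite moments has exponential tails, the boundary terms vanish and
\[
M_k \;=\; k\!\int_0^\infty t^{k-1}\,F^*(t)\,dt, \qquad F^*(t) \defeq \int_t^\infty f(s)\,ds.
\]
Applying this identity to one copy of $M_n$ on the left and to $M_{n+1}$ on the right of the target inequality and simplifying reduces the task to showing
\[
I \;\defeq\; \int_0^\infty\!\!\int_0^\infty f(s)\,F^*(t)\,(st)^{n-1}(s-t)\,ds\,dt \;\ge\; 0.
\]

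\emph{Symmetrization and log-concavity.} Averaging $I$ with its image under the swap $s\leftrightarrow t$ produces
\[
2I \;=\; \int_0^\infty\!\!\int_0^\infty (st)^{n-1}(s-t)\bigl[f(s)\,F^*(t) - f(t)\,F^*(s)\bigr]\,ds\,dt.
\]
Log-concavity of $f$ implies, via Pr\'ekopa applied to the log-concave function $(t,u)\mapsto f(t+u)$, that $F^*$ is itself log-concave, so the hazard rate $h(t) = f(t)/F^*(t) = -(\log F^*)'(t)$ is non-decreasing. Consequently, for $s\ge t$, $f(s)F^*(t)\ge f(t)F^*(s)$, which matches the sign of $(s-t)$; the symmetric conclusion holds for $s\le t$. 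Therefore the integrand is pointwise nonnegative and $I\ge 0$. The only delicate step is justifying that the boundary terms in the integration by parts vanish, which uses that all moments are finite (otherwise the claim is vacuous) combined with Markov's inequality to conclude $t^k F^*(t)\to 0$; I do not expect this routine verification, nor any other step, to present a genuine obstacle once the integrand is put into the symmetrized form above.
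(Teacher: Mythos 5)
Your proof is correct, and it is worth noting that the paper itself contains no proof of this lemma: it is quoted from \cite{LV07} (the subsection preamble says the lemmas there are from that reference or direct consequences), so your argument is a genuinely self-contained alternative. The reduction is right: clearing factorials gives $(n+1)M_n^2\ge nM_{n-1}M_{n+1}$, and substituting $M_k=k\int_0^\infty t^{k-1}F^*(t)\,dt$ for one factor of $M_n$ and for $M_{n+1}$ makes the difference of the two sides equal to $n(n+1)I$ with exactly your double integral $I$; the symmetrization and the sign analysis are also sound, since log-concavity of $F^*$ (equivalently, monotonicity of the hazard rate $f/F^*$) makes the bracket $f(s)F^*(t)-f(t)F^*(s)$ carry the same sign as $s-t$. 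Two small remarks. First, you do not even need Pr\'ekopa: the two-point inequality $f(s)F^*(t)\ge f(t)F^*(s)$ for $s\ge t$ follows by integrating over $u\ge 0$ the elementary log-concavity inequality $f(s)f(t+u)\ge f(t)f(s+u)$, which holds because $\{s,\,t+u\}$ has the same sum as $\{t,\,s+u\}$ and is nested inside the interval $[t,\,s+u]$; this keeps the whole proof one-dimensional and avoids any differentiability fuss about $(\log F^*)'$. Second, the boundary terms are indeed harmless exactly as you say: the statement is vacuous unless the moments are finite (and an integrable log-concave $f$ has exponentially decaying tails, so they all are), and then $t^kF^*(t)\le\int_t^\infty s^kf(s)\,ds\to 0$. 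As a sanity check, the instance the paper actually uses, $M_1^2\ge M_0M_2/2$, is your inequality at $n=1$.
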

\subsection{Cubes}

The following facts about cubes will be useful. While the precise constants in the bounds are not important for our analysis, the first two facts have been a subject of inquiry in asymptotic convex geometry. 
\begin{lemma}\cite{barthe2003extremal}
\label{lem:slab-vol-upper-bound}
Let $B=[-1/2,1/2]^d$ be the $d$-dimensional cube with volume 1 and $a\in \R^d$ be an arbitrary unit vector. Then for all $t\le 3/4$,
\[
\vol(\{x\in B: |x \cdot a|> \frac{t}{2}\})\le 1-t.
\]
\end{lemma}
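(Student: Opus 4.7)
The plan is to reduce the claim to a one-dimensional statement about the marginal density of $a\cdot X$ (for $X$ uniform on $B$), then invoke the cube slab-section theorem of~\cite{barthe2003extremal} that is already cited. Let $f_a$ denote the density of $a\cdot X=\sum_{i=1}^d a_iX_i$, where the $X_i$ are i.i.d.\ uniform on $[-1/2,1/2]$. Since $f_a$ is the convolution of symmetric uniform densities, it is symmetric about $0$ and logconcave, hence non-increasing on $[0,\infty)$. Setting
\[
g_a(t):=\int_{-t/2}^{t/2}f_a(s)\,ds=\vol\bigl(\{x\in B:|a\cdot x|\le t/2\}\bigr),
\]
the desired inequality is exactly $g_a(t)\ge t$ for every unit vector $a$ and every $t\le 3/4$.

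The first step is a simple monotonicity reduction. Since $g_a'(t)=f_a(t/2)$ and $f_a(t/2)$ is non-increasing in $t\ge 0$, the function $g_a$ is concave on $[0,\infty)$ with $g_a(0)=0$. A standard consequence of concavity through the origin is that $t\mapsto g_a(t)/t$ is non-increasing on $(0,\infty)$: for any $0<t_1\le t_2$, concavity gives $g_a(t_1)\ge (t_1/t_2)g_a(t_2)$. Thus it is enough to verify the single endpoint inequality $g_a(3/4)\ge 3/4$; the bound for every smaller $t$ then follows automatically.

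The second step is to establish this endpoint inequality. Here I would invoke the extremal slab-section theorem of Barthe, specialized to slab half-width $w=3/8$. That theorem asserts that for $w$ below an explicit critical threshold (which comfortably exceeds $3/8$), the quantity $\vol(\{x\in B:|a\cdot x|\le w\})$ is minimized among unit vectors $a\in\R^d$ by the coordinate directions $a=e_i$. For such a coordinate direction, $f_a$ is just the uniform density on $[-1/2,1/2]$ and the slab volume equals $2w=3/4$. Combining with the monotonicity reduction yields $g_a(t)\ge t$ for all $t\le 3/4$, which is exactly the desired bound on the complementary set.

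The main obstacle is the Barthe extremal slab-section theorem itself, the proof of which relies on a carefully chosen Brascamp--Lieb inequality and is considerably more delicate than either step above. Since the lemma explicitly invokes~\cite{barthe2003extremal}, I would use the slab-section inequality as a black box rather than reproducing its proof. The two structural observations I contribute on top of that black box are (i) the interpretation of $\vol(\{x\in B:|a\cdot x|\le t/2\})$ as the CDF of a symmetric logconcave cube marginal, and (ii) the monotonicity reduction $g_a(t)/t\ge g_a(3/4)/(3/4)$ that isolates a single endpoint calculation as the only place where the extremal theorem is actually needed.
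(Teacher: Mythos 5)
The paper offers no proof of this lemma at all: it is quoted directly from Barthe--Koldobsky \cite{barthe2003extremal}, whose theorem states precisely that a slab of the unit-volume cube of width at most $3/4$ of the side length (half-width $t/2\le 3/8$) has volume at least that of the coordinate slab, i.e.\ $\vol(\{x\in B:|x\cdot a|\le t/2\})\ge t$, so your argument rests on the same black box the paper does and is correct. Your two added observations are sound --- the marginal $f_a$ is symmetric and logconcave, hence $g_a$ is concave through the origin and $g_a(t)/t$ is non-increasing --- but they are redundant, since the cited theorem already covers every half-width up to $3/8$ and not just the endpoint $t=3/4$. One factual caveat: the threshold actually proved in \cite{barthe2003extremal} is exactly half-width $3/8$, not a value that ``comfortably exceeds'' it (only the conjectured coordinate/diagonal crossover, near half-width $0.43$, lies strictly above); since the endpoint $w=3/8$ is included in their theorem, this slip does not affect the validity of your reduction, but you should not lean on slack that the cited result does not provide.
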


\begin{lemma}\label{lem:section-vol}\cite{ball1986cube}
Let $B=[-1/2,1/2]^d$ be the $d$-dimensional cube with volume 1 and $a\in \R^d$ be an arbitrary unit vector.. Then $(d-1)$-volume of any sections of $B$ defined by $a$ and $t$ is at most $\sqrt{2}$
\[
\vol_{n-1}(\{x\in B :x\cdot a =t\})\le \sqrt{2}.
\]
\end{lemma}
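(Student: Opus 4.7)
The plan is to follow Keith Ball's Fourier-analytic argument. First I would reduce to the central section: by Brunn's concavity principle, for any convex body $K \subset \R^d$ the function $t \mapsto \vol_{d-1}(K \cap \{x \cdot a = t\})^{1/(d-1)}$ is concave on its support; since $B = [-1/2,1/2]^d$ is centrally symmetric about the origin, this function is even in $t$ and hence maximized at $t = 0$. It therefore suffices to bound $\vol_{d-1}(B \cap a^\perp)$.

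For the central section, I would use Fourier inversion. Writing $f(s) = \vol_{d-1}(B \cap \{x \cdot a = s\})$, observe that $f$ is the density of $X \cdot a$ when $X$ is uniform on $B$; since the coordinates of $X$ are i.i.d.\ uniform on $[-1/2,1/2]$, the characteristic function of $X \cdot a$ is $\prod_i \mathrm{sinc}(\pi a_i \xi)$ with $\mathrm{sinc}(u) := \sin(u)/u$. Fourier inversion at $s=0$ yields
\[
\vol_{d-1}(B \cap a^\perp) \;=\; \int_{-\infty}^{\infty} \prod_{i=1}^d \mathrm{sinc}(\pi a_i \xi)\, d\xi.
\]
If some $a_i = 0$, one drops that factor and works in dimension $d-1$; if $a$ is itself a coordinate vector, the section is a unit $(d-1)$-cube with area $1 \le \sqrt 2$. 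Otherwise $\sum_i a_i^2 = 1$ with each $0 < a_i^2 < 1$, and I would apply Hölder's inequality with exponents $p_i := 1/a_i^2$ (which satisfy $\sum_i 1/p_i = 1$). After the substitution $u = \pi |a_i| \xi$ in each factor (using that $\mathrm{sinc}$ is even), this gives
\[
\vol_{d-1}(B \cap a^\perp) \;\le\; \prod_{i=1}^d \left( \frac{\sqrt{p_i}}{\pi} \int_{-\infty}^{\infty} |\mathrm{sinc}(u)|^{p_i}\, du \right)^{1/p_i}.
\]

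The main obstacle is the sharp one-dimensional estimate $\int_{-\infty}^{\infty} |\mathrm{sinc}(u)|^{p}\, du \le \pi \sqrt{2/p}$ for $p \ge 2$, which is Ball's key lemma. Granted this, each factor in the displayed product is at most $(\sqrt 2)^{1/p_i}$, so the product is at most $(\sqrt 2)^{\sum_i 1/p_i} = \sqrt 2$, completing the bound. Proving the $L^p$-sinc inequality is the delicate step: one compares $|\mathrm{sinc}(u)|$ with a Gaussian on $|u| \le \pi$ (using a pointwise bound of the form $|\sin u / u| \le e^{-u^2/6}$ on that interval) and controls the tail via $|\mathrm{sinc}(u)| \le 1/|u|$, choosing the cutoff carefully so that the two contributions balance and extract the precise constant $\sqrt{2/p}$ rather than just the correct order of magnitude. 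A final wrinkle is the case where some $a_i^2 > 1/2$, so that $p_i < 2$ and Ball's estimate does not apply directly; I would handle this by projecting the section onto the hyperplane perpendicular to that dominant coordinate with Jacobian $1/|a_i|$, reducing to a $(d-1)$-dimensional cube-section bound and inducting on $d$.
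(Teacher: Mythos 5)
The paper itself does not prove this lemma; it simply invokes Ball's cube-slicing theorem \cite{ball1986cube}. Your outline faithfully reconstructs Ball's argument: the Brunn reduction to the central section, Fourier inversion giving $\vol_{d-1}(B\cap a^\perp)=\int_{\R}\prod_i \mathrm{sinc}(\pi a_i\xi)\,d\xi$, H\"older with exponents $p_i=1/a_i^2$, and a separate treatment when some $a_i^2>1/2$. Those reduction steps are all correct (one small inaccuracy: when $|a_i|>1/\sqrt2$, the orthogonal projection of the section onto $e_i^\perp$ is a \emph{slab} of the $(d-1)$-cube, not a hyperplane section of it, so there is nothing to induct on --- you just bound the projection's volume by $\vol([-1/2,1/2]^{d-1})=1$ and get $\vol_{d-1}(B\cap\{x\cdot a=t\})\le 1/|a_i|\le\sqrt2$ directly).

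The genuine gap is in your proposed proof of the key inequality $\int_{\R}|\mathrm{sinc}(u)|^p\,du\le\pi\sqrt{2/p}$ for $p\ge2$. This inequality is an \emph{equality} at $p=2$ (the integral is exactly $\pi$), and $p=2$ is unavoidable in your H\"older step: the extremal direction $a=(1/\sqrt2,1/\sqrt2,0,\dots,0)$, which realizes the bound $\sqrt2$, forces $p_1=p_2=2$. Consequently no strictly lossy pointwise comparison of the kind you describe can close the argument near $p=2$. Concretely, with the cutoff at $\pi$ your estimate gives, at $p=2$, $\int_{-\pi}^{\pi}e^{-u^2/3}du+2\int_{\pi}^{\infty}u^{-2}du\approx 3.04+0.64\approx 3.68>\pi$, and no ``careful choice of cutoff'' helps: the Gaussian piece alone is already about $3.04$, so the tail term would have to be below roughly $0.10$, i.e.\ the cutoff would have to exceed $20$, far beyond where $|\mathrm{sinc}(u)|\le e^{-u^2/6}$ holds (it fails by $u\approx3.8$). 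The same computation shows the method also fails at, e.g., $p=3$ ($\sqrt{2\pi}+1/\pi^2\approx2.61$ versus the target $\pi\sqrt{2/3}\approx2.57$), so exponents in the range $2\le p\lesssim3$, which arise whenever some $a_i^2\in(1/3,1/2]$, are not covered. Ball's actual proof of this integral inequality is considerably more delicate, and the cleanest known route is the later distribution-function argument of Nazarov and Podkorytov; either should replace the Gaussian-plus-tail sketch. With the key lemma granted, the rest of your argument is sound and is exactly the proof behind the citation.
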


The next fact follows from the above lemma and a calculation using the logconcavity of one-dimensional marginals of the hypercube.

\begin{lemma}\label{lem:truncated-cube-mean}
Let $x\sim[-1,1]^d$ be a uniform random vector on $H=[-1,1]^d$ and $y=x\cdot a$ where $a$ is an arbitrary unit vector. Let $\mu_t = \E_f(y|y\ge t)$. For $0\le t\le 1/2$, there exists universal constant $c$ s.t. 
\[
\mu^t \ge t + c.
\]
\end{lemma}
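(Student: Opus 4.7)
The plan is to combine a dimension-free pointwise upper bound on the density of $y = x\cdot a$ with a dimension-free lower bound on its tail mass at $1/2$. Let $f$ be the density of $y$: since $x$ is uniform on the convex body $[-1,1]^d$, $f$ is logconcave, and by the $x\mapsto -x$ symmetry it is symmetric about $0$, hence unimodal with mode at $0$ and non-increasing on $[0,\infty)$. Also $\Var(y) = \sum_i a_i^2/3 = 1/3$.

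First, I would establish two quantitative ingredients. Rescaling Lemma~\ref{lem:section-vol} from the unit-volume cube $[-1/2,1/2]^d$ to $[-1,1]^d$ gives $f(y) \le 1/\sqrt{2}$ for every $y$. For the tail, note that $|y|$ has a logconcave density on $[0,\infty)$ with moments $M_n = \E|y|^n$; applying Lemma~\ref{lem:logconcave-moments} at $n=1$ yields $M_1^2 \ge M_0 \cdot M_2/2 = (\E y^2)/2 = 1/6$, so $\E|y| \ge 1/\sqrt{6}$. Splitting $\E|y|$ at $|y|=1/2$ and bounding the inner piece using $f \le 1/\sqrt{2}$ gives
\[
\int_{|y|>1/2} |y|\, f(y)\, dy \;\ge\; \frac{1}{\sqrt{6}} - \frac{1}{4\sqrt{2}} \;=:\; \kappa \;>\; 0.
\]
Cauchy--Schwarz then yields $\kappa^2 \le (\E y^2)\Pr(|y|>1/2) = \Pr(|y|>1/2)/3$, so by symmetry $\Pr(y \ge 1/2) \ge 3\kappa^2/2 =: p_0 > 0$, a universal constant.

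Finally, for any $t \in [0,1/2]$, condition on $y\ge t$: the resulting density $g(z) = f(t+z)/\Pr(y\ge t)$ on $[0,\infty)$ satisfies $g(z) \le g(0) =: M$ because $f$ is non-increasing on $[0,\infty)$. Any probability density on $[0,\infty)$ that is pointwise bounded by $M$ has mean at least $1/(2M)$, since its CDF $G$ satisfies $G(s)\le Ms$ and thus $\int_0^\infty (1-G(s))\,ds \ge \int_0^{1/M}(1-Ms)\,ds = 1/(2M)$. Combined with the monotonicity $\Pr(y\ge t) \ge \Pr(y\ge 1/2) \ge p_0$ and the bound $f(t)\le 1/\sqrt{2}$, this gives
\[
\mu_t - t \;\ge\; \frac{1}{2M} \;=\; \frac{\Pr(y\ge t)}{2 f(t)} \;\ge\; \frac{p_0}{\sqrt{2}} \;=:\; c \;>\; 0,
\]
which is the desired bound. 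The main obstacle is the tail lower bound $\Pr(y\ge 1/2)=\Omega(1)$ uniformly in $a$ and $d$; the point is that both the second moment $\E y^2 = 1/3$ and the sup bound $\|f\|_\infty\le 1/\sqrt{2}$ are dimension-free, which forces $\E|y|$ to put constant mass outside $[-1/2,1/2]$.
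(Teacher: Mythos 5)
Your proof is correct and takes essentially the route the paper itself indicates: the paper only remarks that this lemma ``follows from the above lemma and a calculation using the logconcavity of one-dimensional marginals,'' and your argument is exactly that --- the rescaled Ball bound giving $\norm{f}_\infty\le 1/\sqrt{2}$ from Lemma~\ref{lem:section-vol}, plus logconcavity (the moment inequality of Lemma~\ref{lem:logconcave-moments} and symmetry/unimodality) to force a constant tail mass beyond $1/2$, and then the elementary bound that a density on $[0,\infty)$ bounded by $M$ has mean at least $1/(2M)$. All steps and constants check out, so your write-up simply supplies the details the paper leaves implicit.
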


\section{Robustly learning a Shift and Diagonal Scaling}\label{sec:shift-and-scaling}

In this section, we give an algorithm for robustly learning arbitrary shifts and diagonal scalings of the uniform distribution on the solid hypercube $H=[-1,1]^d$. That is, $x \rightarrow Ax + b$ when $A$ is a diagonal matrix. Since the uniform distribution on $H$ is symmetric around $0$, we can WLOG assume that all the entries of $A$ are non-negative. This special case is equivalent to learning affine transformations that correspond to a shift (i.e., introducing a non-zero mean) and scaling (i.e., scales each of the coordinates of the hypercube by an unknown and potentially different positive scaling factor). 

More precisely, we will prove: 

\begin{theorem}\label{thm:shift-and-scaling}
Suppose $H$ is an unknown axis-aligned cube in $\R^d$, that is, $H=AH_0+b = \otimes_{i=1}^d [u_i,v_i]$ where $A\in\R^{d\times d}$ is a diagonal matrix, $b\in\R^d$ and $H_0=[-1,1]^d$ is the unit cube. There exists an algorithm that, for small enough constant $\epsilon>0$, takes an $\epsilon$-corruption $X = x^{(1)}, x^{(2)}, \ldots, x^{(n)}$ of size $n \ge n_0 = \poly(d, 1/\eps)$ of an iid sample from the uniform distribution on $H$ and outputs $\widehat H=\widehat AH_0+\widehat b$ such that
\[
d_{TV}(H,\widehat H) \le 4\eps.
\]
\end{theorem}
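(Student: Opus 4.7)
The argument has two main ingredients: a reduction of the TV distance to a sum of coordinate-wise shrinkages, and an iterative algorithm that maintains $\widehat H \subseteq H$ while driving this sum down to $O(\eps)$. For the reduction, I will observe that for axis-aligned cubes $\widehat H = \prod_{i}[\widehat u_i, \widehat v_i]$ and $H = \prod_{i}[u_i, v_i]$ with $\widehat H \subseteq H$, a direct computation of $\frac{1}{2}\int |p_H - p_{\widehat H}|$ yields
\[
d_{TV}(\widehat H, H) \;=\; 1 - \frac{\vol(\widehat H)}{\vol(H)} \;=\; 1 - \prod_{i=1}^{d}\frac{\widehat L_i}{L_i} \;\le\; \sum_{i=1}^{d} \frac{(\widehat u_i - u_i)+(v_i - \widehat v_i)}{L_i},
\]
where $L_i = v_i - u_i$ and $\widehat L_i = \widehat v_i - \widehat u_i$. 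Hence it will suffice to produce $[\widehat u_i,\widehat v_i]\subseteq[u_i,v_i]$ for which $\sum_{i}[(\widehat u_i - u_i)+(v_i-\widehat v_i)]/L_i \le 4\eps$.

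My algorithm will begin with a coarse initialization: set $\widehat c_i^{(0)}$ to the coordinate-wise median and $\widehat L_i^{(0)}$ to twice the length of the shortest empirical interval in coordinate $i$ containing a $(1/2)+\eps$ fraction of samples. Since at most an $\eps$-fraction of samples is corrupted, the empirical one-dimensional CDF in each coordinate is within $\eps$ in sup-norm of the true CDF, so a standard quantile comparison will give $O(\eps)$ relative accuracy per coordinate; after a tiny further contraction one may assume $[\widehat u_i^{(0)}, \widehat v_i^{(0)}]\subseteq[u_i, v_i]$ with $\widehat L_i^{(0)} \ge (1-O(\eps))L_i$. This satisfies the per-coordinate requirement but only yields the weak bound $\sum_{i}(1-\widehat L_i^{(0)}/L_i) = O(d\eps)$ on the TV distance --- a factor $d$ worse than the target.

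To close the gap I will iteratively expand the endpoints outward. Fixing coordinate $i$, I will increase $\widehat v_i$ by a small step $\delta$ as long as the empirical count of samples in the slab $\{x : \widehat v_i < x_i \le \widehat v_i + \delta\}$ is consistent with the uniform-density prediction $\approx n\delta/L_i$ (valid because the true marginal along axis $i$ is uniform). When the slab contains suspiciously many extra samples, the excess must come from corruption; to locate it I will use the key multiplicative fact that for the uniform distribution on a cube the measure of the intersection of two axis-parallel bands equals the product of the individual measures --- this is coordinate independence, and is also captured by Lemma~\ref{lem:intersection_of_bands}. Any pair of coordinates $(i,j)$ whose joint empirical upper-band count grossly exceeds this predicted product witnesses a cluster of corrupted samples in the corresponding corner, and I will delete those samples before the next expansion step; an analogous procedure handles the $\widehat u_i$'s.

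The hardest step --- and the heart of the analysis --- will be to show that the iteration converges to a state in which $\sum_{i}[(\widehat u_i - u_i)+(v_i-\widehat v_i)]/L_i \le 4\eps$ rather than the $O(d\eps)$ delivered by naive per-coordinate analysis. My plan is a charging argument: each of the at most $\eps n$ corrupted samples that obstructs the expansion of some $\widehat v_i$ will be assigned to the unique coordinate direction in which it is most anomalous (the direction in which it lies furthest outside the current $\widehat H$), so that corruption concentrated in multi-coordinate corners is eliminated by the pairwise-intersection test and corruption concentrated along a single axis is charged to that axis. Summing over directions then gives $\sum_{i}(v_i-\widehat v_i)/L_i + \sum_{i}(\widehat u_i-u_i)/L_i \le 4\eps$, since the total charge equals the total corruption budget (up to the constant hidden in the quantile). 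Combined with the reduction from the first paragraph, this yields $d_{TV}(\widehat H, H) \le 4\eps$.
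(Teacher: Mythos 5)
There is a genuine gap, and it sits exactly at what you yourself call the heart of the analysis: the argument that the per-coordinate errors sum to $O(\eps)$ rather than $O(d\eps)$. Your charging rule assigns each corrupted sample to the single coordinate in which it is ``most anomalous,'' but a corrupted point placed in a far corner contributes to the boundary slab counts of \emph{many} coordinates simultaneously, and charging it to one direction does nothing to cancel its effect in the others. The pairwise-intersection test does not rescue this: any implementable test must have a constant-factor slack in its threshold (the paper uses $10k_1k_2\eps^2/d^2$), so an adversary can keep corner mass just below threshold and still multi-count across coordinates. The paper closes exactly this hole with a quantitative combinatorial fact (Lemma~\ref{lem:intersection-sum}): once the surviving sets obey a product-type pairwise bound, the \emph{sum} of their fractional sizes is at most $\eps/(1-\alpha\eps)$, which is what converts a union bound of $\eps$ into a sum bound of $O(\eps)$. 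Your proposal has no substitute for this step.

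The gap is even more serious in the deficit direction. Expansion of $\widehat v_i$ is obstructed not by corrupted samples present in the data but by genuine samples the adversary has \emph{removed}; these cannot be located, charged by where ``they lie outside $\widehat H$,'' or deleted by an intersection test, since they are not in the sample at all. What must be shown is that a deleted point can block two coordinates at once only if it lay in a boundary corner, and that genuine mass in such corners is a product of the marginal band measures, so the $\eps n$ deletion budget cannot block a total relative length exceeding $O(\eps)$ --- this is precisely the paper's Lemmas~\ref{lemma:delete-noise} and \ref{lemma:small-inside} (again via Lemma~\ref{lem:intersection-sum}), including the point that your own corner-deletion step removes genuine points and must be shown to remove at most $O(\eps n)$ of them. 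Finally, the invariant $\widehat H\subseteq H$ on which your TV reduction rests cannot be maintained: the adversary can place density-consistent fake points just outside a single facet (no corner concentration, so no pairwise test fires) and push $\widehat v_i$ past $v_i$; one therefore needs a two-sided volume comparison as in Lemmas~\ref{lemma:small-outside} and \ref{lemma:small-inside}, not the one-sided identity $d_{TV}=1-\vol(\widehat H)/\vol(H)$.
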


\begin{remark}
While we will omit this refinement here, a more careful analysis of our algorithm produces a tighter bound of $d_{TV}(H,\widehat H) \le (2+\eps)\eps$.
\end{remark}

We first describe our algorithm:
\begin{algorithm}
\label{alg:shift-and-scaling}
\begin{enumerate}
\item \textbf{Input: } An $\epsilon$-corruption $X$ of an iid sample of size $n$ chosen from $\otimes_{i=1}^d [u_i,v_i]$. 
\item \textbf{Robust Range Finding: } For each $1 \leq i \leq d$, arrange the input corrupted sample in increasing order of $y_i$. Let $[l_i,r_i]$ be the interval of the smallest length that includes the middle $1/2+\epsilon$ fraction of the points (notice that such an interval is uniquely defined). Set $\hat{u}_i = 2l_i$ and $\hat{v}_i = 2r_i$.
Notice that $\{x \in \R^d: v_i \le x_i \le u_i \}$ is an axis-aligned cube that contains the true cube and all the side lengths are at most twice the true cube. 
\item \label{step:1d-check}
\textbf{One-Dimensional Density Check: } For all $i\in[n],k\in[d]$, check if 
\begin{align}
|S_{i,k}^+| :=\left|\left\{x:x^{(i)}\ge\left(1-\frac{k\eps}{d}\right)\hat{u}_i+\frac{k\eps}{d}\hat{v}_i\right\}\right|\ge\frac{k\eps}{2d}n \label{eqn:one-coord-max}\\
|S_{i,k}^-| :=\left|\left\{x:x^{(i)}\le \frac{k\eps}{d}\hat{u}_i+\left(1-\frac{k\eps}{d}\right)\hat{v}_i\right\}\right|\ge\frac{k\eps}{2d}n \label{eqn:one-coord-min}
\end{align}
\item \textbf{Update: } If (\ref{eqn:one-coord-max}) is false for some $i$ and $k$, update $\hat{u}_i=\hat{u}_i-\frac{k\eps}{d}(\hat{u}_i-\hat{v}_i)$ and go back to Step \ref{step:1d-check}.
If (\ref{eqn:one-coord-min}) is false for some $i$ and $k$, set $\hat{v}_i=\hat{v}_i+\frac{k\eps}{d}(\hat{u}_i-\hat{v}_i)$ and iterate (go back to Step \ref{step:1d-check}).
\item \textbf{Two-Dimensional Density Check: } For all $i,j\in[n],k_1,k_2\in[d]$, check if 
\begin{align}
|S_{i,k_1}^+\cap S_{j,k_2}^+|,|S_{i,k_1}^+\cap S_{j,k_2}^-|,|S_{i,k_1}^-\cap S_{j,k_2}^-|\le\frac{10k_1k_2\eps^2}{d^2}n\label{eqn:two-coord}
 \end{align}
\item \label{step:update-remove}
\textbf{Update: } 
If (\ref{eqn:two-coord}) is false for some $i,j$ and $k_1,k_2$, remove all points in the violating intersection from the sample set and go back to Step \ref{step:1d-check}.
\item \textbf{Return: } Output the cube $\otimes_{i=1}^d [\hat{u}_i, \hat{v}_i]$.
\end{enumerate}
\end{algorithm}

Notice that in this case, our goal is effectively to determine the intervals $[u_i, v_i]$ for each $1 \leq i \leq d$ that describe the $i$-th dimension of the shifted and scaled hypercube $H$. The key idea of the algorithm is a \emph{certificate} that checks a set of efficiently verifiable conditions on the corrupted sample with the guarantee that 1) the true parameters satisfy the checks, and 2) any set of parameters that satisfy the checks yield a hypercube $\hat{H}$ that is $O(\epsilon)$ different in symmetric volume difference (that gives us a total variation bound) from the true hypercube $H$. 

Our certificate itself is simple and natural and corresponds to checking that 1) for each of the coordinates $1 \leq i \leq d$, the discretized density (i.e., fraction of points lying in discrete intervals of size $\sim \epsilon/d$ in the purported range estimated from the corrupted sample matches the expected density of the uniform distribution on $AH+b$, and 2) for each pair of coordinates, the fraction of points in the intersection of intervals of length $\sim \epsilon/d$ along all possible pairs of directions match the expected density. Notice that all such statistics in an uncorrupted sample match those of the population with high probability so long as the the sample is of size $\poly(d/\epsilon)$. 

To analyze the above algorithm, we will prove that any set of parameters that satisfy the checks in our certificate (\ref{eqn:one-coord-max}),(\ref{eqn:one-coord-min}),(\ref{eqn:two-coord}) in Algorithm~\ref{alg:shift-and-scaling} must yield a good estimate of true parameters. Specifically, Lemma~\ref{lemma:small-outside} shows that the volume of $\widehat H$ that is not contained in $H$ is small.  Lemma~\ref{lemma:small-inside} shows that the volume of $H$ that is not covered by $\widehat H$ is small. Together, these two lemmas imply that our estimated distribution is $O(\epsilon$) close in total variation to the true hypercube. 
Our proof relies on an elementary combinatorial claim about set systems that we state next. We defer the proof to Section~\ref{sec:sum-of-intersections}.
\begin{restatable}{lemma}{intersectionsum}
\label{lem:intersection-sum}
Let $S_1,\dots,S_d \subseteq [n]$ be arbitrary subsets. Let $\mathrm{frac}(S):=|S|/n$ be normalized size of $S$. Suppose that 
\begin{equation}\label{eqn:union}
\mathrm{frac}\left(\bigcup S_i\right) = \eps
\end{equation}
for some $\eps<1$ and for all $i,j$,
\begin{equation}\label{eqn:intersection}
\mathrm{frac}(S_i\cap S_j)\le \alpha \mathrm{frac}(S_i)\mathrm{frac}(S_j),
\end{equation}
for some $\alpha>0$ s.t. $\alpha\eps <1$. Then,
\begin{equation}\label{eqn:sum}
\sum_{i\in[d]}\mathrm{frac}(S_i) \le \frac{\eps}{1-\alpha\eps}.
\end{equation}
\end{restatable}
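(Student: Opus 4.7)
The plan is to count the element-set incidences on the union $U = \bigcup_i S_i$ and apply Cauchy--Schwarz. Writing $T = \sum_{i \in [d]} \mathrm{frac}(S_i)$, we want to show $T(1-\alpha\eps) \le \eps$, and the natural way to produce such a quadratic inequality is via a second-moment bound on the ``degree'' function $d(x) := |\{i : x \in S_i\}|$ restricted to $U$.

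First, I would translate the two hypotheses to a first-moment and a second-moment statement about $d$. Since every $x \in U$ lies in at least one $S_i$, we have $d(x) \ge 1$ on $U$, and
\begin{equation*}
\sum_{x \in U} d(x) \;=\; \sum_{i} |S_i| \;=\; T n,
\qquad
\sum_{x \in U} d(x)^2 \;=\; \sum_{i,j} |S_i \cap S_j|.
\end{equation*}
For the second sum, I split into diagonal and off-diagonal terms: the diagonal contributes $\sum_i |S_i| = T n$, while the off-diagonal part is bounded using \eqref{eqn:intersection}, which rewrites as $|S_i \cap S_j| \le \alpha |S_i| |S_j|/n$. Hence
\begin{equation*}
\sum_{i \ne j} |S_i \cap S_j| \;\le\; \frac{\alpha}{n}\Bigl(\sum_i |S_i|\Bigr)^{\!2} \;=\; \alpha T^2 n,
\end{equation*}
giving $\sum_{x\in U} d(x)^2 \le T n + \alpha T^2 n$.

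Now I apply Cauchy--Schwarz on the set $U$, which has size $|U| = \eps n$ by \eqref{eqn:union}:
\begin{equation*}
(Tn)^2 \;=\; \Bigl(\sum_{x \in U} d(x)\Bigr)^{\!2} \;\le\; |U| \cdot \sum_{x \in U} d(x)^2 \;\le\; \eps n \,(Tn + \alpha T^2 n).
\end{equation*}
Dividing by $n^2 T$ (noting we may assume $T>0$, else the bound is trivial) yields $T \le \eps + \alpha \eps T$, i.e., $T(1-\alpha\eps) \le \eps$, which after dividing by $1-\alpha\eps > 0$ (using the hypothesis $\alpha\eps < 1$) is exactly \eqref{eqn:sum}.

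There is no real obstacle here; the only ``design choice'' is recognizing that the union-size hypothesis should be used as the factor $|U|$ in Cauchy--Schwarz, and that the pairwise-product hypothesis naturally bounds the second moment $\sum_x d(x)^2$. Bonferroni-type arguments (e.g.\ $\eps \ge T - \tfrac{\alpha}{2}T^2$) give a weaker quadratic inequality and do not recover $\eps/(1-\alpha\eps)$, so the second-moment/Cauchy--Schwarz route is the natural one.
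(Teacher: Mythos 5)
Your proof is correct and is essentially the paper's argument: the paper draws a uniform point of the union, sets $X_i$ to be the indicator of $S_i$, and applies Jensen's inequality $(\E\sum_i X_i)^2 \le \E(\sum_i X_i)^2$ together with the same diagonal/off-diagonal split — which is exactly your Cauchy--Schwarz bound on the degree function $d(x)$ over $U$, written in probabilistic rather than counting language. No substantive difference.
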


\begin{lemma}\label{lemma:small-outside}
Let $X = \{x^{(1)}, x^{(2)}, \ldots, x^{(n)}\}$ be an $\epsilon$-corruption of an iid sample from the uniform distribution on $H = \otimes_{i = 1}^d [u_i,v_i]$. Let $\widehat H = \otimes_{i = 1}^d [\hat{u}_i, \hat{v}_i]$ be any axis-aligned hypercube satisfying (\ref{eqn:one-coord-max}),(\ref{eqn:one-coord-min}),(\ref{eqn:two-coord}). Then, with probability at least $1-1/d$ over the draw of the original uncorrupted sample,
\[
\frac{\vol(\widehat H \setminus H)}{\vol(\widehat H)}\le 4\eps.
\]
Here, $\vol$ denotes the usual Lebesgue volume of sets in $\R^d$.
\end{lemma}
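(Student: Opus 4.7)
The plan is to decompose $\widehat H \setminus H$ into $2d$ axis-aligned ``overhang'' slabs (one per coordinate and sign) and then bound the total fractional volume of these slabs using the 1D and 2D density checks together with Lemma~\ref{lem:intersection-sum}. Concretely, for each coordinate $i$ I set
\[
\eta_i^+ = \frac{\max(\hat v_i - v_i, 0)}{\hat v_i - \hat u_i}, \qquad \eta_i^- = \frac{\max(u_i - \hat u_i, 0)}{\hat v_i - \hat u_i},
\]
so that $\eta_i^\pm$ is the fraction of $\widehat H$'s $i$-th side length that protrudes past $[u_i,v_i]$ on either side. The associated slabs $E_i^\pm \subseteq \widehat H$ have fractional volumes $\eta_i^\pm$, and $\widehat H \setminus H \subseteq \bigcup_i (E_i^+ \cup E_i^-)$; a union bound reduces the claim to proving $\sum_i (\eta_i^+ + \eta_i^-) \le 4\epsilon$.

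I bound $\sum_i \eta_i^+$; the other side is symmetric. For each $i$ with $\eta_i^+ \ge \epsilon/d$, let $k_i \in \{1,\dots,d\}$ be the largest integer with $k_i\epsilon/d \le \eta_i^+$. By the choice of $k_i$, the slab $S_{i,k_i}^+$ appearing in the algorithm's 1D check --- the top $k_i\epsilon/d$ fraction of $\widehat H$ along coordinate $i$ --- is contained in $E_i^+$ and hence lies strictly outside $H$. Since every uncorrupted sample is drawn from $H$, every point in $T_i := S_{i,k_i}^+$ must be a corrupted point, and in particular $|\bigcup_i T_i|/n \le \epsilon$. Because $\widehat H$ satisfies both density checks by assumption, $|T_i|/n \ge k_i\epsilon/(2d)$ (from the 1D check) and $|T_i \cap T_j|/n \le 10 k_i k_j \epsilon^2/d^2 \le 40 \cdot (|T_i|/n)(|T_j|/n)$ (from the 2D check combined with the 1D lower bound). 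Lemma~\ref{lem:intersection-sum} applied with $\alpha = 40$ gives $\sum_i |T_i|/n \le \epsilon/(1-40\epsilon)$, whence $\sum_{i : \eta_i^+ \ge \epsilon/d} k_i \epsilon/d \le 2\epsilon/(1-40\epsilon)$. Since $\eta_i^+ \le 2 k_i \epsilon/d$ whenever $k_i \ge 1$, and the small-overhang coordinates ($\eta_i^+ < \epsilon/d$) contribute at most $\epsilon$ in total, we get $\sum_i \eta_i^+ \le 4\epsilon/(1-40\epsilon) + \epsilon$, which is at most $2\epsilon$ for small enough $\epsilon$. Combined with the analogous bound on $\sum_i \eta_i^-$, this yields $\sum_i (\eta_i^+ + \eta_i^-) \le 4\epsilon$.

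The main obstacle will be the off-by-one bookkeeping at the discretization scale $\epsilon/d$: the choice of $k_i$ must be tight enough that (i) the slab $S_{i,k_i}^+$ is certified to be outside $H$ (so the 1D check turns into a count of corrupted samples) and (ii) summing $k_i\epsilon/d$ across coordinates with a $+1$ slack does not introduce a factor of $d$. The non-trivial content is the interplay between the two checks: the 1D check lower-bounds $|T_i|$ while the 2D check caps pairwise overlaps, and only their combination ensures that Lemma~\ref{lem:intersection-sum} can be invoked with an $\alpha$ independent of $d$. The probability bound $1 - 1/d$ in the statement should come from a standard concentration argument ensuring that the uncorrupted portion of the sample is sufficiently representative, so that the density-check statistics genuinely reflect the underlying product structure.
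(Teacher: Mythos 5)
Your overall strategy is the same as the paper's: decompose $\widehat H\setminus H$ into per-coordinate overhang slabs, use the one-dimensional checks (\ref{eqn:one-coord-max}),(\ref{eqn:one-coord-min}) to lower bound the number of sample points in each slab, observe that every such point lies outside $H$ and is therefore corrupted, use the two-dimensional check (\ref{eqn:two-coord}) to verify the hypothesis of Lemma~\ref{lem:intersection-sum} with $\alpha=40$, and conclude that the total overhang is $O(\eps)$. That chain is sound (the small slip that $S_{i,k_i}^+$ is not literally contained in $E_i^+$ --- it also contains sample points beyond the top facet of $\widehat H$ --- is harmless, since all of its points are still outside $H$ and hence corrupted).

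The genuine problem is the constant bookkeeping, and the lemma asserts the specific bound $4\eps$. Your final arithmetic step is false: $4\eps/(1-40\eps)+\eps \ge 5\eps$ for every $\eps>0$, so it is never ``at most $2\eps$ for small enough $\eps$.'' As written, your argument yields roughly $10\eps$ for $\sum_i(\eta_i^++\eta_i^-)$: you lose a factor $2$ by rounding $\eta_i^+$ down to $k_i\eps/d$, an additive $\eps$ from the coordinates with $\eta_i^+<\eps/d$, and another factor $2$ by running the argument separately for the $+$ and $-$ sides, which charges the single $\eps n$ budget of corrupted points twice. The paper avoids these losses by (i) rounding the vertices of the true cube $H$ to the $\eps/d$ grid at the start of the analysis, so each overhang is exactly $c_i\eps/d$ and there is no rounding-down slack (at the price of an explicitly accounted volume perturbation), and (ii) forming one set $S_i$ per coordinate that contains both the top and bottom overhang points, so Lemma~\ref{lem:intersection-sum} is invoked once against the single $\eps$ corruption budget; this is what produces the stated $4\eps$. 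A minor additional point: the probability $1-1/d$ is not needed where you invoke it --- sample points outside $H$ are corrupted deterministically --- it is needed for the sampling-based counts used elsewhere (e.g., in Lemmas~\ref{lemma:delete-noise} and \ref{lemma:small-inside}). So your write-up establishes $O(\eps)$ by essentially the paper's route, but not the claimed constant; repairing it requires the grid-rounding and two-sided-grouping devices (or equivalent bookkeeping).
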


\begin{proof}
For only the sake of analysis, we set $\hat u_i-\hat v_i=1$. In this normalization, $\vol(\widehat H)=1$. 
For each $i$, discretize the interval $[\hat{u}_i, \hat{v}_i]$ to a grid with intervals of size $\eps(\hat u_i-\hat v_i)/d=\eps/d$. We further assume that the vertices of the unknown $H$ are rounded to points in this grid. This assumption amounts to a change in the volume of $H$ by at most $\epsilon \vol(\hat{H})$. 
We will then show that
the density of points in the difference $\widehat H\setminus H$, i.e., $\{x: x \in 
\widehat H, x \notin H\}$ is 
\begin{equation}\label{eqn:density}
\frac{|\{x \in X:x\in\widehat H \setminus H\}|}{\vol(\widehat H \setminus H)}\ge \frac{1}{4}n.
\end{equation}

From the inequality above, it follows that $\vol(\widehat H \setminus H)\le \frac{4}{n}|\{x:x\in\widehat H\setminus H\}|\le 4\eps$
since $|\{x:x\in\widehat H\setminus H\}|\le\eps n$. So it suffices to prove (\ref{eqn:density}).

For all $i \in [d]$, suppose $u_i-\hat u_i=c_i\eps/d,\hat v_i-v_i=c_i'\eps/d$. Let $S_i =\{x \in X: \hat u_i\le x_i\le u_i,v_i\le x_i\le \hat v_i\}$. 
By (\ref{eqn:one-coord-max}) and (\ref{eqn:one-coord-min}), we have $|S_i|\ge (c_i+c_i')\frac{\eps}{2d}n$.
By (\ref{eqn:two-coord}), we have for any $i \neq j$,
\[
|S_i\cap S_j|\le 10 (c_i+c_i')\frac{\eps^2}{d^2} n \le 40 |S_i||S_j|n.
\]
Then by Lemma~\ref{lem:intersection-sum} with $\alpha=40$, 

\begin{align*}
    \frac{|\bigcup S_i|}{n-40 |\bigcup S_i|}&\ge \sum_{i\in [d]}\frac{|S_i|}{n},\\
    \frac{|\{x:x\in\widehat H \setminus H\}|}{n-40|\{x:x\in\widehat H \setminus H\}|}
&\ge \frac{\eps}{2d}\sum_{i=1}^d(c_i+c_i').
\end{align*}

By the upper bound on the total number $|\{x:x\in\widehat H\setminus H\}|\le\eps n\le n/80$, the denominator is at least $n/2$. Then
\[
|\{x:x\in\widehat H \setminus H\}|\ge\frac{\eps}{4d}n\sum_{i=1}^d(c_i+c_i').
\]
Thus we get (\ref{eqn:density}) by $\vol(\widehat H \setminus H)\le \frac{\eps}{d}\sum_{i=1}^d (c_i+c_i')$. 
\end{proof}

Next we show that Step~\ref{step:update-remove} removes at most $\eps n$ true points in total.

\begin{lemma}\label{lemma:delete-noise}
With high probability, at least half of points removed in Step~\ref{step:update-remove} are outliers.
\end{lemma}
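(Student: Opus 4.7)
The plan is to show that every intersection $S_{i,k_1}^\sigma\cap S_{j,k_2}^{\sigma'}$ whose removal is triggered by a violation of (\ref{eqn:two-coord}) contains strictly more outliers than clean points; summing over all (disjoint) removal events then immediately implies that at least half of all removed points are outliers.

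First, I would maintain the scale invariant that $\hat u_i - \hat v_i \le 2(v_i - u_i)$ for every coordinate $i$ throughout the run. This holds after the Robust Range Finding step by construction, the one-dimensional updates in Step~\ref{step:1d-check} only replace $\hat u_i$ (or $\hat v_i$) by a convex combination of $\hat u_i$ and $\hat v_i$ and hence can only shrink $|\hat u_i - \hat v_i|$, and Step~\ref{step:update-remove} leaves $\widehat H$ unchanged. Writing $\pi_{i,k,\sigma}:=\Pr_{x\sim H}[x\in S_{i,k}^\sigma]$, the invariant gives
\[
\pi_{i,k,\sigma} \;\le\; \frac{(k\eps/d)(\hat u_i-\hat v_i)}{v_i-u_i} \;\le\; \frac{2k\eps}{d},
\]
since the slab defining $S_{i,k}^\sigma$ has width $(k\eps/d)(\hat u_i-\hat v_i)$ in coordinate $i$ and its intersection with the true interval $[u_i,v_i]$ has at most this length.

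Second, since the uniform distribution on the product set $H=\otimes_i[u_i,v_i]$ has independent coordinates, for any $i\ne j$ we have $\Pr_{x\sim H}[x\in S_{i,k_1}^\sigma\cap S_{j,k_2}^{\sigma'}] = \pi_{i,k_1,\sigma}\cdot\pi_{j,k_2,\sigma'} \le 4k_1 k_2\eps^2/d^2$. I would then lift this true-probability bound to the empirical clean-sample count via uniform convergence over axis-parallel rectangles in pairs of coordinates: the class has constant VC dimension for each fixed pair, so a union bound over the $\binom{d}{2}$ pairs shows that $n=\poly(d/\eps)$ samples suffice for every such rectangle to have clean-sample empirical frequency within additive $\eps^2/d^2$ of its true probability with probability at least $1-1/d$, and crucially this holds simultaneously for all possible threshold choices that can arise at any iteration of the algorithm. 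Consequently, every candidate intersection contains at most $5k_1 k_2\eps^2/d^2\cdot n$ clean sample points at every stage.

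Step~\ref{step:update-remove} removes an intersection only when it contains more than $10 k_1 k_2\eps^2/d^2\cdot n$ sample points, so strictly more than half of each removed intersection is outliers. Since the removed point sets across successive events are disjoint (deleted points never return), the total number of clean points ever removed is at most the total number of outliers ever removed, which is bounded by $\eps n$, yielding the lemma. The main obstacle is the uniform convergence step: one must simultaneously handle all possible slab thresholds generated by the evolving estimates $\hat u_i,\hat v_i$, rather than a fixed finite set; the constant VC dimension of axis-parallel slabs cleanly handles this without blowing up the sample size beyond $\poly(d/\eps)$.
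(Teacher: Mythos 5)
Your proof is correct and follows essentially the same route as the paper: bound the clean-sample mass of each candidate intersection by $5k_1k_2\eps^2 n/d^2$ using the factor-2 side-length invariant and coordinate independence, then compare with the removal threshold $10k_1k_2\eps^2 n/d^2$. The only difference is that you make explicit two points the paper leaves implicit (that the estimated intervals only shrink, and the VC-based uniform convergence over all data-dependent slab thresholds), which is a welcome tightening rather than a new approach.
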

\begin{proof}
Suppose we remove points from an intersection $S_{i,k_1}^+\cap S_{j,k_2}^+$, i.e., 
\begin{equation}\label{eqn:bad-intersection}
|S_{i,k_1}^+\cap S_{j,k_2}^+| > \frac{10k_1k_2\eps^2}{d^2}n.
\end{equation}
If the violating intersection is outside the true cube, all the points are outliers. Otherwise, we can compute the volume by definition of $S_{i,k}^+$
\[
\vol(S_{i,k_1}^+\cap S_{j,k_2}^+) = \frac{k_1k_2\eps^2}{d^2}(\hat u_i -\hat v_i)(\hat u_j-\hat v_j) \le \frac{4 k_1k_2\eps^2}{d^2}
\] 
where the last equality follows from $\hat u_i-\hat v_i$ is at most twice of the true side length. Then
with high probability, the number of original uncorrupted sample in the region $S_{i,k_1}^+\cap S_{j,k_2}^+$ of volume $4k_1k_2\eps^2/d^2$ is at most $5k_1k_2\eps^2n/d^2$. Thus, comparing with (\ref{eqn:bad-intersection}), we can see that at least half of the point removed by the algorithm are outliers.
\end{proof}

\begin{lemma}\label{lemma:small-inside}
Let $X = \{x^{(1)}, x^{(2)}, \ldots, x^{(n)}\}$ be an $\epsilon$-corruption of an iid sample from the uniform distribution on $H = \otimes_{i = 1}^d [u_i,v_i]$. Let $\widehat H = \otimes_{i = 1}^d [\hat{u}_i, \hat{v}_i]$ be an axis-aligned hypercube satisfying (\ref{eqn:one-coord-max}),(\ref{eqn:one-coord-min}),(\ref{eqn:two-coord}). Then, with probability at least $1-1/d$ over the draw of the original uncorrupted sample, 
\[
\frac{\vol( H \setminus \widehat H)}{\vol(H)}\le 4\eps.
\]
\end{lemma}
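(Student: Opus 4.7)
The proof strategy parallels Lemma \ref{lemma:small-outside} in spirit, but with an important asymmetry: the density checks (\ref{eqn:one-coord-max})--(\ref{eqn:two-coord}) directly constrain only over-coverage of $\widehat H$ relative to $H$. If $\widehat H$ were shrunk strictly inside $H$, each threshold slab of $\widehat H$ would lie entirely within $H$ and hence contain abundant true points, trivially satisfying all the checks. The under-coverage bound must therefore leverage the dynamics of the shrinking updates, not merely the certificate state at termination.

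My plan is to show that for each coordinate $i$, the deficit $\delta_i$, defined as the one-dimensional length of $[u_i,v_i]\setminus[\hat u_i,\hat v_i]$, satisfies $\delta_i \le O(\eps/d) L_i$, where $L_i = v_i - u_i$. A union bound over the $d$ uncovered slabs $\{x\in H : x_i\notin[\hat u_i,\hat v_i]\}$ then yields $\vol(H\setminus\widehat H)/\vol(H) \le \sum_i \delta_i/L_i = O(\eps)$, which is the desired conclusion. By the Robust Range Finding (Step 2), $\widehat H^{(0)}\supseteq H$ with high probability and the algorithm's updates only shrink $\widehat H$, so under-coverage can only arise from overshoot during a shrinking step. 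To bound a single overshoot, I would analyze an update that moves an endpoint (say $\hat u_i$) inward because check $(i,k)$ failed: the new endpoint coincides with the failed threshold, and the failure condition $|S_{i,k}^+|<(k\eps/(2d)) n$ combined with a count of the true points that the region past the true boundary alone contributes to $|S_{i,k}^+|$ (at least $(\text{overshoot})/L_i\cdot (1-\eps) n$ by uniformity of the true marginal, up to standard concentration for samples of size $\poly(d/\eps)$) yields a single-step overshoot bound of $O(k\eps/d) L_i$. Once the endpoint has crossed past the true boundary, the uncovered slab's contribution to $|S_{i,k'}^+|$ makes subsequent failures of that check impossible unless $\hat L_i$ has already shrunk well below $L_i$, which is ruled out by the analogous analysis of the other endpoint together with Lemma \ref{lemma:small-outside} guaranteeing $\hat L_i \ge (1-O(\eps)) L_i$.

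The hard part is sharpening the per-coordinate deficit bound from the naive $O(\eps) L_i$ (which arises from plugging $k=d$ into the single-step estimate) to the sharper $O(\eps/d) L_i$ that the union bound requires. I expect this to follow from the observation that check $(i,k)$ with large $k$ can only fail while $\hat L_i$ is still close to its initial size $\le 2 L_i$---i.e., in the early phase, well before under-coverage is at issue---whereas late-phase failures occur only for small $k$, for which the per-step overshoot is already $O(\eps/d) L_i$. A telescoping argument pairing the monotone decrease of $\hat L_i$ with the accumulation of overshoots should certify that the terminal deficit in each coordinate is always $O(\eps/d) L_i$, closing the proof.
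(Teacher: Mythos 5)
Your framing is right on two counts: decomposing $H\setminus\widehat H$ into the $d$ coordinate slabs and bounding the sum of relative deficits, and recognizing that under-coverage must be controlled through the shrinking dynamics rather than the terminal certificate alone. But the quantitative core of your plan --- a per-coordinate bound $\delta_i = O(\eps/d)L_i$ (with $L_i$ the true side length) --- is false, and the ``hard part'' you flag cannot be closed by any telescoping argument. The adversary may spend its entire budget on one coordinate: removing the $\approx \eps n$ true points whose first coordinate lies in the top slab of relative width $\eps$ makes check (\ref{eqn:one-coord-max}) with $k=d$ fail, the algorithm shrinks $\hat u_1$ by $\eps(\hat u_1-\hat v_1)\approx \eps L_1$ past the true facet, and thereafter all checks are satisfied; the terminal deficit in that coordinate is $\Theta(\eps)L_1$, not $O(\eps/d)L_1$ (the lemma still holds because the total relative volume lost is $O(\eps)$). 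For the same reason your single-step overshoot estimate is invalid: the factor $(1-\eps)n$ presumes the corruption is spread evenly, whereas deleting $\eps n$ points from a slab of relative width $\Theta(\eps/d)$ empties it completely; moreover, points can also vanish from a slab because the algorithm itself removes them in Step~\ref{step:update-remove}, which your argument never accounts for.

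What is true, and what the paper proves, is only the aggregate bound $\sum_i \delta_i/L_i = O(\eps)$, via a global charging argument. Since each uncovered slab $S_i$ was cut away only upon failed one-dimensional checks, it contains fewer than $\tfrac12(c_i+c_i')n$ surviving sample points, while with high probability the original uncorrupted sample places at least $(c_i+c_i')n$ points there; hence at least half of the true points in $S_i$ were removed, by the adversary or by the algorithm's own deletions (the latter controlled by Lemma~\ref{lemma:delete-noise}), giving removal fractions $\eta_i \ge \tfrac12(c_i+c_i')$. The total removal budget is $O(\eps)n$, and because the pairwise intersections $S_i\cap S_j$ contain only about $(c_i+c_i')(c_j+c_j')n$ true points (product structure of the cube), the removed sets are almost pairwise disjoint, so Lemma~\ref{lem:intersection-sum} upgrades the union bound to $\sum_i \eta_i \le 4\eps$, whence $\vol(H\setminus\widehat H)\le\sum_i\vol(S_i)\le\tfrac12\sum_i(c_i+c_i')\le\sum_i\eta_i\le 4\eps$. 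The adversary can still make a single $\delta_i$ as large as $\Theta(\eps)L_i$, but only by paying for it out of the shared budget --- this global accounting, absent from your proposal, is exactly what makes the sum over coordinates come out to $O(\eps)$.
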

\begin{proof}
For only the sake of analysis, we set $\hat u_i-\hat v_i=1$ and $\vol(\widehat H)=1$. 
For $i\in[d]$ s.t. $\hat u_i < u_i$ or $v_i<\hat v_i$, we define $S_i = \{x:\hat u_i< x_i\le u_i, v_i\le x_i <\hat v_i\}$ and $c_i=u_i - \hat u_i,c_i' = \hat v_i - v_i$. Then 
\[
H\setminus \widehat H = \bigcup_{\{i:\hat u_i < u_i, \text{ or } v_i<\hat v_i\}} S_i.
\]
Since $S_i$ are removed by the algorithm, by (\ref{eqn:one-coord-max}) and (\ref{eqn:one-coord-min}), $|S_i|< \frac{1}{2}(c_i+c_i')n$. Since $S_i$ is in the true cube, the original uncorrupted sample in $S_i$ is at least $(c_i+c_i')n$ with high probability. So there are at least half of the original uncorrupted sample in $S_i$ are removed by either the adversary or by the algorithm. Suppose $\eta_i n$ is the number of original uncorrupted sample removed in $S_i$. Then $\eta_i\ge \frac{1}{2}(c_i+c_i')$.
By Lemma~\ref{lemma:delete-noise}, the total number of points deleted by the algorithm is at most $\eps n$. Hence the number of deleted points in $\bigcup S_i$ is at most $2\eps n$. The deletion in the intersection of $S_i$ and $S_j$ is naturally upper bounded by the number of original uncorrupted sample. So we can apply Lemma~\ref{lem:intersection-sum} with $\alpha=4$ to the deletion in $S_i$ and get 
\[
\frac{2\eps}{1-8\eps} \ge \sum_{\{i:\hat u_i < u_i, \text{ or } v_i<\hat v_i\}} \eta_i.
\]
If $\eps\le 1/16$, the left hand side is at most $4\eps$. Thus
\begin{align*}
\vol(H\setminus \widehat H) = \vol\left(\bigcup S_i \right) &\le \sum \vol(S_i)\\
&\le \frac{1}{2}\sum(c_i+c_i')\\
&\le \sum \eta_i\\
&\le 4\eps.
\end{align*}
\end{proof}

Now, with Lemmas~\ref{lemma:small-outside} and \ref{lemma:small-inside}, we can prove our main result of the shift and scaling case.
\begin{proof}[Proof of Theorem~\ref{thm:shift-and-scaling}]
If the algorithm outputs a cube $\widehat H$ satisfies (\ref{eqn:one-coord-max}),(\ref{eqn:one-coord-min}),(\ref{eqn:two-coord}), then by Lemma~\ref{lemma:small-outside} and Lemma~\ref{lemma:small-inside},
\[
d_{TV}(H,\widehat H)=1-\frac{\vol(H\cap\widehat H)}{\max\{\vol(H),\vol(\widehat H)\}}=\max\left\{\frac{\vol( H \setminus \widehat H)}{\vol(H)},\frac{\vol(\widehat H \setminus H)}{\vol(\widehat H)}\right\}=4\eps.
\]
Otherwise, the algorithm improves one of $\hat u_i$ or $\hat v_i$ by at least $\eps/d$. Thus, within at most $O(d^2/\eps)$ iterations, the algorithm terminates.
If the algorithm terminates when it deletes $2\eps n$ points, by Lemma~\ref{lemma:delete-noise}, we remove all noisy points.
\end{proof}

\section{Rotation}
\label{sec:rotation}
In this section, we give a robust algorithm to learn rotations of the uniform distribution of the standard hypercube. Specifically, we will show:

\begin{theorem}\label{thm:rotation}
Suppose $H = [-1,1]^d$ is the standard cube in $\R^d$ and $A\in \R^{d\times d}$ be an unknown rotation matrix. There exists an algorithm that given an $\epsilon$-corruption $X=\{x^{(1)},\dots,x^{(n)}\}$ of size $n \ge n_0 = \poly(d, 1/\eps)$ of an iid sample from the uniform distribution on $AH$, runs in poly($n$) time and outputs $\widehat A$ such that
\[
d_{TV}(AH,\widehat{A} H) = O(\eps).
\]
\end{theorem}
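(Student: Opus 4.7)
The plan is to reduce the task, via Lemma~\ref{fact:dtv-rotation}, to producing $\widehat{A}$ with $\sum_{i=1}^d \|\widehat{a}_i - a_i\|_2 = O(\epsilon)$. I would initialize with Theorem~\ref{thm:robust-ica} to get $\widehat{A}^{(0)}$ whose rows satisfy $\langle \widehat{a}_i^{(0)}, a_{\pi(i)}\rangle^2 \ge 1 - O(\sqrt{\epsilon})$ for some permutation $\pi$, then relabel and fix signs so that $\|\widehat{a}_i^{(0)} - a_i\|_2 \le O(\epsilon^{1/4})$ for all $i$, and refine each row via robust gradient descent.

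The centerpiece is a single-row refinement. For a candidate $\widehat{a}_i$ on the unit sphere, define the ``excess'' slab $S_i = \{x \in AH : \widehat{a}_i^\top x > 1\}$. By the referenced Lemma~\ref{fact:bound-out}, the uniform measure of $S_i$ inside $AH$ is $\Theta(\|\widehat{a}_i - a_i\|_2)$. Using the cube-section lemmas (Lemmas~\ref{lem:slab-vol-upper-bound},~\ref{lem:section-vol}, and~\ref{lem:truncated-cube-mean}), together with Lemma~\ref{lem:positive_mean}, I would show that the conditional mean $\mu_i = \E_{x \sim U_{AH}}[x \mid x \in S_i]$ has a component of magnitude $\Omega(\|\widehat{a}_i - a_i\|_2)$ along $\widehat{a}_i - a_i$, with only $O(\|\widehat{a}_i - a_i\|_2^2)$ tilt orthogonal to it. This is the geometric gradient certificate: the data itself reveals which direction to rotate $\widehat{a}_i$ in order to align it with $a_i$.

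To robustify, I would apply Theorem~\ref{thm:robust-mean} to the corrupted sample restricted to $S_i$. The effective outlier fraction inside $S_i$ is at most $\epsilon/\Pr[S_i]$, and the covariance of the conditional distribution on $S_i$ is bounded by an absolute constant since $S_i$ lies inside a cube, so Theorem~\ref{thm:robust-mean} returns $\widehat{\mu}_i$ with $\|\widehat{\mu}_i - \mu_i\|_2 = O(\sqrt{\epsilon/\Pr[S_i]})$. As long as $\Pr[S_i] \gg \epsilon$, this error is dominated by the signal $\Omega(\Pr[S_i])$, and a small step of the form $\widehat{a}_i' = \widehat{a}_i - \eta \widehat{\mu}_i$ followed by renormalization produces a constant-factor decrease in $\|\widehat{a}_i - a_i\|_2$. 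I would cycle through the rows $O(\log(1/\epsilon))$ times each, maintaining the global invariant $|\langle \widehat{a}_i, \widehat{a}_j\rangle| \le 1/2$ so that Lemma~\ref{lem:intersection_of_bands} applies throughout.

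For the \emph{sum}-of-errors guarantee, Lemma~\ref{lem:intersection_of_bands} gives $\Pr[S_i \cap S_j] = O(\Pr[S_i]\Pr[S_j])$, and a direct adaptation of the outlier accounting used in Lemmas~\ref{lemma:delete-noise} and~\ref{lem:intersection-sum} (treating outliers as partitioned among the $d$ slabs $S_i$) yields $\sum_i \Pr[S_i] = O(\epsilon)$ at termination. Since $\Pr[S_i] = \Theta(\|\widehat{a}_i - a_i\|_2)$, this is exactly the bound needed to conclude $d_{TV}(AH, \widehat{A}H) = O(\epsilon)$ via Lemma~\ref{fact:dtv-rotation}. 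The main obstacle is the regime where $S_i$ becomes very thin, so that the effective contamination rate $\epsilon/\Pr[S_i]$ approaches the tolerance threshold of Theorem~\ref{thm:robust-mean}: without pairwise-intersection control, outliers from every direction could concentrate into one slab and destroy the gradient signal. Proving rigorously that the geometric gradient certificate together with the band-intersection budget keeps the robust mean aligned with $a_i - \widehat{a}_i$ across descent iterations, and that the sum-of-errors guarantee holds simultaneously over all $d$ rows within a total budget of $O(\epsilon)$, is the delicate part of the argument.
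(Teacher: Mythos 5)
Your overall route is the same as the paper's: warm start from Theorem~\ref{thm:robust-ica}, per-row robust gradient descent on the mean of the excess slab $S_i=\{x:\widehat a_i^\top x>1\}$, pairwise slab-intersection pruning to budget the outliers, and an almost-disjointness argument (Lemma~\ref{lem:intersection-sum}) to convert per-row errors into $\sum_i\norm{\widehat a_i-a_i}_2=O(\eps)$ and then into a TV bound via Lemma~\ref{fact:bound-out}. But there is a genuine quantitative gap in your robustification step, and it sits exactly at the point you flag as ``delicate.'' You invoke Theorem~\ref{thm:robust-mean} with only a constant bound on the covariance of the conditional distribution on $S_i$, giving $\norm{\widehat\mu_i-\mu_i}_2=O(\sqrt{\eps_i/\Pr[S_i]})$, and claim this is dominated by the signal as soon as $\Pr[S_i]\gg\eps$. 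Writing $\delta=\norm{\widehat a_i-a_i}_2$, we have $\Pr[S_i]=\Theta(\delta)$ and the usable signal (the gap between $\mu_i\cdot\widehat a_i$ and $\mu_i\cdot a_i$, cf.\ Lemma~\ref{fact:mean-proj-bound}) is only $\Theta(\delta)$; so with a constant covariance bound you need $\sqrt{\eps_i/\delta}\ll\delta$, i.e.\ $\eps_i\ll\delta^3$, not $\delta\gg\eps_i$. The descent therefore stalls at $\delta\approx\eps_i^{1/3}$ per row, and summing over rows (with $\sum_i\eps_i=O(\eps)$) gives at best $O(d^{2/3}\eps^{1/3})$ total error, nowhere near the $O(\eps)$ sum you need. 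Your remark that the \emph{true} conditional mean has only $O(\delta^2)$ orthogonal tilt does not repair this, because the dominant error term is the robust estimation error, not the bias of $\mu_i$.

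The paper closes this gap with a directional variance bound: Lemma~\ref{fact:variance-in-coordinate} shows that the slab $\{x:x\cdot\widehat a_i>1\}$ has variance $O(\delta^2)$ along the true facet normal $e_i$ (it is geometrically thin in precisely the direction where the signal lives), so the robust-mean error \emph{in that direction} is $O(\sqrt{c}\,\delta)$ where $c\approx\eps_i/(\delta/5)$ is the contamination fraction inside the slab. Then progress only requires $c$ below a small constant, i.e.\ $\eps_i\lesssim\delta$, so the iteration keeps improving until $\delta=O(\eps_i+\eta_i)$ (Lemma~\ref{lemma:robust-update-step} and Corollary~\ref{coro:update-rotation-vec}), which combined with Proposition~\ref{prop:sum-of-noise} yields the required $\sum_i\norm{\widehat a_i-a_i}_2=O(\eps)$. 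To fix your argument you would need to prove and use this anisotropic (direction-dependent) variance bound, and track the robust-mean error along $e_i$ (and $\widehat a_i$) rather than in full $\ell_2$ norm; secondary details you gloss over, such as the step size $\beta\sim\delta/\norm{\tilde\mu}^2$ forcing $O(d\log d)$ iterations per row and selecting the best iterate by the count of points outside the band, are handled in the paper but are comparatively routine.
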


We first describe the algorithm:
\begin{algorithm}
\label{alg:rotation}

\textbf{Input}: $\epsilon$-corrupted sample $X =\{x^{(1)}, x^{(2)}, \ldots, x^{(n)}\}$. 

\textbf{Output}: $\widehat A$ with unit length rows $\hat a_{(1)},\dots,\hat a_{(d)}$.
\begin{enumerate}
\item \textbf{Warm Start: } Run the robust moment estimation algorithm in Theorem~\ref{thm:robust-ica} and get an estimate of $A$ with rows $a_{(1)}^*,\dots,a_{(d)}^*$, so that for each $i$, $\norm{\hat{a}_{(i)}-a_{(i)}^*}_2= O(\sqrt{\eps})$. Initialize $a_{(i)}^0 = \hat{a}_{(i)}$ for each $i$. 
\item  For any vector $a \in \R^d$, let $S(a)=\{x:x\cdot a>1\}\cup \{-x:x\cdot a<-1\}$. For $t = 1,2,\ldots, 2^{12} d\log d$,  do:
\begin{enumerate}
\item \textbf{Two-Dimensional Density Check: } For all $j<i$, check if 
\[
\frac{1}{n}|S(a_{(i)}^t)\cap S(\hat a_{(j)})| > \frac{2c_1}{n^2}|S(a_{(i)}^t)||S(\hat a_{(j)})|
\]

If false, remove all points in the intersection $S(a_{(i)}^t)\cap S(\hat a_{(j)})$.
\item \textbf{Robust GD: } Robustly estimate the mean $\tilde\mu$ by applying algorithm in Theorem~\ref{thm:robust-mean} to the subset $S(a_{(i)}^t)$. Set
\begin{align*}
\beta &= \frac{c_2|S(a_{(i)}^t)|}{\norm{\tilde\mu}^2n},\\
a_{(i)}^{t+1} &= \frac{a_{(i)}^t-\beta\tilde\mu}{\norm{a_{(i)}^t-\beta \tilde\mu}}\\
s^{t+1}& = \left|\{x:|x\cdot a_{(i)}^{t+1}|>1\}\right|
\end{align*}
Add back points that are removed in Step (a), i.e., recover the sample set to the origin one $X$.
\end{enumerate}
Let $\hat a_{(i)} = a_{(i)}^t$ where $t$ minimizes $s^{t}$.
\item Output the matrix $\widehat A$ with rows $\hat a_{(1)},\dots,\hat a_{(d)}$.
\end{enumerate}
\end{algorithm}

\subsection{Outline of analysis}
For the purpose of analysis, we assume that the true rotation matrix is the identity matrix in the rest of Section~\ref{sec:rotation}.
We will analyze Algorithm~\ref{alg:rotation} by the following propositions. First we show that we can separately learn the rows of the rotation matrix. Let $S(a)=\{x:x\cdot a>1\}\cup \{-x:x\cdot a<-1\}$. For a set of unit vectors $\hat a_{(1)},\dots,\hat a_{(d)}$ and a corrupted sample $X$, after we remove samples in all the intersections such that
\[
\frac{1}{n}|S(\hat a_{(i)})\cap S(\hat a_{(j)})| > \frac{2c_1}{n^2}|S(\hat a_{(i)})||S(\hat a_{(j)})|
\]
as in Step 2(a) of Algorithm~\ref{alg:rotation},
we can define $\eps_i$ as the fraction of outliers in $S(\hat a_{(i)})$ and $\eta_i$ as the fraction of original uncorrupted sample in $S(\hat a_{(i)})$ that are removed by the adversary or by the algorithm. That is, $\eps_i$ and $\eta_i$ are exactly the fraction of corrupted samples when we robustly estimate $\tilde\mu$. In the following Proposition, we show that the sum of $\eps_i$ and $\eta_i$ is small, which implies we do not increase the fraction of outliers by separately learning $a_{(1)},\dots,a_{(d)}$.
\begin{prop}
\label{prop:sum-of-noise}
Let $X_i$ be the subset of corrupted sample $X$ after removing samples in Step 2(a) of Algorithm~\ref{alg:rotation} and 
$S(\hat a_{(i)})=\{x:x\cdot \hat a_{(i)}>1, x \in X_i\}\cup \{-x:x\cdot \hat a_{(i)}<-1, x\in X_i\}$.
Let $\eps_i$ be the fraction of outliers in $S(\hat a_{(i)})$,
and $\eta_i$ be the fraction of original uncorrupted sample in $S(\hat a_{(i)})$ that are removed by the adversary or the algorithm.
Suppose for all $i$, $\hat a_{(i)}\cdot e_i\ge 0.99$ and for all $i,j$,
\[
\frac{1}{n}|S(\hat a_{(i)})\cap S(\hat a_{(j)})| \le \frac{2c_1}{n^2}|S(\hat a_{(i)})||S(\hat a_{(j)})|.
\]
Then,
\[
\sum_{i\in[d]}(\eps_i+\eta_i)\le 2\eps.
\]
\end{prop}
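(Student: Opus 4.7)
The plan is to apply the combinatorial Lemma~\ref{lem:intersection-sum} in order to convert the pairwise intersection hypothesis into a sum bound, and then translate this into the desired bound on $\sum_i (\epsilon_i + \eta_i)$. The hypothesis of Proposition~\ref{prop:sum-of-noise} matches the hypothesis of Lemma~\ref{lem:intersection-sum} with $\alpha = 2c_1$ and $\mathrm{frac}(S) = |S|/n$, so all that will really be needed is a good bound on the union $|\cup_i S(\hat a_{(i)})|/n$ and a way to pass from that sum to the sum on outliers and missing samples.

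First I would decompose $S(\hat a_{(i)}) = T_i \sqcup U_i$, where $T_i$ is the outliers in $S(\hat a_{(i)})$ and $U_i$ is the uncorrupted portion, and let $R_i$ denote the original uncorrupted points that would have fallen in $S(\hat a_{(i)})$ (that is, with $|\hat a_{(i)} \cdot x| > 1$) but are absent from $X_i$. The global budgets available are $|\cup_i T_i| \le \epsilon n$ (the adversary's total outlier budget) and $|\cup_i R_i| \le \epsilon n + \text{(algorithmic deletions)}$, where I would bound the algorithmic deletions by an argument analogous to Lemma~\ref{lemma:delete-noise}: most points removed by Step 2(a) of Algorithm~\ref{alg:rotation} must be outliers, so at most $O(\epsilon n)$ uncorrupted points are deleted in total.

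Next I would apply Lemma~\ref{lem:intersection-sum} to $\{T_i\}$ and $\{R_i\}$ separately. For $\{T_i\}$, the inclusion $T_i \subseteq S(\hat a_{(i)})$ gives $|T_i \cap T_j| \le |S(\hat a_{(i)}) \cap S(\hat a_{(j)})|$, which combined with the hypothesis of the proposition yields the required product form, after arguing that outliers constitute a non-negligible fraction of each $S(\hat a_{(i)})$ using $\hat a_{(i)} \cdot e_i \ge 0.99$, Lemma~\ref{lem:slab-vol-upper-bound} (to control the uncorrupted sliver's mass), and standard concentration for the true draws. For $\{R_i\}$ the corresponding intersection bound holds with high probability over the original uncorrupted sample via Lemma~\ref{lem:intersection_of_bands}, since the marginals of $[-1,1]^d$ in any two almost-orthogonal directions are logconcave with density bounded below (Lemma~\ref{lem:logconcave-mean-density}). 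Summing the two applications of Lemma~\ref{lem:intersection-sum} gives $\sum_i \epsilon_i + \sum_i \eta_i \le \epsilon/(1 - O(\epsilon)) + \epsilon/(1 - O(\epsilon)) \le 2\epsilon$ for $\epsilon$ sufficiently small.

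The main obstacle I expect is this transfer step: the given intersection bound is in terms of $|S(\hat a_{(i)})||S(\hat a_{(j)})|$, while Lemma~\ref{lem:intersection-sum} applied to $\{T_i\}$ requires $|T_i||T_j|$ in its hypothesis, and the ratio $|S(\hat a_{(i)})|/|T_i|$ can be large whenever the uncorrupted sliver near the cube's facets carved out by $\hat a_{(i)}$ is substantial. If the ``outliers dominate $S(\hat a_{(i)})$'' argument turns out to be too fragile, my fallback will be to apply Lemma~\ref{lem:intersection-sum} directly to $\{S(\hat a_{(i)})\}$ and then use $\sum_i |T_i|/n \le \sum_i |S(\hat a_{(i)})|/n \le O(|\cup_i S(\hat a_{(i)})|/n)$, controlling the right-hand union by the adversary's outlier budget plus a carefully tracked uncorrupted contribution coming from the $O(\sqrt{\epsilon})$ closeness of the warm start. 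Handling the missing-set $R_i$ alongside this requires combining the adversary's removal budget with the bound from Lemma~\ref{lemma:delete-noise}, and the two parts are then added to obtain the final factor of $2\epsilon$.
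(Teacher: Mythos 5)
There is a genuine gap, and it sits exactly where you flagged it. For the outlier half, your skeleton matches the paper's (Lemma~\ref{lemma:sum-of-noise-is-small} applies Lemma~\ref{lem:intersection-sum} to the outlier sets $S_N(\hat a_{(i)})$), but neither of your two ways of transferring the hypothesis from $|S(\hat a_{(i)})||S(\hat a_{(j)})|$ to $|T_i||T_j|$ works. There is no argument from $\hat a_{(i)}\cdot e_i\ge 0.99$, Lemma~\ref{lem:slab-vol-upper-bound} and concentration showing that outliers are a non-negligible fraction of $S(\hat a_{(i)})$: by Lemma~\ref{fact:bound-out} the uncorrupted part of the band has mass $\Theta(\delta_i)$ with $\delta_i=\norm{\hat a_{(i)}-e_i}$, and nothing in the hypotheses prevents $\delta_i\gg\eps_i$ (at the warm start $\delta_i$ can be $\Theta(\sqrt{\eps})$ while the band contains essentially no outliers). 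Your fallback fails quantitatively for the same reason: $\frac1n\bigl|\bigcup_i S(\hat a_{(i)})\bigr|$ is not controlled by the outlier budget, since it includes the uncorrupted slivers and can be as large as $\Theta\bigl(\sum_i\delta_i\bigr)=\Theta(d\sqrt{\eps})$, far above the $2\eps$ you need. The paper's resolution is algorithmic rather than analytic: it assumes without loss of generality that $\eps_i\ge c_3\,\mathrm{frac}(S(\hat a_{(i)}))$, justified by the fact that otherwise the update step (Corollary~\ref{coro:update-rotation-vec}) can be run to shrink $\mathrm{frac}(S(\hat a_{(i)}))$; only under this assumption does the stated intersection bound become $\mathrm{frac}(S_N(\hat a_{(i)})\cap S_N(\hat a_{(j)}))\le \frac{2c_1}{c_3^2}\eps_i\eps_j$, after which Lemma~\ref{lem:intersection-sum} yields $\sum_i\eps_i\le 2\eps$. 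Your proposal has no counterpart of this step, and without it the deduction does not go through.

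The $\eta_i$ half has the same structural problem plus a missing argument. Applying Lemma~\ref{lem:intersection-sum} to the removed sets $R_i$ requires $\mathrm{frac}(R_i\cap R_j)\lesssim \eta_i\eta_j$, but Lemma~\ref{lem:intersection_of_bands} only gives a bound of order $\delta_i\delta_j$, which again need not be comparable. Moreover, the union bound you invoke, that Step~2(a) deletes only $O(\eps n)$ uncorrupted points, is the heart of the matter and is not ``analogous to Lemma~\ref{lemma:delete-noise}'': that lemma concerns axis-aligned grid cells whose volumes are computed exactly, while here one must compare the uncorrupted mass of a deleted intersection, at most $c_1\delta_i\delta_j$ by Lemma~\ref{lem:intersection_of_bands}, against the deletion threshold $\tfrac{2c_1}{n}|S(\hat a_{(i)})||S(\hat a_{(j)})|\approx 2c_1(\delta_i-\eta_i)(\delta_j-\eta_j)n$ to conclude that a constant fraction of every deleted batch consists of outliers, and handle separately the case $\delta_i<8\eps_i'$ where the whole band contains only $O(\eps_i')$ uncorrupted points. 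This is exactly what the paper's Lemma~\ref{lemma:sum-of-removal-is-small} does directly, without Lemma~\ref{lem:intersection-sum}; your write-up would need to supply this comparison explicitly rather than cite it as analogous.
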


The proof of Proposition~\ref{prop:sum-of-noise} is deferred to Section~\ref{sec:separate-coordinate}.

Next, we give a robust algorithm that updates a single rotation vector $\hat a_{(i)}$ of the cube with a small constant fraction of corruption in $\{x:|x\cdot \hat a_{(i)}|>1\}$. We show that by repeatedly applying this algorithm on $\hat a_{(i)}$ at most $O(d\log d)$ times, we can learn the rotation vector $e_i$ with error $O(\eps_i+\eta_i)$. 

\begin{prop}
\label{prop:update-rotation-vec}
Suppose $H= [-1,1]^d$ is the true hypercube. 
There exists a robust algorithm that given $\eps$-corrupted samples from $H$ and a unit vector $a_{(i)}^0$ such that $\norm{a_{(i)}^0-e_i}_2\le 0.1$, outputs 
\[
\norm{\hat a_{(i)}-e_i}_2\le 2^{14}(\eps_i+\eta_i)
\]
after at most $2^{12} d\log d$ iterations, where $\eps_i+\eta_i$ is the fraction of corruption in $\{x:x\cdot \hat a_{(i)}>1\}$.
\end{prop}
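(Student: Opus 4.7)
The plan is to prove a per-iteration progress lemma and chain it to convergence at rate $O(\eps_i+\eta_i)$. Without loss of generality assume the target rotation is the identity. Fix coordinate $i$ and let $a^t := a_{(i)}^t$, $\delta_t := \|a^t - e_i\|_2$; decomposing $a^t = \alpha_t e_i + \beta_t u_t$ with $u_t \perp e_i$ a unit vector and $\alpha_t^2 + \beta_t^2 = 1$, the invariant $\delta_t \le 0.1$ gives $\alpha_t = 1 - \delta_t^2/2$ and $\beta_t = \Theta(\delta_t)$. In these coordinates the cap $\{x \cdot a^t > 1\}$ becomes $\{x_i + \beta_t(x\cdot u_t) > \alpha_t\}$, a ``triangular'' region near $x_i \approx 1$ and $x\cdot u_t > 0$.

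First I would compute, in the clean (uncorrupted) case, the fractional size $|S(a^t)|/n$ and the conditional mean $\mu^\star := \E[x \mid x \in S(a^t)]$ under the uniform distribution on $[-1,1]^d$. Parameterizing by $(x_i,\,y := x\cdot u_t,\,x_{\perp\perp})$ and integrating, using logconcavity of the marginal of $y$ (Lemma~\ref{lem:positive_mean}) together with Lemma~\ref{lem:truncated-cube-mean}, yields $|S(a^t)|/n = C_1\delta_t(1\pm o(1))$ and
\[
\mu^\star \cdot e_i = 1 - \Theta(\delta_t), \qquad \mu^\star \cdot u_t = \gamma, \qquad \mu^\star \cdot w = 0 \text{ for all } w \perp \mathrm{span}(e_i, u_t),
\]
where $C_1, \gamma$ are positive absolute constants (independent of $d$). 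With the step size $\beta := c_2|S(a^t)|/(\|\mu^\star\|^2 n) = \Theta(\delta_t)$ from the algorithm, the ideal vector $a^t - \beta\mu^\star$ decomposes as $(\alpha_t - \beta)e_i + (\beta_t - \beta\gamma)u_t$; a suitable absolute-constant choice of $c_2$ then forces $|\beta_t - \beta\gamma| \le \rho\beta_t$ for some fixed $\rho \in (0,1)$, and spherical renormalization gives $\|a^{t+1} - e_i\|_2 \le \rho\delta_t$ in the noiseless case.

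To handle the corruption, Theorem~\ref{thm:robust-mean} applied inside $S(a^t)$ (with local corruption fraction $O((\eps_i+\eta_i)n/|S(a^t)|)$) yields $\|\tilde\mu - \mu^\star\|_2 = O\bigl(\sqrt{(\eps_i+\eta_i)n/|S(a^t)|}\bigr)\cdot \|\Sigma\|_2^{1/2}$. Since $S(a^t) \subseteq [-1,1]^d$ gives $\|\Sigma\|_2 = O(1)$ and the cap-size estimate gives $|S(a^t)|/n = \Theta(\delta_t)$, we obtain $\|\tilde\mu - \mu^\star\|_2 = O(\sqrt{(\eps_i+\eta_i)/\delta_t})$. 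Propagating through the update (where $\beta = \Theta(\delta_t)$) gives a perturbation of at most $O(\sqrt{\delta_t(\eps_i+\eta_i)})$ in $a^{t+1}$, so
\[
\delta_{t+1} \;\le\; \rho\,\delta_t + C\sqrt{\delta_t(\eps_i+\eta_i)}\mper
\]
The fixed point is $\delta = \Theta(\eps_i+\eta_i)$, and starting from $\delta_0 \le 0.1$ the iteration enters this absorbing region in $O(\log(1/(\eps_i+\eta_i)))$ steps, well within the $2^{12}d\log d$ budget. The algorithm's selection picks $t$ minimizing $s^t = |\{x:|x\cdot a^{t+1}|>1\}|$, and by the cap-size estimate $s^t/n = \Theta(\delta_t)$ up to $\pm(\eps_i+\eta_i)$, so the selected iterate satisfies $\|\hat a_{(i)} - e_i\|_2 \le 2^{14}(\eps_i+\eta_i)$.

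The main obstacle is that $\|\tilde\mu\|$ appears in the denominator of $\beta$: near the absorbing region, the robust error in $\tilde\mu$ can be comparable to $\mu^\star \cdot u_t = \gamma$, so the induced multiplicative error in $\beta$ could in principle prevent contraction. I expect to resolve this by a case split: when $\|\tilde\mu - \mu^\star\|_2 \le \|\mu^\star\|_2/2$ the clean-case contraction applies with slightly worse constants; otherwise the robust guarantee itself forces $\delta_t \le O(\eps_i+\eta_i)$ and the iterate is already in the absorbing region. A secondary subtlety is verifying that Step 2(a)'s pairwise density removals do not inflate $\eta_i$; Lemma~\ref{lem:intersection_of_bands} certifies that for honest samples the true mass of $S(a^t) \cap S(\hat a_{(j)})$ is already sub-multiplicative (this requires checking $|a^t \cdot \hat a_{(j)}| \le 1/2$, which follows from the warm start together with the invariant $\delta_t \le 0.1$), so only corruption-heavy intersections trigger deletions.
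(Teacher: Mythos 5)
Your proposal follows the same overall strategy as the paper---robust gradient descent on the mean of the cap $S(a^t)$, a per-iteration improvement valid only while $\delta_t \gg \eps_i+\eta_i$, and final selection of the iterate minimizing $s^t$---but the per-step analysis takes a genuinely different route. The paper never decomposes $\mu$ in the plane $\mathrm{span}(e_i,u_t)$: it tracks only the scalar $a^t\cdot e_i$, uses the two projections $\mu\cdot a\ge 1$ and $\mu\cdot e_i\le 1-\delta/32$ (Lemma~\ref{fact:mean-proj-bound}), controls the robust-mean error specifically in the direction $e_i$ via the variance bound $\sigma^2(e_i)=O(\delta^2)$ (Lemma~\ref{fact:variance-in-coordinate}), and bounds $\norm{\tilde\mu}$ only by $\sqrt d$; this yields per-step gain $\Omega(\delta^2/d)$ in $a\cdot e_i$ and hence the $2^{12}d\log d$ budget (Lemma~\ref{lemma:robust-update-step}, Corollary~\ref{coro:update-rotation-vec}). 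You instead claim a constant-factor contraction $\delta_{t+1}\le\rho\delta_t+C\sqrt{\delta_t(\eps_i+\eta_i)}$, using the sharper facts $\norm{\mu^\star}=\Theta(1)$ and $\mu^\star\cdot u_t=\Theta(1)$ but only the crude bound $\norm{\Sigma}_2=O(1)$ for the robust mean. If your per-step claims hold, your analysis is actually stronger (convergence in $O(\log(1/(\eps_i+\eta_i)))$ steps, no need for Lemma~\ref{fact:variance-in-coordinate}), and your case split for the $\norm{\tilde\mu}$-in-the-denominator issue is the right fix, playing the same role as the paper's condition $\eps_i\le 2^{-14}\delta$. Note that the paper's scalar-progress route, by contrast, genuinely needs the directional variance bound: with only $\norm{\Sigma}_2=O(1)$ its $\Omega(\beta\delta)$ gain would be swamped by the robust-mean error.

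The one step that is not correct as stated is the symmetry claim $\mu^\star\cdot w=0$ for all $w\perp\mathrm{span}(e_i,u_t)$. The cube is not rotationally symmetric about $e_i$: writing $m_j=\E[x_j\mid x\cdot a^t>1]$ for $j\ne i$, each $m_j$ is a nonlinear (odd, increasing) function of $a^t_j$, so the off-$e_i$ part of $\mu^\star$ is parallel to $u_t\propto \bar a$ only in special cases (e.g., all $|a^t_j|$ equal, or all mass on one coordinate); in general $\mu^\star$ has a component of size up to $\Theta(1)$ outside $\mathrm{span}(e_i,u_t)$. This does not sink the argument---that component enters the update multiplied by $\beta=\Theta(\delta_t)$, so a sufficiently small step constant $c_2$ still keeps the component of $a^{t+1}$ orthogonal to $e_i$ below $\rho\delta_t$ with $\rho<1$---but your two-dimensional reduction must be replaced by an explicit $O(1)$ bound on the off-plane component together with this small-step argument, and the contraction factor and final constant then have to be re-derived; you also do not verify that the algorithm's fixed $c_2$ and the stated constant $2^{14}$ emerge from your constants (the paper's own bookkeeping here is admittedly loose). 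Finally, your inputs $|S(a^t)|/n=\Theta(\delta_t)$, $\norm{\mu^\star}=\Theta(1)$, and $\mu^\star\cdot u_t=\Theta(1)$ need proofs: the first is exactly Lemma~\ref{fact:bound-out}, while the latter two require Lemma~\ref{lem:positive_mean}-type computations that the paper does not supply and you would have to carry out.
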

The proof of Proposition~\ref{prop:update-rotation-vec} is deferred to Section~\ref{sec:update-rotation-vec}.
With the two propositions above, we complete the proof of Theorem~\ref{thm:rotation} in Section~\ref{sec:proof-of-rotation}.

\subsection{Properties of Cubes}
In this section, we prove a few basic properties of uniform samples from  $H = [-1,1]^d$, in particular, of the subset of samples $\{x\in H: x \cdot a > 1\}$ for a unit vector $a$. We will later use them in our analysis of Algorithm~\ref{alg:rotation}.

The following lemma is crucial for our analysis since it connects the 2-norm error of $\hat a_{(i)}$ and the volume of $\widehat A H\setminus AH$. Hence it implies Lemma~\ref{fact:dtv-rotation}.

\begin{lemma}\label{fact:bound-out}
Let $Y$ be a uniform sample from $H$ of size $n$. Suppose $a$ is a unit vector such that $\norm{a-e_i}=\delta$. Then with high probability the number of samples in $S(a) \cap Y$ is 
\[
\frac{\delta}{5} n 
\le |S(a) \cap Y| :=\left|\left\{x \in Y:|x\cdot a|> 1\right\}\right|
\le \frac{(1+\delta)\delta}{4} n.
\]
\end{lemma}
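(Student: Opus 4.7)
The plan is to reduce the statement to an estimate on the single probability $p := \Pr_{x \sim U(H)}(|a \cdot x| > 1)$, then bound $p$ from both sides by an explicit fiber computation. Since $|S(a) \cap Y|$ is a sum of $n$ i.i.d.\ Bernoulli$(p)$ random variables, Hoeffding's inequality gives $||S(a) \cap Y| - np| = O(\sqrt{n \log n})$ with high probability; for $n \ge \poly(d, 1/\delta)$ the additive error is negligible compared to the targets $n\delta/5$ and $n\delta(1+\delta)/4$, so it suffices to prove the corresponding inequalities for $p$.

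Next, by the reflection symmetry $x \mapsto -x$ of $H$, we have $p = 2\Pr(a \cdot x > 1)$; WLOG $a \cdot e_i > 0$. Decompose $a = a_i e_i + b$ with $b \perp e_i$, so $a_i = 1 - \delta^2/2$ and $\|b\|_2 = \delta\sqrt{1 - \delta^2/4}$. For each fixed $x' := (x_j)_{j \ne i} \in [-1,1]^{d-1}$, integrating over $x_i \in [-1,1]$ gives
\[
\Pr_{x_i}\bigl(a_i x_i + b \cdot x' > 1 \mid x'\bigr) = \max\!\bigl\{0,\; \min\{1,\, (b \cdot x' - \tfrac{\delta^2}{2})/(2 a_i)\}\bigr\},
\]
so that $\Pr(a \cdot x > 1) = \tfrac{1}{2a_i}\,\E_{x'}[(b \cdot x' - \tfrac{\delta^2}{2})_+]$ up to the negligible contribution from the rare event $\{b \cdot x' > a_i + 1\}$. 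This identifies the corner-cap volume with the truncated positive part of a sum-of-uniforms random variable $Z := b \cdot x'$, which is symmetric and logconcave with mean $0$ and variance $\|b\|^2/3 = \Theta(\delta^2)$.

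For the upper bound, use $(Z - \tfrac{\delta^2}{2})_+ \le Z_+$ and the symmetry $\E Z_+ = \tfrac12 \E|Z|$; bounding $\E|Z|$ (e.g.\ by a direct comparison showing $\E|Z| \le \|b\|_2/2$, which is sharp when $b$ is supported on a single coordinate and strictly looser otherwise by central-limit behaviour) yields $\Pr(a \cdot x > 1) \le \delta(1+\delta)/8$, hence $p \le \delta(1+\delta)/4$. The $(1+\delta)$ slack comfortably absorbs the factor $1/(2 a_i) = 1/(2 - \delta^2)$ and the fact that we dropped $-\delta^2/2$. For the lower bound, use $(Z - \tfrac{\delta^2}{2})_+ \ge Z_+ - \tfrac{\delta^2}{2}\cdot \1[Z \ge 0]$, combined with the symmetric-logconcave estimate $\E Z_+ = \tfrac12 \mu^+ \ge \tfrac{c_1}{2\sqrt3}\|b\|_2$ from \cref{lem:positive_mean}. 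Since $\|b\|_2 = \delta\sqrt{1-\delta^2/4}$ and $\delta^2/2$ is a lower-order correction, this gives $\Pr(a \cdot x > 1) \ge \delta/10$ and hence $p \ge \delta/5$.

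The main obstacle is tightening the constant in the upper bound on $\E|Z|$: the Cauchy--Schwarz estimate $\E|Z| \le \|b\|_2/\sqrt{3}$ is too loose for the target $1/4$, so one must either establish the sharper $\E|Z| \le \|b\|_2/2$ (which can be proved by an exchange argument showing that spreading mass across coordinates can only decrease $\E|Z|$, with equality in the single-coordinate case) or absorb the discrepancy into the $(1+\delta)$ factor. The lower bound, in contrast, is robust to the choice of universal constant from \cref{lem:positive_mean} since we have a factor of $2$ of slack between $\delta/5$ and the leading-order expression.
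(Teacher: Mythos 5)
Your skeleton is the same as the paper's: integrate out $x_i$ over each fiber, reduce to the symmetric logconcave marginal $Z=b\cdot x'$ with variance $\norm{b}^2/3$, and bound its positive part from both sides; the concentration step via Hoeffding is fine. The problems are in the two constant-sensitive steps. For the lower bound, your claim of ``a factor of $2$ of slack'' is not there: chasing your own chain, $\Pr(a\cdot x>1)\gtrsim \frac{1}{2a_i}\E Z_+\approx \frac{c_1\delta}{4\sqrt3}\approx 0.144\,c_1\delta$, so to reach the target $\delta/10$ you need $c_1\ge 2\sqrt3/5\approx 0.69$, whereas \cref{lem:positive_mean} only supplies an unspecified universal constant (and even the sharp constant for symmetric logconcave laws, $1/\sqrt2$ from the Laplace density, gives only $\approx 0.102\,\delta$ before the negative $O(\delta^2)$ corrections). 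This is exactly where the paper does something different: it applies \cref{lem:logconcave-moments} to the density of $Z$ conditioned on $Z\ge 0$, getting $M_1^2\ge M_0M_2/2$, i.e.\ $\E(Z\mid Z\ge 0)\ge \sigma_Z/\sqrt2 = \delta/\sqrt6\ge 2\delta/5$, and it is this explicit moment-logconcavity bound (not a generic $c_1$) that makes the constant $1/5$ come out. To salvage your lower bound you should substitute this step for the appeal to \cref{lem:positive_mean}.

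For the upper bound, your route stands or falls with $\E\abs{Z}\le \norm{b}_2/2$. The fallback of absorbing the discrepancy into the $(1+\delta)$ factor does not work in the relevant regime: Cauchy--Schwarz gives $\E\abs{Z}\le\norm{b}_2/\sqrt3$, which leads to $\Pr(a\cdot x>1)\lesssim \delta/(4\sqrt3)\approx 0.144\,\delta$, exceeding $(1+\delta)\delta/8$ whenever $\delta\lesssim 0.15$ --- and small $\delta$ is precisely where the lemma is used. The inequality $\E\abs{Z}\le\norm{b}_2/2$ is in fact true, but your ``exchange argument'' sketch is not yet a proof: the natural merging step (replace $b_1x_1+b_2x_2$ by $\sqrt{b_1^2+b_2^2}\,x_1$ in the presence of the remaining coordinates) cannot be justified by convex domination, since the spread-out sum has the strictly larger fourth moment, so some structure-specific argument is needed. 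A clean way to close it: $Z$ is symmetric and logconcave, hence symmetric unimodal, hence (Khinchine) a mixture of uniform laws on intervals $[-t,t]$; then $\E\abs{Z}=\tfrac12\E_\pi t$ and $\E Z^2=\tfrac13\E_\pi t^2$, and Cauchy--Schwarz on the mixing measure gives $\E\abs{Z}\le\tfrac{\sqrt3}{2}\sigma_Z=\norm{b}_2/2$, with equality exactly in the single-coordinate case. For what it's worth, the paper is no more rigorous at this point --- it simply asserts that the truncated mean is maximized ``when $\bar a$ is in the direction of true coordinates'' --- and your $(Z-\delta^2/2)_+\le Z_+$ reduction is actually a tidier way to organize that step; but as written, both your upper-bound justification and your lower-bound constant are gaps.
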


\begin{proof}
Let $\bar a$ be the $d-1$-vector that excludes $i$-th coordinate of $a$. Since $\norm{a-e_i}=\delta$, we have that $a_{i} = a \cdot e_i= 1-\delta^2/2$ and $\norm{\bar a} = \delta + O(\delta^2)$. Let 
$S^+(a) := \{x:x\cdot a> 1\}$.
By symmetry of the cube, $2\vol(S^+(a))=\vol(S(a))$. So it suffices to compute the number of original uncorrupted sample in $S^+(a)$ and double it.
We can compute the volume of $S^+(a)$ by integrating the length of $x_i$ over $\bar x$ where $\bar x$ is the vector $x$ without entry $x_i$. The upper bound on $\bar x$ is 1 and the lower bound is $x\cdot a=1$, i.e., $x_i=\frac{1-\bar x\cdot \bar a}{a_i}$. So we have
\[
\vol(S^+(a)) = \int_{\bar x \in [-1,1]^{d-1}} \max \left\{0, 1- \frac{1-\bar x\cdot \bar a}{a_i} \right\}\; d \bar x
\]
Let $y=\bar x \cdot \bar a$. Suppose $p(y)$ is the density of $y$. Let $1- \frac{1-\bar x\cdot \bar a}{a_i}\ge 0$. We have $y\ge \delta^2/2$. Thus, we can write the integral above as 
\begin{equation}
\vol(S^+(a)) = 2^{d-1}\left(
\Pr\left(y\ge\frac{\delta^2}{2}\right)\left(1-\frac{1}{a_i}\right)+\frac{1}{a_i}\int_{y\ge \frac{\delta^2}{2}} y \cdot p(y)\; dy
\right).
\end{equation} 
Since $\bar x$ is a uniform random variable on $[-1,1]^{d-1}$, we have $y$ is a scaled 1d-projection of the standard $(d-1)$-cube onto an arbitrary direction with zero mean and variance $\norm{\bar a}^2/3=\delta^2/3$. Since $y$ is logconcave and symmetric, we can bound the probability $1/2-\delta\le \Pr\left(y\ge\frac{\delta^2}{2}\right)\le 1/2$. We define $\mu_y:=\E(y \mid y\ge \delta^2/2)=\int_{y\ge \frac{\delta^2}{2}} y \cdot p(y)\; dy$ as the truncated mean of $y$. Then
\begin{equation}
\label{eqn:vol-s1-rotation}
\vol(S^+(a)) = 2^{d-1} \Pr\left(y\ge\frac{\delta^2}{2}\right) \left( 1-\frac{1}{a_i} +\frac{\mu_y}{a_i}
\right)
\end{equation}
By Lemma~\ref{lem:positive_mean}, $\mu_y$ is upper and lower bounded by $\Theta(\delta)$. The upper bound is achieved when $\bar a$ is in the direction of true coordinates, which is 
 $(1+\delta)\delta/2$. Let $y^+$ be the truncated density of $y$ on $\R^+$. Then 
 raw moments of $y^+$ are $M_0(y^+) = 1,M_1(y^+)\le \mu_y,M_2(y^+) = \delta^2/3$.
By Lemma~\ref{lem:logconcave-moments}, $M_1^2\ge M_0\frac{M_2}{2}$. Then $\mu_y\ge M_1\ge \frac{1}{\sqrt 6}\delta \ge \frac{2}{5}\delta$.
Plugging the bounds of $\Pr\left(y\ge\frac{\delta^2}{2}\right)$ and $\mu_y$ into (\ref{eqn:vol-s1-rotation}), we get
\begin{align*}
\frac{2\delta}{5}-\frac{\delta^2}{2} &\le 1-\frac{1}{a_i} +\frac{\mu_y}{a_i}
=\frac{\mu_y-\delta^2/2}{1-\delta^2/2} \le \frac{\delta+\delta^2}{2}\\
\frac{2^{d-1}}{5}\delta n &\le \vol(S^+(a))\le 2^{d-3}(1+\delta)\delta n.
\end{align*}
Since the volume of the standard cube $H$ is $2^d$, the probability a uniformly random sample is in $S^+(a)$ is at least $2^{d-1}\delta/(5\cdot 2^d)=\delta/10$ and at most $2^{d-3}(1+\delta)\delta/2^d=(1+\delta)\delta/8$. Thus the fraction of samples in $S(a)$ is at least $\delta/5$ and at most $(1+\delta)\delta/4$.
\end{proof}

\begin{lemma}\label{fact:mean-proj-bound}
Suppose $\norm{a-e_i}=\delta$.
Let $\mu$ be the mean of the subset of the cube: $\{x\in[-1,1]^d:a\cdot x>1\}$. Then
$\mu\cdot e_i \le 1-\delta/32$.
\end{lemma}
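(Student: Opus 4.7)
The plan is to parametrize the cap $S^+(a) = \{x \in [-1,1]^d : a \cdot x > 1\}$ by the $(d-1)$ transverse coordinates and reduce to a one-dimensional computation. Decompose $a = a_i e_i + \bar a$ with $a_i = 1 - \delta^2/2$ and $\|\bar a\|^2 = \delta^2 - \delta^4/4$, split $x = (x_i, \bar x)$, and set $y := \bar a \cdot \bar x$. The constraint $a \cdot x > 1$ rewrites as $x_i > (1 - y)/a_i$, which is compatible with $x_i \le 1$ exactly when $y > \delta^2/2$; in that case the admissible $x_i$-slice is the interval $[1 - L(y),\,1]$ of length $L(y) := (y - \delta^2/2)/a_i$. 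Conditionally on $\bar x$, $x_i$ is uniform on $[-1, 1]$, so on $\{y > \delta^2/2\}$ the event $\{a \cdot x > 1\}$ has conditional probability $L/2$ and the conditional mean of $x_i$ on that event is $1 - L/2$. Integrating out $\bar x$ yields the identity
\[
\mu \cdot e_i \;=\; \frac{\E_{\bar x}\bigl[(L/2)(1 - L/2)\,\mathbf{1}_{L > 0}\bigr]}{\E_{\bar x}\bigl[(L/2)\,\mathbf{1}_{L > 0}\bigr]} \;=\; 1 \,-\, \frac{1}{2}\cdot\frac{\E[L^2\,\mathbf{1}_{L > 0}]}{\E[L\,\mathbf{1}_{L > 0}]},
\]
so the remaining task is to show $\E[L^2\,\mathbf{1}_{L>0}]/\E[L\,\mathbf{1}_{L>0}] \ge \delta/16$.

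The ratio equals $\E[L^2 \mid L > 0]/\E[L \mid L > 0]$, which by Jensen is at least $\E[L \mid L > 0] = (\mu_y - \delta^2/2)/a_i$, where $\mu_y := \E[y \mid y > \delta^2/2]$. I would lower-bound $\mu_y$ by directly reusing the moment computation from the proof of Lemma~\ref{fact:bound-out}: $y$ is symmetric log-concave with variance $\|\bar a\|^2/3$, so applying Lemma~\ref{lem:logconcave-moments} to the positive-side raw moments of the conditional density $y^+$ on $\{y \ge 0\}$ (with $M_0(y^+) = 1$, $M_2(y^+) = \|\bar a\|^2/3$) gives $M_1(y^+)^2 \ge M_0 M_2/2 = \|\bar a\|^2/6$, i.e., $\E[y \mid y \ge 0] \ge \|\bar a\|/\sqrt 6$. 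Since $c \mapsto \E[y \mid y \ge c]$ is nondecreasing in $c \ge 0$ (which follows at once from differentiating in $c$ and does not require log-concavity), $\mu_y \ge \|\bar a\|/\sqrt 6 \ge \delta\sqrt{1 - \delta^2/4}/\sqrt 6$.

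Putting the pieces together in the regime $\delta \le 0.1$ guaranteed by the warm start of Proposition~\ref{prop:update-rotation-vec}, the factor $\sqrt{1 - \delta^2/4}$ is at least $0.99$ and $\delta^2/2 \le \delta/20$, so $\mu_y - \delta^2/2 \ge 0.4\,\delta - 0.05\,\delta > \delta/8$; combined with $a_i \le 1$, this yields $\E[L \mid L > 0] \ge \delta/8$ and hence $\mu \cdot e_i \le 1 - \delta/16$, comfortably beating the claimed bound $1 - \delta/32$. The only potential obstacle is keeping constants honest: the Jensen step is fairly lossy, and if one wanted the bound to hold at substantially larger $\delta$ then one would need to avoid it — for instance by estimating $\E[L^2 \mid L > 0]$ directly from the second positive-side moment of $y$, which equals $\|\bar a\|^2/3$. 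Since the algorithm only ever invokes this lemma at small $\delta$, the factor-of-two slack in $1/32$ absorbs the losses and no sharper argument is needed.
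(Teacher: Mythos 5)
Your argument is correct in substance but genuinely different from the paper's. You derive the exact fiberwise identity $\mu\cdot e_i = 1-\tfrac12\,\E[L^2\,\1_{L>0}]/\E[L\,\1_{L>0}]$ by integrating out $x_i$ over each fiber of the cap, then lower-bound the ratio via Jensen and the truncated-moment inequality of Lemma~\ref{lem:logconcave-moments} (the same inequality the paper uses inside the proof of Lemma~\ref{fact:bound-out}). The paper instead splits the cap at the height $x_i = 1-\delta/2$, upper-bounds the volume of the near-facet piece $S'$ using the slab-volume bound of Lemma~\ref{lem:slab-vol-upper-bound}, and plays this against the cap-volume lower bound $\vol(S^+(a))\ge (\delta/10)\vol(H)$ from Lemma~\ref{fact:bound-out} to get $\mu\cdot e_i\le 1-\tfrac{\delta}{2}(1-\vol(S')/\vol(S^+(a)))\le 1-\delta/32$. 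Your route is self-contained (it needs neither Lemma~\ref{lem:slab-vol-upper-bound} nor the cap-volume lower bound), gives the slightly better constant $1-\delta/16$, and your explicit restriction to $\delta\le 0.1$ is consistent with how the lemma is actually invoked (the paper's proof also implicitly needs $\delta$ small, e.g.\ through Lemma~\ref{fact:bound-out}).

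One caveat you should patch: your identity tacitly assumes each nonempty fiber $\{x_i\in[-1,1]: a\cdot x>1\}$ is a proper subinterval $[1-L,1]$, i.e.\ $(1-y)/a_i\ge -1$. When $y=\bar a\cdot\bar x>1+a_i$ (possible once $\norm{\bar a}_1>2$, so for $d\gtrsim 4/\delta^2$) the fiber is all of $[-1,1]$, its conditional mean is $0$ rather than $1-L/2$, and $L$ must be replaced by $\ell=\min\{L,2\}$; with $L$ uncapped the displayed identity is not exact. This is harmless: by Hoeffding, $\Pr(y\ge 2)\le e^{-2/\norm{\bar a}^2}\le e^{-200}$ for $\delta\le 0.1$, so replacing $L$ by $\ell$ changes both expectations by an exponentially small amount and your bound $\E[\ell\mid\ell>0]\ge \delta/8$ (hence $\mu\cdot e_i\le 1-\delta/16$) survives, but the capping and the tail estimate should be stated explicitly.
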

\begin{proof}
Let $S^+(a)=\{x\in[-1,1]^d:x \cdot a>1\}$ and $S' = \{x\in[-1,1]^d:x\cdot a >1, x_i = x\cdot e_i >1-c\delta\}$ for a fixed constant $c<1$. Then by the trivial upper bound: $x_i\le 1$ for any $x$ in the cube,
\[
\mu\cdot e_i\le \frac{\vol(S') + (1-c\delta)(\vol(S^+(a))-\vol(S'))}{\vol(S^+(a))}
= 1-c\delta\left(1-\frac{\vol(S')}{\vol(S^+(a))}\right).
\]
We can compute the volume of $S'$ by integral over $x_i$.
\begin{align*}
\frac{\vol(S')}{\vol(H)} &= \frac{1}{2}\int_{1-c\delta}^{1}\Pr(x\cdot a >1\mid x_i)\;dx_i\\
&=\frac{1}{2}\int_{1-c\delta}^{1}\Pr(\bar x\cdot \bar a\ge 1-x_i\cdot a_i)\;dx_i
\end{align*}
Let $y=\bar x\cdot \bar a/\norm{\bar a}$. Since $x_1$ and $\bar x$ are independent, $\bar x$ is uniformly random over the $d-1$ standard cube and $y$ is the projection of $\bar x$ onto an arbitrary unit vector. Let $t = (1-x_i)/\delta$ and hence $x_i = 1-t\delta$.
Then $dx_i=-\delta\; dt$. By substitution,
\begin{align*}
\frac{\vol(S')}{\vol(H)}
= \frac{1}{2}
\int_{t=0}^{c}\Pr\left (y\ge \frac{1-x_i\cdot a_i}{\norm{\bar a}}\right)\delta\;dt
\le \frac{1}{2}
\int_{t=0}^{c}\Pr(y\ge t)\delta\;dt
\end{align*}
By Lemma~\ref{lem:slab-vol-upper-bound}, $\Pr(y\ge t)\le 1/2(1-t)$ for any $t\le 3/4$.  
Take $c=1/2$. Then we get
\[
\frac{\vol(S')}{\vol(H)} 
\le \frac{1}{2}\int_{t=0}^{c}\frac{1}{2}(1-t)\delta\;dt
=\frac{1}{4}(c-c^2/2)\delta = \frac{3}{32}\delta.
\]
By Lemma~\ref{fact:bound-out}, $\frac{\vol(S^+(a))}{\vol(H)}\ge\delta/10$. Thus
\[
\mu\cdot e_i
= 1-c\delta\left(1-\frac{\vol(S')}{\vol(S^+(a))}\right)\le1-\delta/32.
\]
\end{proof}

\begin{lemma}\label{fact:variance-in-coordinate}
Let $H$ be the standard hypercube $[-1,1]^d$. Suppose $\norm{a-e_i}=\delta$. Then the variance of uniform distribution on $H \cap \{x:x \cdot a >1\}$ in the direction $e_i$ is at most $c\delta^2$.
\end{lemma}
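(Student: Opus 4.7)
The plan is to first bound the variance by $\E[(1-x_i)^2 \mid x\cdot a > 1]$ (using that $\Var$ is dominated by the mean squared deviation from any reference point, and $x_i$ is close to $1$ throughout $S^+(a) := \{x \in H : x\cdot a > 1\}$), then to slice $S^+(a)$ by hyperplanes perpendicular to $e_i$ and express the expectation in terms of one-dimensional moments of the logconcave projection $\bar{x}\cdot\bar{a}$. The core reason to expect an $O(\delta^2)$ bound is geometric: in direction $e_i$, the slab has effective thickness $\Theta(\delta)$, so the conditional variance ought to be $\Theta(\delta^2)$.

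First, I would change coordinates: write $\bar x, \bar a \in \R^{d-1}$ for the vectors obtained by deleting the $i$-th coordinate, and set $Y := \bar x \cdot \bar a$. Since $a_i = 1 - \delta^2/2$, the condition $x \cdot a > 1$ rewrites as $x_i > (1 - Y)/a_i$. Thus, conditional on $\bar x$, the coordinate $x_i$ is uniform on the interval $[(1-Y)/a_i,\,1]$, whose length is $L(\bar x) := \max\{0,\,(Y - \delta^2/2)/a_i\}$, giving $\E[(1-x_i)^2 \mid \bar x] = L(\bar x)^2/3$. Marginalizing and using that the conditional density of $\bar x$ given $x \in S^+(a)$ is proportional to $L(\bar x)$, I obtain
\[
\Var(x_i \mid x \cdot a > 1) \,\le\, \E[(1-x_i)^2 \mid x \cdot a > 1] \,=\, \frac{\E_{\bar x}[L^3]}{3\,\E_{\bar x}[L]}.
\]

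Next, since $\bar x$ is uniform on $[-1,1]^{d-1}$, the random variable $Y$ is symmetric and logconcave with variance $\sigma^2 = \norm{\bar a}^2/3 = \Theta(\delta^2)$. Let $M_n := \int_0^\infty y^n f_Y(y)\,dy$; symmetry gives $M_0 = 1/2$ and $M_2 = \sigma^2/2$. Lemma~\ref{lem:logconcave-moments} asserts that $M_n/n!$ is logconcave, which yields (at $n=1$) $M_1 \ge \sqrt{M_0 M_2 / 2} = \Theta(\sigma)$, and (at $n=2$) $M_3 \le 3M_2^2 /(2 M_1) = O(\sigma^2 M_1)$. Translating these to moments of $L = (Y - \delta^2/2)/a_i$: the shift by $\delta^2/2$ and the factor $1/a_i = 1 + O(\delta^2)$ are lower-order compared to $M_1 = \Theta(\delta)$, so $\E[L] = \Theta(\delta)$ while $\E[L^3] \le M_3/a_i^3 = O(\sigma^2 M_1) = O(\delta^3)$. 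Their ratio is then $O(\delta^2)$, as required.

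The main obstacle is the last conversion from moments of $Y$ to moments of $L$: one must verify that the $\delta^2/2$ shift in the definition of $L$ does not cause cancellation in the denominator $\E[L]$. This is handled by the bound $M_1 = \Theta(\delta) \gg \delta^2$ for $\delta$ sufficiently small (which holds automatically in the regime of interest, since for $\delta$ not small the statement is vacuous given $x_i \in [-1,1]$). With this bookkeeping, Lemma~\ref{lem:logconcave-moments} closes the argument cleanly without needing any sharper concentration tools.
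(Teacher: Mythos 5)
Your proof is correct, but it takes a genuinely different route from the paper's. The paper works with the one-dimensional marginal of $x_i$ on the convex set $H\cap\{x\cdot a>1\}$: it observes that this marginal density $p$ is increasing on $[-1,1]$, shows $p(1-\delta)<\tfrac18\max p$, invokes Lemma~\ref{lem:logconcave-mean-density} to place the conditional mean at $\mu\ge 1-\delta$, and then converts the one-sided gap (since $x_i\le 1$, $\E[x_i-\mu\mid x_i\ge\mu]\le\delta$) into a variance bound via Lemma~\ref{lem:positive_mean}. You instead bound the variance by the second moment about $1$, slice along $\bar x$, and reduce everything to moments of the logconcave projection $Y=\bar x\cdot\bar a$, closing with the moment-logconcavity inequality of Lemma~\ref{lem:logconcave-moments} --- essentially the same machinery the paper deploys in the proof of Lemma~\ref{fact:bound-out}, so your argument turns the variance bound into a corollary of that slicing computation, with explicit constants and no appeal to the mean-location lemmas; the paper's route is shorter given its toolbox and transfers directly to the shifted-threshold variant (Lemma~\ref{fact:variance-in-coordinate-general}), where the same ``mean is within $O(\delta)$ of the cap, hence variance $O(\delta^2)$'' logic is reused. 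One small point to patch in a full write-up: the interval for $x_i$ given $\bar x$ is $[\max\{-1,(1-Y)/a_i\},1]$, so the correct length is $\tilde L=\min\{L,2\}$ rather than $L=(Y-\delta^2/2)^+/a_i$; since $\tilde L\le L$ and the two differ only when $Y>1+a_i$, an event of probability $e^{-\Omega(1/\delta)}$ for the logconcave $Y$ with standard deviation $\Theta(\delta)$, both the numerator bound $\E[\tilde L^3]\le\E[L^3]=O(\delta^3)$ and the denominator bound $\E[\tilde L]=\Omega(\delta)$ survive, but the cap should be acknowledged.
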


\begin{proof}
Suppose $p(x_i)$ is the density function of $x_i$ on $\{x:x\cdot a>1\}$ and $\bar x,\bar a$ are vectors constructed from $x,a$ by excluding $i$-th coordinate.  Then 
\[
p(x_i)\propto \vol_{d-1}(\{x:x\cdot a>1,x_i \text{ is fixed}\})
\propto\Pr_{\bar x}(\bar x\cdot \bar a>1-a_i x_i).
\]
Let $y=\bar x\cdot \bar a$. Then $y$ is logconcave and symmetric with mean 0 and variance $\delta^2/3$. So in the range $x_i\in[-1,1]$, $p(x_i)$ is increasing. Then 
\[
\max p(x_i)=p(1)\propto \Pr(y>\delta^2/2)\ge \frac{1}{2}-\delta.
\]
and 
\[
p(1-\delta)\propto \Pr(y>1-a_i(1-\delta))\le \Pr(y>\delta)\le \frac{3-2\sqrt{2}}{4}.
\]
If $\delta< 0.1$, then $p(1-\delta)<\frac{1}{8}\max p(x_i)$. 
By Lemma~\ref{lem:logconcave-mean-density}, $p(\mu)\ge\frac{1}{8}\max p(x_i)$. By the monotonicity of $p(x_i)$, we have $\mu\ge 1-\delta$. Then by Lemma~\ref{lem:positive_mean}, the variance of $x_i$ on the whole subset is upper-bounded by the variance of one side of the mean. 
\end{proof}

\subsection{Rotation matrix can be learned coordinate by coordinate}
\label{sec:separate-coordinate}
In this section, we prove Proposition~\ref{prop:sum-of-noise} via Lemmas \ref{lemma:sum-of-noise-is-small} and \ref{lemma:sum-of-removal-is-small}. Recall that $S(\hat a_{(i)})=\{x:x\cdot \hat a_{(i)}>1\}\cup \{-x:x\cdot \hat a_{(i)}<-1\}$ and $\eps_i$ is the fraction of outliers in $S(\hat a_{(i)})$, $\eta_i$ is the fraction of original uncorrupted sample in $S(\hat a_{(i)})$ that are removed. 

\begin{lemma}\label{lemma:sum-of-noise-is-small}
Let $X_i$ be the subset of corrupted sample $X$ after removing samples in Step 2(a) of Algorithm~\ref{alg:rotation} and 
$S(\hat a_{(i)})=\{x:x\cdot \hat a_{(i)}>1, x \in X_i\}\cup \{-x:x\cdot \hat a_{(i)}<-1, x\in X_i\}$.
Let $\eps_i$ be the fraction of outliers in $S(\hat a_{(i)})$.
Suppose for all $i$, $\hat a_{(i)}\cdot e_i\ge 0.99$ and
for all $i,j$,
\[
\frac{1}{n}|S(\hat a_{(i)})\cap S(\hat a_{(j)})| \le \frac{2c_1}{n^2}|S(\hat a_{(i)})||S(\hat a_{(j)})|.
\]
Then
$\sum_{i\in[d]}\eps_i\le 2\eps$.
\end{lemma}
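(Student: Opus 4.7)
The plan is to apply Lemma~\ref{lem:intersection-sum} to the sets $E_i := O \cap S(\hat a_{(i)})$ of adversarial outliers that land in band $i$, where $O$ denotes the full set of at most $\epsilon n$ outliers. Under the normalization $\mathrm{frac}(S) := |S|/n$, the quantity $\epsilon_i$ coincides with $\mathrm{frac}(E_i)$, so the target inequality is $\sum_i \mathrm{frac}(E_i) \leq 2\epsilon$. The union hypothesis of Lemma~\ref{lem:intersection-sum} is immediate: since $\bigcup_i E_i \subseteq O$, we have $\mathrm{frac}(\bigcup_i E_i) \leq \epsilon$.

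The substance of the proof is verifying the pairwise condition $\mathrm{frac}(E_i \cap E_j) \leq \alpha \, \mathrm{frac}(E_i)\mathrm{frac}(E_j)$ for some absolute constant $\alpha$. The starting point is the containment $E_i \cap E_j \subseteq S(\hat a_{(i)}) \cap S(\hat a_{(j)})$ together with the 2D density check surviving Step 2(a), which gives $|S(\hat a_{(i)}) \cap S(\hat a_{(j)})| \leq (2c_1/n)|S(\hat a_{(i)})||S(\hat a_{(j)})|$. The catch is that this bound is in terms of $|S(\hat a_{(i)})|$ rather than $|E_i|$, and in general $E_i$ can be an arbitrarily small portion of $S(\hat a_{(i)})$ when the band is populated primarily by original uncorrupted samples.

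To convert the band-size bound into an outlier-size bound, I would split pairs into two regimes depending on whether each $S(\hat a_{(i)})$ is outlier-dominated or uncorrupted-sample-dominated. In the outlier-dominated regime $|S(\hat a_{(i)})| \leq 2|E_i|$ and the conversion is immediate, giving $\mathrm{frac}(E_i \cap E_j) \leq 8c_1\, \mathrm{frac}(E_i)\mathrm{frac}(E_j)$. In the uncorrupted-sample-dominated regime, the assumption $\hat a_{(i)} \cdot e_i \geq 0.99$ forces $\delta_i := \|\hat a_{(i)} - e_i\| \leq 0.15$, so by Lemma~\ref{fact:bound-out} the uncorrupted contribution to $S(\hat a_{(i)})$ concentrates around $(1+\delta_i)\delta_i n/4$, and by Lemma~\ref{lem:intersection_of_bands} applied to the nearly-axis-aligned pair $(\hat a_{(i)}, \hat a_{(j)})$ the uncorrupted portions of different bands are genuinely nearly disjoint: $\mathrm{frac}(R_i \cap R_j) \leq C \, \mathrm{frac}(R_i)\mathrm{frac}(R_j)$ where $R_i := S(\hat a_{(i)}) \setminus O$. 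The upshot is that whenever the 2D check is tight, the excess mass inside $S(\hat a_{(i)}) \cap S(\hat a_{(j)})$ that does not come from this small uncorrupted contribution must be attributable to outliers, and so can be absorbed into a multiple of $\mathrm{frac}(E_i)\mathrm{frac}(E_j)$.

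Putting the two regimes together yields a uniform absolute constant $\alpha$, after which Lemma~\ref{lem:intersection-sum} applied with union fraction $\epsilon$ gives $\sum_i \mathrm{frac}(E_i) \leq \epsilon/(1-\alpha\epsilon) \leq 2\epsilon$ provided $\epsilon \leq \tfrac{1}{2\alpha}$. The main obstacle is making the case split quantitative and $d$-independent: one has to pick the outlier-versus-uncorrupted crossover threshold, use sharp concentration for the uncorrupted count $|R_i|$ against its expected value from Lemma~\ref{fact:bound-out}, and reconcile the constants coming from Lemma~\ref{lem:intersection_of_bands} (which controls intersections on the logconcave marginal law) with the constant $c_1$ appearing in the 2D density check, so that the resulting $\alpha$ does not degrade with either $d$ or the warm-start accuracy.
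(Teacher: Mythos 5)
Your overall frame is the paper's: apply Lemma~\ref{lem:intersection-sum} to the per-band outlier sets, bound the union by $\eps$, and reduce everything to converting the 2D-check bound $2c_1\,\mathrm{frac}(S(\hat a_{(i)}))\,\mathrm{frac}(S(\hat a_{(j)}))$ into $\alpha\,\eps_i\eps_j$; your outlier-dominated regime is exactly the paper's conversion. The gap is in your second regime. When $\eps_i$ is much smaller than $\mathrm{frac}(S(\hat a_{(i)}))=\Theta(\delta_i)$ (with $\delta_i=\norm{\hat a_{(i)}-e_i}$, via Lemma~\ref{fact:bound-out}), the ``absorption'' step does not go through: the 2D check is only an \emph{upper} bound on the total mass of $S(\hat a_{(i)})\cap S(\hat a_{(j)})$, of order $c_1\delta_i\delta_j$, and subtracting the nearly disjoint uncorrupted contribution supplied by Lemma~\ref{lem:intersection_of_bands} still leaves slack of the same order $\Theta(\delta_i\delta_j)$ that outliers are free to occupy. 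Nothing ties that slack to the product $\eps_i\eps_j$. Concretely, the adversary can place a single set of outliers of fraction $\eps'\le\eps$ far out along the diagonal so that the very same points lie in every band; then $E_i\cap E_j=E_i=E_j$ has fraction $\eps'$, the 2D check still passes as long as $\eps'\le c_1\delta_i\delta_j$ (say $\delta_i\approx 0.1$ and $\eps\le\delta_i^2$), yet $\mathrm{frac}(E_i\cap E_j)=\eps'\gg\alpha\,\eps_i\eps_j$ for any absolute constant $\alpha$. So the pairwise hypothesis of Lemma~\ref{lem:intersection-sum} genuinely fails for the outlier sets in this regime, and no choice of crossover threshold or sharpening of the constants from Lemma~\ref{lem:intersection_of_bands} and Lemma~\ref{fact:bound-out} repairs a purely static argument.

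The paper does not attempt a static argument there. Its proof says ``we may assume $\eps_i\ge c_3\,\mathrm{frac}(S(\hat a_{(i)}))$; otherwise we can reduce $\mathrm{frac}(S(\hat a_{(i)}))$ by running the update algorithm (Corollary~\ref{coro:update-rotation-vec})'': a band in which outliers are a sub-constant fraction is precisely one where the robust gradient step provably improves $\hat a_{(i)}$ (Lemma~\ref{lemma:robust-update-step} applies once $\eps_i\le 2^{-14}\delta_i$), so for the iterates to which the lemma is actually applied every band is outlier-saturated up to a constant, and then your first regime's conversion covers all pairs with $\alpha=2c_1/c_3^2$. That algorithmic exclusion of the uncorrupted-dominated regime is the missing ingredient in your write-up; without it (or some equivalent extra hypothesis relating $\delta_i$ to $\eps_i$) the claim cannot be closed by the intersection-sum lemma alone.
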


\begin{proof}
Let $S_N(\hat a_{(i)})$ be the set of outliers that in the halfspace $\{x:x\cdot \hat a_{(i)}>1\}$. We know the fraction of all outliers is at most $\eps$. So we have $\mathrm{frac}\left(\bigcup S_N(\hat a_{(i)})\right) \le \eps$. We may assume that $\eps_i\ge c_3\mathrm{frac}(S(\hat a_{(i)}))$. Otherwise, we can reduce $\mathrm{frac}(S(\hat a_{(i)}))$ by running the update algorithm in Section~\ref{sec:update-rotation-vec} (Corollary~\ref{coro:update-rotation-vec}). Then 
\[
\mathrm{frac}(S_N(\hat a_{(i)})\cap S_N(\hat a_{(j)}))\le \mathrm{frac}(S(\hat a_{(i)})\cap S(\hat a_{(j)}))\le 2c_1\mathrm{frac}(S(\hat a_{(i)}))\mathrm{frac}(S(\hat a_{(j)}))\le\frac{2c_1}{c_3^2} \eps_i\eps_j.
\]
Set $\alpha=\frac{2c_1}{c_3^2}$, we get 
\[
\mathrm{frac}(S_N(\hat a_{(i)})\cap S_N(\hat a_{(j)})) \le \alpha \mathrm{frac}(S_N(\hat a_{(i)}))\mathrm{frac}(S_N(\hat a_{(j)})).
\]
Then by Lemma~\ref{lem:intersection-sum}, $\sum \eps_i=\sum_{i\in[d]}\mathrm{frac}(S_N(\hat a_{(i)}))\le\frac{\eps}{1-\alpha\eps}$. If $\eps\le \frac{c_3^2}{4c_1}$, then $1-\alpha\eps > 1/2$ and hence $\sum \eps_i\le 2\eps$.
\end{proof}

\begin{lemma}\label{lemma:sum-of-removal-is-small}
Let $X_i$ be the subset of corrupted sample $X$ after removing samples in Step 2(a) of Algorithm~\ref{alg:rotation} and 
$S(\hat a_{(i)})=\{x:x\cdot \hat a_{(i)}>1, x \in X_i\}\cup \{-x:x\cdot \hat a_{(i)}<-1, x\in X_i\}$.
 Let $\eta_i$ be the fraction of original uncorrupted sample in $S(\hat a_{(i)})$ that are removed by the adversary or the algorithm.
Suppose for all $i$, $\hat a_{(i)}\cdot e_i\ge 0.99$ and
for all $i,j$,
\[
\frac{1}{n}|S(\hat a_{(i)})\cap S(\hat a_{(j)})| \le \frac{2c_1}{n^2}|S(\hat a_{(i)})||S(\hat a_{(j)})|
\]
then
$\sum_{i\in[d]}\eta_i\le 2\eps$.
\end{lemma}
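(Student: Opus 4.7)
The plan is to mirror the structure of Lemma~\ref{lemma:sum-of-noise-is-small}, replacing ``outliers lying in the band'' by ``uncorrupted samples missing from the band''. Concretely, for each $i$ I would define $S_R(\hat a_{(i)})=\{y:|y\cdot\hat a_{(i)}|>1,\ y\text{ is an original uncorrupted sample absent from }X_i\}$ (and its reflection, as in the definition of $S(\hat a_{(i)})$), so that $\eta_i=\mathrm{frac}(S_R(\hat a_{(i)}))$. As in Lemma~\ref{lemma:sum-of-noise-is-small}, I would begin with the WLOG reduction $\eta_i\ge c_4\,\mathrm{frac}(S(\hat a_{(i)}))$ for every $i$; otherwise the corruption in direction $i$ is already dominated by $\eps_i$ and a further call to the single-coordinate update of Corollary~\ref{coro:update-rotation-vec} shrinks $\mathrm{frac}(S(\hat a_{(i)}))$ until the hypothesis is met.

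Next I would establish the pairwise intersection bound. Since $\hat a_{(i)}\cdot e_i\ge 0.99$ for every $i$, distinct rows satisfy $|\hat a_{(i)}\cdot \hat a_{(j)}|\le 1/2$; combining this near-orthogonality with sample concentration of the uniform distribution on $[-1,1]^d$ and the logconcave-band bound Lemma~\ref{lem:intersection_of_bands} applied to the isotropized 2D projection onto $\mathrm{span}(\hat a_{(i)},\hat a_{(j)})$, the original uncorrupted mass in $\{|y\cdot\hat a_{(i)}|>1\}\cap\{|y\cdot\hat a_{(j)}|>1\}$ is at most a universal constant times $\mathrm{frac}(S(\hat a_{(i)}))\cdot\mathrm{frac}(S(\hat a_{(j)}))\cdot n$. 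Since $S_R(\hat a_{(i)})\cap S_R(\hat a_{(j)})$ is contained in this intersection and the WLOG step gives $\mathrm{frac}(S(\hat a_{(i)}))\le c_4^{-1}\eta_i$, I would obtain
\[
\mathrm{frac}\bigl(S_R(\hat a_{(i)})\cap S_R(\hat a_{(j)})\bigr)\le \alpha\,\eta_i\eta_j
\]
for a universal constant $\alpha$.

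For the union bound $\mathrm{frac}\bigl(\bigcup_i S_R(\hat a_{(i)})\bigr)\le O(\eps)$, the adversary contributes at most $\eps n$ uncorrupted deletions, and for algorithm-induced deletions inside iteration $i$'s processing I would use a rotation-setting analog of Lemma~\ref{lemma:delete-noise}: by Lemma~\ref{fact:bound-out} the expected number of uncorrupted samples falling in any flagged intersection $S(a_{(i)}^t)\cap S(\hat a_{(j)})$ is at most half the threshold triggering the density check, so at least half of the points removed in each flagged event must be outliers. Each uncorrupted sample that first enters $\bigcup_i S_R(\hat a_{(i)})$ via an algorithmic deletion can then be charged to a distinct outlier in the same flagged intersection, so the distinct contribution of algorithmic removals is at most the total outlier mass $\le\eps n$.

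Finally I would invoke Lemma~\ref{lem:intersection-sum} with union size $O(\eps)$ and intersection constant $\alpha$ to conclude $\sum_i\eta_i\le O(\eps)/(1-O(\eps))\le 2\eps$ for sufficiently small $\eps$. The main obstacle I foresee is the union-bound step: because the ``remove-then-restore'' cycle inside each outer iteration can delete the same uncorrupted point in multiple iterations, the charging argument must be executed carefully so that each uncorrupted sample enters $\bigcup_i S_R(\hat a_{(i)})$ only once and its first removal is paired with a fresh outlier in a flagged intersection, keeping the total distinct charge $O(\eps)$ rather than growing with $d$.
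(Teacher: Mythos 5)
Your route also differs from the paper's (the paper does not invoke Lemma~\ref{lem:intersection-sum} for this lemma at all, but argues by a direct counting/charging argument with a case split), and your version has a genuine gap at the union-bound step. You claim that in every flagged intersection at least half of the removed points are outliers, citing Lemma~\ref{fact:bound-out}; but Lemma~\ref{fact:bound-out} controls only a single band (the intersection needs Lemma~\ref{lem:intersection_of_bands}), and, more importantly, the density-check threshold is $\frac{2c_1}{n^2}|S(a_{(i)}^t)||S(\hat a_{(j)})|$, stated in terms of the \emph{observed, post-removal} band sizes. If the adversary has already deleted a large fraction of the uncorrupted points in band $i$ (and planted few outliers there), then $|S(a_{(i)}^t)|/n$ can be far smaller than the true band mass $\Theta(\delta_i)$, the threshold can drop below the true uncorrupted mass $\Theta(\delta_i\delta_j)$ of the two-band intersection, and a flagged intersection can then consist mostly of uncorrupted samples; your charging of each algorithmically removed true point to a fresh outlier breaks exactly there. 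Your WLOG reduction only yields $\mathrm{frac}(S(\hat a_{(i)}))\le c_4^{-1}\eta_i$, which is the wrong direction to rule this out. The paper handles this failure mode by splitting on whether $\delta_i=\norm{\hat a_{(i)}-e_i}\ge 8\eps_i'$ (with $\eps_i'$ the adversarially removed fraction in the band): if so, the observed band sizes are within a constant of the true ones, so a constant fraction ($17/49$) of every flagged removal consists of outliers and the removed true mass is at most $\tfrac{32}{17}\eps<2\eps$; if not, the entire band contains only $\approx 2\eps_i'$ fraction of true points by Lemma~\ref{fact:bound-out}, and summing $\sum_i\eps_i'\le\eps$ bounds everything crudely. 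Some analogue of this dichotomy is needed before your union bound (and hence Lemma~\ref{lem:intersection-sum}) can be applied; the remove-then-restore double-counting issue you single out is a secondary bookkeeping concern by comparison.

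A smaller, fixable issue: your pairwise bound $\mathrm{frac}\bigl(S_R(\hat a_{(i)})\cap S_R(\hat a_{(j)})\bigr)\le\alpha\,\eta_i\eta_j$ also passes through the observed sizes $\mathrm{frac}(S(\hat a_{(i)}))$, whereas Lemma~\ref{lem:intersection_of_bands} bounds the intersection by the \emph{true} band measures, i.e.\ by $O(\delta_i\delta_j)$. To make this step correct you should derive $\delta_i\le C\eta_i$ directly (which does follow from your WLOG combined with a depleted/not-depleted dichotomy on the band), rather than relying on $\delta_i\lesssim\mathrm{frac}(S(\hat a_{(i)}))$, which again can fail after heavy adversarial removal.
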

\begin{proof}
If $\norm{\hat a_{(i)}-e_i} = \delta_i\ge 8\eps_i'$ and $\norm{\hat a_{(j)}-e_j} = \delta_j\ge 8\eps_j'$ where $\eps_i'$ is the fraction of original uncorrupted sample deleted by the adversary, then by Lemma~\ref{lem:intersection_of_bands}, the number of original uncorrupted sample in $\{x:x\cdot \hat a_{(i)}>1,x\cdot \hat a_{(j)}>1\}$ is at most $c_1\delta_i\delta_j$. The number of points that are deleted by the algorithm in $\{x:x\cdot \hat a_{(i)}>1,x\cdot \hat a_{(j)}>1\}$ is at least $2c_1(\delta_i-\eta_i)(\delta_j-\eta_j)$. Then the fraction of original uncorrupted sample in the deleted points is at most 
\[
\frac{c_1\delta_i\delta_j}{2c_1(\delta_i-\eps_i')(\delta_j-\eps_j')}\le \frac{1}{2(1-1/8)^2}\le 32/49.
\]
In this case, at least 17/49 of the points that are deleted by the algorithm are outliers. So the number of deleted original uncorrupted sample is at most $\frac{32}{17}\eps<2\eps$. 

If $\norm{\hat a_{(i)}-e_i} < 8\eps_i'$ (or $\norm{\hat a_{(j)}-e_j} < 8\eps_j'$), then we can upper bound the total number of original uncorrupted sample in $\{x:x\cdot \hat a_{(i)}>1\}$ by $8\eps_i'/4$. So the number of true points we deleted in this case is at most $\frac{8}{4}\sum_i\eps_i'\le 2\eps$.
\end{proof}

Combining Lemmas~\ref{lemma:sum-of-noise-is-small} and \ref{lemma:sum-of-removal-is-small} gives the conclusion of Proposition~\ref{prop:sum-of-noise}.

\subsection{Update rotation vector by the mean of outside samples}
\label{sec:update-rotation-vec}

As in the previous sections, we assume $a$ is the current rotation vector and $e$ is the true rotation vector such that $\norm{a-e}=\delta$. We first give a non-robust algorithm that improves $a$. Given a sample $X=\{x^{(1)},\dots,x^{(n)}\}$ and a unit vector $a$, the algorithm computes $\mu$ of the subset of the sample
\[
\{x: x\cdot a >1\}\cup\{-x: x\cdot a <-1\}
\]  and outputs 
\[
a'= \frac{a-\beta\mu}{\norm{a-\beta\mu}}.
\]

\begin{lemma}\label{lemma:non-robust-update}
If the step size $\beta \le 2c\delta/\norm{\mu}^2$, then the non-robust algorithm outputs a unit vector $a'$ such that $a'\cdot e - a\cdot e\ge \beta\delta/32$.
\end{lemma}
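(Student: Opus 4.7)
The plan is a direct first-order analysis of the update $a\mapsto a' := v/\norm{v}$ where $v := a - \beta\mu$. I would write
\[
f(\beta) := a'\cdot e = \frac{(a\cdot e) - \beta(\mu\cdot e)}{\sqrt{1 - 2\beta(a\cdot\mu) + \beta^2\norm{\mu}^2}}
\]
and differentiate at $\beta = 0$ to get $f'(0) = (a\cdot e)(a\cdot\mu) - (\mu\cdot e)$. The step-size hypothesis is calibrated precisely so that the linear term $\beta f'(0)$ in the Taylor expansion dominates the quadratic remainder, and the entire content of the lemma is the lower bound $f'(0) \gtrsim \delta/32$, which in turn is where Lemma~\ref{fact:mean-proj-bound} enters.

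For that lower bound I would combine three ingredients. First, $\norm{a-e} = \delta$ gives $a\cdot e = 1 - \delta^2/2$. Second, by the $x\mapsto -x$ symmetry of the cube the mean $\mu$ of the symmetrized slab $S(a) = \{x:x\cdot a>1\}\cup\{-x:x\cdot a<-1\}$ coincides with the one-sided conditional mean $\E[x\mid x\cdot a>1]$, so Lemma~\ref{fact:mean-proj-bound} applies and yields $\mu\cdot e \le 1 - \delta/32$. Third, $a\cdot\mu = \E[x\cdot a\mid x\cdot a>1] \ge 1$ trivially from the conditioning region. Plugging these in,
\[
f'(0) \ge (1-\delta^2/2)\cdot 1 - (1-\delta/32) = \delta/32 - \delta^2/2,
\]
which is at least $\delta/32$ up to an $O(\delta^2)$ correction absorbable by the warm-start hypothesis $\delta\le 0.1$ (at the cost of mildly shrinking the claimed constant).

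Finally I would bound the quadratic remainder. Expanding $\norm{v}^2 = 1 - 2\beta(a\cdot\mu) + \beta^2\norm{\mu}^2$ and using $a\cdot\mu\ge 1$ together with the step-size hypothesis $\beta\norm{\mu}^2\le 2c\delta$ gives $\norm{v}^2 \le 1 - 2\beta(1-c\delta)$; the standard bounds $\sqrt{1-x}\le 1-x/2$ and $1/\sqrt{1-x}\ge 1+x/2$ for small $x$ then yield $1/\norm{v}\ge 1 + \beta(1-c\delta) + O(\beta^2)$. Multiplying this by the lower bound on $v\cdot e$ and subtracting $a\cdot e$ produces the claim, modulo constant bookkeeping. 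The main obstacle is precisely this bookkeeping: one must choose the absolute constant $c$ small enough that all three correction terms (the $\beta^2\norm{\mu}^2$ piece controlled by $c$, the stray $\delta^2/2$ shortfall in $f'(0)$, and the $O(\beta^2)$ remainder from the square-root expansion) are simultaneously strictly dominated by the target linear gain $\beta\delta/32$ identified via Lemma~\ref{fact:mean-proj-bound}. This is routine but requires care, since the bound has no slack beyond the constant $1/32$.
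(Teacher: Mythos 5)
There is a genuine gap, and it is precisely in the step you dismiss as "bookkeeping": treating the $\beta^2\norm{\mu}^2$ term as a quadratic remainder dominated by the linear gain. The hypothesis allows (and the algorithm actually uses) a step size at the top of the range, $\beta = \Theta(\delta/\norm{\mu}^2)$, so $\beta^2\norm{\mu}^2 = \Theta(\beta\delta)$ --- the \emph{same} order as the target gain $\beta\delta/32$, not a lower-order correction. Concretely, with only the trivial bound $a\cdot\mu \ge 1$, your absorption $\norm{v}^2 \le 1-2\beta+\beta\cdot\beta\norm{\mu}^2 \le 1-2\beta(1-c\delta)$ costs a term $c\beta\delta$ in the linear coefficient; the net first-order gain becomes $\beta\bigl(\delta/32 - c\delta - \delta^2/2\bigr)$, which with the paper's constant $c=1/32$ is non-positive: the loss from the uncancelled quadratic term exactly eats the $\delta/32$ coming from Lemma~\ref{fact:mean-proj-bound}. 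Your fallback of "choosing $c$ small enough" does not rescue the statement, since $c$ is fixed by the lemma (and by how $\beta$ is set in Algorithm~\ref{alg:rotation} and Corollary~\ref{coro:update-rotation-vec}), and even with a smaller $c$ you would only prove a strictly weaker constant than $1/32$.

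The paper's proof closes this hole with an ingredient you omit: the complementary estimate $\mu\cdot a \ge 1+c\delta$, i.e.\ the conditional mean along $a$ exceeds the threshold $1$ by $\Omega(\delta)$ (a consequence of the logconcave truncated-mean bounds, as in the computation $\mu_y \ge \tfrac{2}{5}\delta$ inside Lemma~\ref{fact:bound-out} or Lemma~\ref{lem:positive_mean}). With this, the denominator satisfies $-2\beta(a\cdot\mu)+\beta^2\norm{\mu}^2 \le -2\beta(1+c\delta)+2c\beta\delta = -2\beta$, so the quadratic term is cancelled at no cost to the linear coefficient, and the full numerator gain $c\beta\delta$ from $\mu\cdot e \le 1-c\delta$ survives. (Compare the robust version, Lemma~\ref{lemma:robust-update-step}, where only $\tilde\mu\cdot a>1$ is used and, accordingly, the step cap is halved and the guaranteed gain degrades to $\beta\delta/128$ --- exactly the trade-off your route forces.) A secondary, smaller issue: your claim that the $\delta^2/2$ shortfall is absorbable under $\delta\le 0.1$ is not right for the constant $1/32$ as such (one needs $\delta\le 1/16$, or to invoke that the warm start actually gives $\delta = O(\sqrt{\eps})$), but that is minor compared to the missing $\mu\cdot a \ge 1+c\delta$ bound.
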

\begin{proof}
By the definition of $a'$, we have
\[
a'\cdot e = \frac{a\cdot e -\beta \mu\cdot e}{\sqrt{(a-\beta \mu)(a-\beta\mu)}}
\]
Since $\norm{a-e}=\delta$ and $a,e$ are unit vectors, $a\cdot e = (2-\norm{a-e}^2)/2 = 1-\delta^2/2$. By Fact~\ref{fact:mean-proj-bound}, we know $\mu\cdot a\ge 1+c\delta$ and $\mu\cdot e\le 1-c\delta$. Plugging into the right hand side of the equation above, we get
\begin{align*}
a'\cdot e
& \ge \frac{1-\delta^2/2 -\beta(1-c\delta)}{\sqrt{1-2\beta(1+c\delta) +\beta^2\norm{\mu}^2}}\\
& \ge \frac{1-\beta+c\beta\delta-\delta^2/2}{\sqrt{1-2\beta}}\\
&\ge (1-\beta+\beta\delta/32-\delta^2/2)(1+\beta+\frac{3}{2}\beta^2)\\
&\ge 1-\delta^2/2+\beta\delta/32
\end{align*}
where the second inequality follows from our assumption that $\beta \le 2c\delta/\norm{\mu}^2$. Thus $a'\cdot e - a\cdot e\ge \beta\delta/32$.
\end{proof}

\begin{algorithm}
\label{alg:update-step-non-robust}
\textbf{Input}: corrupted sample $x_1,\dots,x_n$ and a unit vector $a$ such that $\norm{a-e}=\delta$.

\textbf{Output}: a unit vector $a'$

\textbf{Parameters}: step size $\beta>0$
\begin{enumerate}
\item \label{step:mean} Compute the robust mean $\tilde\mu$ of the following set using the algorithm in Theorem~\ref{thm:robust-mean} 
\[
\{x: x\cdot a >1\}\cup\{-x: x\cdot a <-1\}
\]
\item Output 
\[
a'= \frac{a-\beta\tilde \mu}{\norm{a-\beta \tilde\mu}}
\]
\end{enumerate}
\end{algorithm}

Then the robust version of the algorithm is to replace the mean $\mu$ in Step~\ref{step:mean} by the robust mean $\tilde\mu$. We will use the robust mean estimation algorithm for bounded covariance distributions in Theorem~\ref{thm:robust-mean}.

Suppose the number of corrupted sample in the subset $\{x:x\cdot a >1\}$ is $\eps_i n$. With the similar argument as in the proof of Lemma~\ref{lemma:non-robust-update}, we can show the following result.
\begin{lemma}
\label{lemma:robust-update-step}
    If $\eps_i\le 2^{-14}\delta$ and $\beta\le \frac{\delta}{32\norm{\tilde\mu}^2}$, then the robust version of the algorithm outputs a unit vector $a'$ such that $a'\cdot e_i - a \cdot e_i \ge \frac{1}{128}\beta\delta$.
\end{lemma}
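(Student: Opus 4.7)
The plan is to adapt the proof of Lemma~\ref{lemma:non-robust-update} to the robust setting by carefully tracking the additional error introduced when the true mean $\mu$ of the subset $S(a)$ is replaced by the robust estimate $\tilde\mu$. The first step would be to apply Theorem~\ref{thm:robust-mean} to the uncorrupted uniform distribution on $S(a) \cap H$, which requires bounding the operator norm of its covariance $\Sigma$. Lemma~\ref{fact:variance-in-coordinate} already bounds the variance in the direction $e_i$ by $O(\delta^2)$, and an essentially identical calculation bounds the variance in the direction $a$ by $O(\delta^2)$ as well; in all other directions, the conditional variance is trivially $O(1)$ since $S(a)\subset[-1,1]^d$. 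Hence $\|\Sigma\|_{op} = O(1)$ and Theorem~\ref{thm:robust-mean} produces an estimate with $\|E\|_2 \defeq \|\tilde\mu - \mu\|_2 \le C\sqrt{\eps_i}$.

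Given $E$, I would retrace the chain of inequalities used for the non-robust update. Writing
\[
a' \cdot e_i = \frac{a\cdot e_i - \beta\mu\cdot e_i - \beta E\cdot e_i}{\sqrt{\|a-\beta\mu\|^2 - 2\beta(a-\beta\mu)\cdot E + \beta^2\|E\|^2}},
\]
I would use $\mu\cdot a \ge 1+c\delta$ and $\mu\cdot e_i \le 1 - c\delta$ (as in Lemma~\ref{lemma:non-robust-update} via Fact~\ref{fact:mean-proj-bound}) together with $|E\cdot e_i|,|E\cdot(a-\beta\mu)|\le \|E\|_2$ to expand and simplify exactly as in the non-robust argument. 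The numerator picks up an additive error of order $\beta\|E\|_2$ and the denominator picks up a multiplicative error of order $\beta\|E\|_2/\sqrt{1-2\beta}$. Carrying these through the same algebraic manipulations should yield
\[
a'\cdot e_i \ge 1 - \tfrac{\delta^2}{2} + \tfrac{\beta\delta}{32} - O(\beta\|E\|_2).
\]
Combined with the step-size bound $\beta \le \delta/(32\|\tilde\mu\|^2)$, which controls the denominator at $\sqrt{1-2\beta}\ge 1/\sqrt{2}$, the final step is to verify that the loss term $O(\beta\|E\|_2)$ is absorbed into the gap $\tfrac{\beta\delta}{32} - \tfrac{\beta\delta}{128} = \tfrac{3\beta\delta}{128}$.

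The main obstacle is precisely this absorption. The crude bound $\|E\|_2 \le C\sqrt{\eps_i}$ under $\eps_i \le 2^{-14}\delta$ only yields $\|E\|_2 \le C\cdot 2^{-7}\sqrt{\delta}$, which is not obviously dominated by $c\delta$ when $\delta$ is small. The natural fix is to exploit the fact that the covariance of the truncated distribution has operator norm $O(\delta^2)$ when restricted to the two-dimensional subspace $\mathrm{span}\{a,e_i\}$ that actually controls the update, as recorded by Lemma~\ref{fact:variance-in-coordinate} and its analogue in direction $a$. Concretely, I would apply Theorem~\ref{thm:robust-mean} to the one-dimensional projections of the samples in $S(a)$ along $a$ and along $e_i$ separately; each such 1D application produces a direction-wise error of order $\sqrt{\eps_i}\cdot\delta$. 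Substituting these direction-specific error bounds for $|E\cdot e_i|$ and $|E\cdot a|$ (and using $\|\tilde\mu\|^2 = \Theta(1)$ to bound the denominator perturbation) reduces the requirement to $\sqrt{\eps_i}\le 1/O(1)$, which is comfortably implied by $\eps_i \le 2^{-14}\delta \le 2^{-14}$ for $\delta \le 1$. The output gain is then $a'\cdot e_i - a\cdot e_i \ge \beta\delta/128$, completing the proof.
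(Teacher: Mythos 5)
Your overall skeleton matches the paper's: expand $a'\cdot e_i = (a\cdot e_i - \beta\tilde\mu\cdot e_i)/\norm{a-\beta\tilde\mu}$, use $\mu\cdot e_i\le 1-\delta/32$ (Lemma~\ref{fact:mean-proj-bound}), and use the directional variance bound of Lemma~\ref{fact:variance-in-coordinate} so that the estimation error along $e_i$ scales with $\delta$ rather than with a constant, then absorb it into the $\beta\delta/32$ gain. But there is a genuine gap in how you quantify the estimation error. The corruption rate that enters Theorem~\ref{thm:robust-mean} is not $\eps_i$: in this lemma $\eps_i n$ is the number of outliers in $S(a)$ measured against the \emph{full} sample of size $n$, while the estimator is run only on $S(a)$, which by Lemma~\ref{fact:bound-out} contains only $\Theta(\delta)n$ points (at least $\delta n/5$). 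The relevant corruption fraction is therefore $c\le 5\eps_i/\delta$, and the directional error is $O(\sqrt{\eps_i/\delta})\,\delta$, not $O(\sqrt{\eps_i})\,\delta$. This is precisely where the hypothesis $\eps_i\le 2^{-14}\delta$ is consumed: it makes $c$ a small absolute constant so that $O(\sqrt c)\,\delta$ fits under the $\delta/32$ slack (the paper's final estimate is $\beta\bigl((1/32-\sqrt c)\delta-\delta/64-\delta^2/2\bigr)\ge\beta\delta/128$). Your conclusion that only ``$\sqrt{\eps_i}=O(1)$'' is needed misreads the hypothesis and skips the $|S(a)|\ge\delta n/5$ accounting, which is one of the two quantitative pillars of the proof (the other being the variance bound, which you do have).

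A second problem is the mechanism you propose for the directional bounds: running Theorem~\ref{thm:robust-mean} on one-dimensional projections along $a$ and along $e_i$ cannot be carried out, since $e_i$ is the unknown direction being learned, and in any case the update $a'=(a-\beta\tilde\mu)/\norm{a-\beta\tilde\mu}$ uses a single $d$-dimensional estimate $\tilde\mu$, so the directional error bounds must hold for that estimate, not for auxiliary one-dimensional estimators that never enter the algorithm. The paper instead applies the bounded-covariance robust mean estimator to $S(a)$ itself and invokes its guarantee in the fixed (unknown) direction $e_i$, namely $(\tilde\mu-\mu)\cdot e_i\le O(\sqrt c)\,\sigma(e_i)$ with $\sigma(e_i)=O(\delta)$ by Lemma~\ref{fact:variance-in-coordinate}; along $a$ it needs no estimation guarantee at all, only the trivial bound $\tilde\mu\cdot a>1$ (every point fed to the estimator satisfies this), and the denominator is controlled through the step-size condition $\beta\norm{\tilde\mu}^2\le\delta/32$ rather than through an unproven claim $\norm{\tilde\mu}=\Theta(1)$. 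With the corrected corruption rate $c\le 5\eps_i/\delta$ your algebra does close and yields the claimed $\beta\delta/128$ gain, so the repair is local, but as written the $\eps_i$-normalization and the projection device are the missing/incorrect steps.
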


\begin{proof}
By the definition of $a'$, we have
\[
a'\cdot e = \frac{a\cdot e -\beta \tilde\mu\cdot e}{\sqrt{(a-\beta \tilde\mu)(a-\beta\tilde\mu)}}
\]
Since $\norm{a-e}=\delta$ and $a,e$ are unit vectors, $a\cdot e = (2-\norm{a-e}^2)/2 = 1-\delta^2/2$. 
\begin{align*}
a'\cdot e_i
& \ge \frac{1-\delta^2/2 -\beta\tilde\mu\cdot e}{\sqrt{1-2\beta\tilde \mu \cdot a +\beta^2\norm{\mu}^2}}\\
&\ge (1-\delta^2/2 -\beta\tilde\mu\cdot e)(1+\beta\tilde\mu\cdot a -\beta^2\norm{\tilde\mu}^2/2+O(\beta^2))\\
& \ge 1-\delta^2/2+\beta(\tilde\mu\cdot a - \tilde\mu\cdot e)-\beta^2\norm{\tilde\mu}^2/2-\frac{\delta^2}{2}\beta\tilde\mu\cdot a 
\end{align*}
By Fact~\ref{fact:mean-proj-bound}, we know $\mu\cdot e_i\le 1-\delta/32$. Assuming that the fraction of corruption is $c$, we apply the algorithm in Fact~\ref{thm:robust-mean} to compute the robust mean $\tilde\mu$, which gives an error guarantee that $(\tilde\mu-\mu)\cdot e_i\le O(\sqrt{c})\sigma(e_i)$ where $\sigma^2(e_i)$ is the variance of $\{x:x\cdot a>1\}$ in the direction $e_i$. By Lemma~\ref{fact:variance-in-coordinate}, $\sigma^2(e_i)$ is at most $\delta$.
Thus we have $\tilde\mu\cdot e\le 1-(1/32-\sqrt c)\delta$. By Lemma~\ref{fact:bound-out}, the fraction of points in  $\{x:x\cdot a>1\}$ is at least $(\delta/5)n$. So the fraction of corruption is bounded by
\[
c \le \frac{\eps_i}{\delta/5}\le 5\cdot 2^{-14}
\]
where the last inequality follows from our assumption that $\eps_i\le 2^{-14}\delta$.
By the trivial upper bound $\tilde\mu\cdot a>1$ and our condition $\norm{\tilde\mu}^2\le \frac{\delta}{32\beta}$, we get 
\[
a'\cdot e_i - a\cdot e_i\ge 
\beta\left((\frac{1}{32}-\sqrt{c})\delta-\frac{\delta}{64}-\frac{\delta^2}{2}\right) \ge
\frac{1}{128}\beta\delta.
\]
\end{proof}

\begin{coro}
\label{coro:update-rotation-vec}
By repeatedly running Algorithm~\ref{alg:update-step-non-robust} $2^{12} d \log d$ times, one of $a_{(i)}^t$ satisfies 
\[
\norm{a_{(i)}^t-e_i}\le 2^{14}\eps_i.
\]
\end{coro}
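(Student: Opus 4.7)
The plan is to apply Lemma~\ref{lemma:robust-update-step} iteratively and obtain a geometric contraction of $\delta_t^2 := \norm{a_{(i)}^t - e_i}^2 = 2(1 - a_{(i)}^t\cdot e_i)$. Provided $\delta_t \ge 2^{14}\eps_i$, the corruption-fraction hypothesis $\eps_i \le 2^{-14}\delta_t$ of Lemma~\ref{lemma:robust-update-step} is satisfied. The algorithm's step size $\beta_t = c_2|S(a_{(i)}^t)|/(\norm{\tilde\mu_t}^2 n)$ satisfies the required bound $\beta_t \le \delta_t/(32\norm{\tilde\mu_t}^2)$ for a sufficiently small absolute constant $c_2$, using the upper bound $|S(a_{(i)}^t)|/n \le (1+\delta_t)\delta_t/4$ from Lemma~\ref{fact:bound-out}.

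Invoking Lemma~\ref{lemma:robust-update-step} then yields $a_{(i)}^{t+1}\cdot e_i - a_{(i)}^t\cdot e_i \ge \beta_t\delta_t/128$. Combined with the matching lower bound $|S(a_{(i)}^t)|/n \ge \delta_t/10$, where the factor-of-two loss absorbs the fraction $\eps_i + \eta_i \le 2^{-14}\delta_t$ of outliers and removals, this gives the per-step improvement $\delta_{t+1}^2 \le (1 - c_3/\norm{\tilde\mu_t}^2)\delta_t^2$ for a universal constant $c_3 > 0$.

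The main obstacle is establishing the bound $\norm{\tilde\mu_t}_2^2 = O(d)$, which governs the contraction rate and hence the iteration count. For the true symmetrized mean $\mu$ of the set $S(a_{(i)}^t)$, the trivial coordinate-wise bound $\norm{x}_\infty \le 1$ on the cube gives $\norm{\mu}_2 \le \sqrt{d}$. The robust-mean error $\norm{\tilde\mu_t - \mu}_2 \le O(\sqrt{c})\norm{\Sigma}_2^{1/2}$ from Theorem~\ref{thm:robust-mean}, with corruption fraction $c = O(\eps_i/\delta_t) \le 2^{-11}$ (by the lower bound in Lemma~\ref{fact:bound-out}) and $\norm{\Sigma}_2 = O(1)$ (the cube has per-direction variance $O(1)$, and conditioning on $x\cdot a > 1$ only shrinks the log-concave density), is subdominant, so $\norm{\tilde\mu_t}_2^2 = O(d)$.

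Telescoping $\delta_{t+1}^2 \le (1 - c_4/d)\delta_t^2$ from the initial $\delta_0 \le 0.1$ (guaranteed by the warm start in Step~1 of Algorithm~\ref{alg:rotation}) yields $\delta_T^2 \le \delta_0^2 \exp(-c_4 T / d)$, so $T = 2^{12}d\log d$ iterations suffice to reach $\delta_T \le 2^{14}\eps_i$ whenever $\eps_i$ is at least inverse-polynomial in $d$; this easily holds in the regime $n = \poly(d,1/\eps)$ since any nonzero empirical fraction over $n$ samples is at least $1/n$. The algorithm's argmin selection over $s^t = |\{x:|x\cdot a_{(i)}^{t+1}|>1\}|$ then picks some iterate $t \le T$ with the claimed bound $\norm{a_{(i)}^t - e_i} \le 2^{14}\eps_i$.
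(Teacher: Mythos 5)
Your proposal is essentially the paper's own proof: you iterate Lemma~\ref{lemma:robust-update-step}, turn the per-step gain $a'\cdot e_i-a\cdot e_i\ge\beta\delta/128$ into the geometric contraction $\delta_{t+1}^2\le(1-c/d)\delta_t^2$ via $\norm{\tilde\mu}^2=O(d)$ (you additionally spell out the step-size check $\beta\le\delta/(32\norm{\tilde\mu}^2)$ through Lemma~\ref{fact:bound-out}, which the paper leaves implicit), and count iterations until $\delta_t\le 2^{14}\eps_i$. The only difference is the final logarithmic bookkeeping: the paper bounds $\log\bigl(\delta/(2^{14}\eps_i)\bigr)\le\log d$ using $\delta\le\sqrt{\eps}$ and $\eps_i\ge\eps/d$, whereas you use $\eps_i\ge 1/n$, giving $O(\log n)=O(\log(d/\eps))$ --- equivalent in spirit (and matching the stated $2^{12}d\log d$ count up to constants when $\eps$ is a constant).
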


\begin{proof}
We start with $\norm{a-e_i}=\delta$. Each time we run the algorithm, 
\[
a'\cdot e_i-a\cdot e_i\ge\frac{1}{128}\beta\delta=\frac{\delta^2}{128\cdot 32\norm{\tilde\mu}^2}\ge \frac{\delta^2}{2^{12}d}
\]
by Lemma~\ref{lemma:robust-update-step} and the upper bound $\sqrt{d}$ on norm of the mean. Then $\norm{a'-e}/\norm{a-e}\le\sqrt{1-\frac{1}{2^{11}d}}$. The optimal error we can get by repeatedly running this algorithm is $2^{10}\eps_i$ from the condition of Lemma~\ref{lemma:robust-update-step}. When $\norm{a-e_i}>2^{14}\eps_i$, we have $\eps_i\le 2^{-14}\delta$ so the algorithm will output a better $a'$ as in Lemma~\ref{lemma:robust-update-step}. Suppose after $t$ steps, the error is reduced to $2^{14}\eps_i$. Then
\[
\left(\sqrt{1-\frac{1}{2^{11}d}}\right)^t \delta = 2^{14}\eps_i.
\]
Since $\delta\le \sqrt{\eps}$ and $\eps_i\ge \eps/d$, we get 
\[
t\le 2^{12} \cdot d\cdot \log\frac{\delta}{2^{14}\eps_i}\le 2^{12} d\log d.
\]
\end{proof}

\subsection{Proof of Theorem~\ref{thm:rotation}}
\label{sec:proof-of-rotation}

Let $\eps_i^t$ be the fraction of outliers in the set $\{x:x\cdot a_{(i)}^t>1\}$ and $\eta_i^t$ be the fraction of original uncorrupted sample removed from the set $\{x:x\cdot a_{(i)}^t>1\}$.

\begin{proof}[Proof of Theorem~\ref{thm:rotation}]
By Corollary~\ref{coro:update-rotation-vec}, we know there exists an $a_{(i)}^t$ for all $i$ such that $\norm{a_{(i)}^t - e_i}\le 2^{10}(\eps_i^t+\eta_i^t)$.
First we show that $A=(a_{(1)}^t,\dots,a_{(d)}^t)^\top$ is a good rotation matrix that $AH$ is close to $H$ in TV distance.
By Lemmas~\ref{lemma:sum-of-noise-is-small} and \ref{lemma:sum-of-removal-is-small}, 
\[
\sum_i(\eps_i^t+\eta_i^t) \le 4\eps.
\]
Let $S(a)=\{x:x\cdot a>1\}$ be a subset of samples after removing points in Step 2(a) of Algorithm~\ref{alg:rotation}. Then by Lemma~\ref{fact:bound-out},
\[
\sum_{i\in[d]} |S(a_{(i)}^t)| \le \sum_{i\in[d]}\frac{1}{4}\norm{a_{(i)}^t - e_i} n
\le \frac{1}{4}\sum_{i\in[d]}2^{10}(\eps_i^t+\eta_i^t)\le 2^{10}\eps n.
\]
Next, we prove $\hat A=(\hat a_{(1)},\dots,\hat a_{(d)})^\top$ is a good estimation of the true rotation matrix with respect to TV distance by showing the number of samples inside $\hat A H$ is at least $(1-c\eps)n$. The number of samples in $\hat AH$ is at least 
\[
n-\sum_i|S(\hat a_{(i)})| \ge n-\sum_i|S(a_{(i)}^t)| 
\ge (1-2^{10}\eps)n.
\]
where the first inequality follows from $|S(\hat a_i)|$ is the smallest over all $t$.
Since each iteration is polytime and the total number of iterations is $O(d^2 \log d)$, the overall complexity is polynomial in $d$.
\end{proof}

\section{General Case}
In this section, we give an algorithm to robustly learn an affine transformation of shift, scaling and rotation and prove our main result:
\label{sec:general}
\begin{theorem}
\label{thm:main-restate}
Given $X = \{x^{(1)},\dots,x^{(n)}\}$ an $\eps$-corrupted sample of points from an unknown cube $H=\{x\in\R^d: v_i^*\le a_{(i)}^*\cdot x\le u_i^*,\forall i\in[d]\}$, where $n \ge n_0 = \poly(d, 1/\eps)$, there exists a polynomial-time algorithm that outputs $\widehat H=\{x\in\R^d: \hat v_i\le \hat a_{(i)}\cdot x\le \hat u_i,\forall i\in[d]\}$ s.t. $d_{TV}(\widehat H,H)=O(\eps)$.
\end{theorem}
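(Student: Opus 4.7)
My plan is to combine Theorems~\ref{thm:shift-and-scaling} and~\ref{thm:rotation} via a preconditioning-plus-alternation strategy. An invertible affine map $x = As + b$ of the cube $[-1,1]^d$ is uniquely specified by the $d$ facet normals $a^*_{(i)}$ (the normalized rows of $A^{-1}$) together with the $2d$ half-space bounds $v_i^*, u_i^*$ along these normals, and these two groups of parameters are exactly what the rotation algorithm and the shift/scaling algorithm respectively estimate. The theorem will follow once we show how to estimate them jointly while controlling the interaction of their errors.

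First I would run a warm-start phase: use robust mean estimation (Theorem~\ref{thm:robust-mean}) to produce $\hat b_0$ with $\|\hat b_0 - b\|_2 = O(\sqrt{\eps})$, and robust ICA (Theorem~\ref{thm:robust-ica}) applied to the centered samples to produce $\hat A_0$ whose columns are $O(\sqrt{\eps})$-close in $\ell_2$ to those of $A$ up to a permutation and sign. Preconditioning the data by $y^{(j)} = \hat A_0^{-1}(x^{(j)} - \hat b_0)$ reduces the task to learning a parallelepiped whose facet normals are $O(\sqrt{\eps})$-close to the coordinate directions and whose center is $O(\sqrt{\eps})$-close to the origin. This is precisely the warm-start regime required by Proposition~\ref{prop:update-rotation-vec}, and the residual corruption fraction in the preconditioned sample remains $O(\eps)$ since invertible affine maps preserve TV distance.

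I would then alternate two refinement steps. In the \textbf{rotation step}, for each $i$ I run Algorithm~\ref{alg:rotation} on the preconditioned sample, but with the band $S(a) = \{x : x \cdot a > 1\}$ replaced by $\{x : x \cdot a > \hat u_i\} \cup \{-x : x \cdot a < -\hat v_i\}$ using the current bounds $\hat u_i, \hat v_i$; Propositions~\ref{prop:sum-of-noise} and~\ref{prop:update-rotation-vec}, combined with Lemma~\ref{fact:dtv-rotation}, yield refined facet normals $\hat a_{(i)}$ with $\sum_i \|\hat a_{(i)} - a^*_{(i)}\|_2 = O(\eps)$. In the \textbf{shift/scaling step}, I apply the affine change of coordinates determined by the current $\hat a_{(i)}$'s to the sample and run Algorithm~\ref{alg:shift-and-scaling} on the result, obtaining per-direction bounds $\hat u_i, \hat v_i$ whose induced parallelepiped is $O(\eps)$-TV-close to the corresponding transformed true cube (Theorem~\ref{thm:shift-and-scaling}).

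The main obstacle will be showing that the two phases do not amplify each other's residual error. Quantitatively, a facet-normal error of total $\ell_2$ size $\eta$ behaves like at most an $O(\eta)$-fraction of adversarial corruption for the shift/scaling checks (this follows from Lemma~\ref{fact:bound-out} applied coordinate-by-coordinate together with Lemma~\ref{fact:dtv-rotation}); symmetrically, an error $\eta$ in the current bounds $\hat u_i, \hat v_i$ shifts the bands $S(\hat a_{(i)})$ by an $O(\eta)$ fraction and hence introduces at most an $O(\eta)$ additive corruption in the robust mean $\tilde\mu$ used inside Algorithm~\ref{alg:rotation}. Because both phases are contractive in their own error metric whenever the other phase's residual error is $O(\eps)$, a constant number of alternations suffice to drive both error contributions down to $O(\eps)$. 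Finally, using Lemma~\ref{fact:dtv-rotation} for the facet-normal contribution and Theorem~\ref{thm:shift-and-scaling} for the per-direction bound contribution, the triangle inequality for total variation yields $d_{TV}(\widehat H, H) = O(\eps)$, completing the proof. Since each subroutine runs in polynomial time and the number of alternations is constant, the overall complexity remains $\mathrm{poly}(d, 1/\eps)$.
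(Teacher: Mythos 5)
Your overall architecture (robust-moment warm start followed by alternating the shift/scaling and rotation subroutines) is exactly the paper's Algorithm~\ref{alg:general}, but the two places where you wave at the analysis are precisely where the paper has to work, and as stated your argument does not close. First, reducing the cross-error to ``an $O(\eta)$ fraction of extra corruption'' is not enough to make the alternation converge: if the scaling step returns bounds with error $C_2\cdot(\text{rotation error})+O(\eps)$ and the rotation step returns normals with error $C_1\cdot(\text{scaling error})+O(\eps)$, the loop contracts only when $C_1C_2<1$, and treating a normal error $\delta_{r,i}$ as generic corruption in the one-dimensional density checks would multiply it by the (large) corruption-amplification constant of Theorem~\ref{thm:shift-and-scaling}, destroying any contraction. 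The paper avoids this by proving the sharper per-coordinate recursions of Propositions~\ref{prop:alg1-with-rotaion-error} and~\ref{prop:alg2-with-scaling-error}: the rotation error enters the scaling bound with coefficient $0.7$ (Lemma~\ref{fact:bound-out} caps the fraction of displaced points by roughly $\delta_i/8$, which is then played against the $\tfrac12$-density check in Lemmas~\ref{lemma:small-outside-general} and~\ref{lemma:small-inside-general}), while the scaling error enters the rotation bound with coefficient $1$ (the gradient-step analysis is redone at the shifted threshold $\hat u_i$ via Lemmas~\ref{lemma:true-mean-difference}, \ref{fact:variance-in-coordinate-general}, \ref{fact:variance-dir-a-general} and \ref{lem:tail-bound-dir-a}), so the composed map has factor $0.7<1$. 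Moreover, the additive terms $\eps_{s,i},\eps_{r,i}$ are controlled \emph{per coordinate} and shown to satisfy $\sum_i\eps_{s,i},\ \sum_i\eps_{r,i}=O(\eps)$ through the pairwise-intersection checks and Lemma~\ref{lem:intersection-sum}; your bookkeeping in terms of the aggregate error $\eta$ fed back as corruption in every coordinate would lose a factor of $d$ exactly where the paper cannot afford one.

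Second, your claim that a constant number of alternations suffices is false. The warm start leaves per-coordinate error $\Theta(\sqrt{\eps})$, and with a constant contraction factor per round you need $\Theta(\log(d/\eps))$ rounds before the residual of the initial error drops to the $\eps_i$ level (on average about $\eps/d$ per coordinate); after constantly many rounds the summed error would still be $\Theta(d\sqrt{\eps})$, which is not $O(\eps)$. The paper's proof of Theorem~\ref{thm:main-restate} runs the loop for $O(\log(d/\eps))$ iterations, stopping when at least $(1-c\eps)n$ samples lie in the current cube, and only then sums the per-coordinate bounds (\ref{eqn:rotation-error}) and (\ref{eqn:scaling-error}) to conclude $d_{TV}(\widehat H,H)=O(\eps)$. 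To repair your write-up you would need to state and prove quantitative analogues of Propositions~\ref{prop:alg1-with-rotaion-error} and~\ref{prop:alg2-with-scaling-error} with explicit cross-coefficients multiplying to less than one, and replace ``constant number of alternations'' by a logarithmic count.
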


\begin{algorithm}
\label{alg:general}
\begin{enumerate}
\item Run the robust mean and covariance estimation algorithms in Theorem~\ref{thm:robust-mean} and get estimates of the mean $\hat \mu$ and covariance $\widehat \Sigma$ of the cube with error 
\begin{align*}
    \norm{\hat\mu-\mu^*} &\le C\sqrt{\eps}\norm{\Sigma^*}_2^{1/2}\\
    \norm{\widehat\Sigma-\Sigma^*}_2 &\le C\sqrt{\eps}\norm{\Sigma^*}_2.
\end{align*}
\item Run the robust moment estimation algorithm in Theorem~\ref{thm:robust-ica} and get an estimate of the rotation matrix $\widehat A$ with unit rows $\hat a_{(1)},\dots,\hat a_{(d)}$.

\item Repeat the following steps until the number of samples in $\{x: \hat v_i\le\hat a\cdot x\le \hat u_i,\forall i\}$ is at least $(1-c\eps)n$.
\begin{enumerate}
\item Run Algorithm~\ref{alg:shift-and-scaling} with $\hat a_{(1)},\dots,\hat a_{(d)}$ being the coordinates and update upper bounds and lower bounds of the cube $\hat u_1,\dots,\hat u_d,\hat v_1,\dots,\hat v_d$
\item Run Algorithm~\ref{alg:rotation} with upper bounds and lower bounds $\hat u_1,\dots,\hat u_d,\hat v_1,\dots,\hat v_d$ and update $\hat a_{(1)},\dots,\hat a_{(d)}$.
\end{enumerate}
\item Output the cube $\widehat H = \{x: \hat v_i\le\hat a_{(i)}\cdot x\le \hat u_i,\forall i\}$.
\end{enumerate}
\end{algorithm}

We analyze our algorithm by the following two propositions. Proposition~\ref{prop:alg1-with-rotaion-error} shows that the shift and scaling algorithm (Step 3(a)) learns the upper and lower bounds of the facets with error $O(\eps)+0.7\delta$ where $\delta$ is the error in rotation matrix. Proposition\ref{prop:alg2-with-scaling-error} shows that the rotation algorithm (Step 3(b)) learns the rotation matrix with error $O(\eps)+\delta$ where $\delta$ is the error in shift and scaling part.

\begin{prop}\label{prop:alg1-with-rotaion-error}
Suppose $\norm{\hat a_{(i)}-a_{(i)}^*}=\delta_i\le C\sqrt{\eps}$. Algorithm~\ref{alg:shift-and-scaling} outputs $\hat u_1,\dots,\hat u_d,\hat v_1,\dots,\hat v_d$ such that 
\[
\frac{|\hat u_i - u_i^*|+|\hat v_i-v_i^*|}{u_i^*-v_i^*} \le 44\eps_i + 0.7\delta_i
\]
and $\sum_{i\in[d]}\eps_i\le 2\eps$.
\end{prop}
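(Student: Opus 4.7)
The plan is to follow the structure of the proof of Theorem~\ref{thm:shift-and-scaling} while quantifying the effect of using the perturbed directions $\hat a_{(i)}$ instead of the true facet normals $a_{(i)}^*$. By a change of basis, we may assume without loss of generality that $a_{(i)}^* = e_i$ and $H = \prod_i [v_i^*, u_i^*]$, so that each $\hat a_{(i)}$ is a unit vector with $\|\hat a_{(i)} - e_i\|_2 \le \delta_i$. Writing $\hat a_{(i)} = \sqrt{1 - \delta_i^2}\, e_i + w_i$ with $w_i \perp e_i$ and $\|w_i\|_2 = \delta_i(1+O(\delta_i))$, we have $\hat a_{(i)} \cdot x = \sqrt{1 - \delta_i^2}\, x_i + w_i \cdot \bar x$, where $x_i$ and $\bar x$ are independent under the uniform distribution on $H$.

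I would then derive two structural facts about the one-dimensional marginal of $\hat a_{(i)} \cdot x$. First, this marginal is the convolution of the (rescaled) uniform distribution on $[v_i^*, u_i^*]$ with an independent logconcave perturbation $w_i \cdot \bar x$ of standard deviation $O(\delta_i)$ after isotropization in Algorithm~\ref{alg:general}. It follows that (i) the marginal density is within a $(1 \pm O(\delta_i))$ factor of $1/(u_i^* - v_i^*)$ on the bulk $[v_i^* + c\delta_i, u_i^* - c\delta_i]$, and (ii) the support of the marginal exceeds $[v_i^*, u_i^*]$ by at most $O(\delta_i (u_i^* - v_i^*))$ on each side. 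The joint marginal of $(\hat a_{(i)} \cdot x, \hat a_{(j)} \cdot x)$ is analogously close to a product distribution, using the near-orthogonality $|\hat a_{(i)} \cdot \hat a_{(j)}| \le \delta_i + \delta_j + O(\delta_i \delta_j) < 1/2$ and Lemma~\ref{lem:intersection_of_bands} to bound the mass in the intersection of bands by a universal constant times the product of the marginal masses.

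With these facts in hand, I would replay the arguments of Lemmas~\ref{lemma:small-outside} and~\ref{lemma:small-inside} coordinate by coordinate, with the uniform-density estimates replaced by the bulk uniformity above. The one-dimensional density check in Step 3 of Algorithm~\ref{alg:shift-and-scaling} would fail whenever $|\hat u_i - u_i^*|$ or $|\hat v_i - v_i^*|$ exceeds $44 \eps_i + 0.7 \delta_i$ fraction of the width $u_i^* - v_i^*$, forcing the algorithm to iterate further; hence upon termination the per-coordinate error is bounded as claimed. The aggregate bound $\sum_i \eps_i \le 2\eps$ then follows by applying Lemma~\ref{lem:intersection-sum} exactly as in Lemma~\ref{lemma:sum-of-noise-is-small}, using the two-dimensional density check to bound the pairwise overlaps of the per-coordinate outlier sets and thus pigeonholing the total outlier mass across the $d$ coordinate updates.

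The main technical obstacle is obtaining the sharp constant $0.7$ multiplying $\delta_i$, which is necessary for the outer loop of Algorithm~\ref{alg:general} to contract when alternated with the rotation update. Achieving this constant requires a careful analysis of the convolution smearing near the boundary points $v_i^*, u_i^*$, using the logconcavity of the perturbation $w_i \cdot \bar x$ together with Lemmas~\ref{lem:logconcave-mean-density} and~\ref{lem:logconcave-moments} to control the marginal density in the transition region between the bulk and the extreme. A direct moment computation should then show that the threshold used in the one-dimensional density check in Algorithm~\ref{alg:shift-and-scaling} forces the refinement to stop within this sharp slack and does not overshoot.
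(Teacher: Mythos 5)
There is a genuine gap, and it is exactly at the quantitative heart of the proposition. The paper's proof (Lemmas~\ref{lemma:small-outside-general} and~\ref{lemma:small-inside-general}) is much more direct than your convolution analysis: by the one-dimensional density check (\ref{eqn:one-coord-max}), the slab $\{x: 1< \hat a_{(i)}\cdot x\le \hat u_i\}$ contains at least a $\tfrac12(\hat u_i-1)$ fraction of the sample; every such point is either an outlier or an uncorrupted point with $\hat a_{(i)}\cdot x>1$, and Lemma~\ref{fact:bound-out} bounds the latter fraction by $\delta_i/8$ per side. This gives $\tfrac12(\hat u_i-1)\le \delta_i/8+\eps_i$, and a simple case split at $\eps_i\le 0.05\,\delta_i$ yields $\hat u_i-1\le 0.35\,\delta_i$ or $\hat u_i-1\le 22\,\eps_i$, which is where the constants $0.7$ and $44$ come from after combining both facets. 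You explicitly defer this step (``a direct moment computation should then show\ldots''), and the route you sketch does not obviously deliver it. Worse, your structural fact (ii) is false as stated: the support of $\hat a_{(i)}\cdot x$ can exceed $[v_i^*,u_i^*]$ by as much as $\Theta(\sqrt{d}\,\delta_i)$, since $w_i\cdot \bar x$ ranges over $\pm\norm{w_i}_1$. What is true, and what the argument actually needs, is that the probability \emph{mass} beyond the tilted facet is $\Theta(\delta_i)$ with explicit constants --- which is precisely the content of Lemma~\ref{fact:bound-out}, not a statement about supports or bulk density ratios.

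A second missing piece is the undershoot case and the removal accounting. When $\hat u_i<1$, the deficit of sample points in $\{x:\hat u_i\le \hat a_{(i)}\cdot x\le 1\}$ must be charged to uncorrupted points removed either by the adversary or by the algorithm's own deletions in the two-dimensional check, plus the rotation-error mass; the paper controls the algorithm's deletions via Lemma~\ref{lemma:delete-noise} (at least half of deleted points are outliers, so at most $2\eps n$ true points are ever removed) and then aggregates with Lemma~\ref{lem:intersection-sum}. Your proposal only pigeonholes outlier mass across coordinates, so the $\eps_i$ appearing in the proposition --- which must absorb both the outlier fractions (overshoot side) and the removed-true-point fractions (undershoot side) --- is not fully accounted for. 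The aggregation you do include ($\sum_i\eps_i\le 2\eps$ via the two-dimensional check and Lemma~\ref{lem:intersection-sum}) matches the paper, but without Lemma~\ref{fact:bound-out} in place of your support/bulk-density claims and without the deletion bookkeeping, the stated per-coordinate bound $44\eps_i+0.7\delta_i$ is not established.
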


\begin{prop}\label{prop:alg2-with-scaling-error}
Suppose $\frac{|\hat u_i - u_i^*|+|\hat v_i-v_i^*|}{u_i^*-v_i^*}= \delta_i \le C\sqrt{\eps}$. Algorithm~\ref{alg:rotation} outputs $\hat A$ with unit rows $\hat a_{(1)},\dots,\hat a_{(d)}$ such that 
\[
\norm{\hat a_{(i)}-a_{(i)}^*}\le 2^{20}\eps_i+ \delta_i
\]
where $\sum_{i\in[d]}\eps_i\le 2\eps$.
\end{prop}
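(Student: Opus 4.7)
The plan is to reduce to Proposition~\ref{prop:update-rotation-vec} by first normalizing the sample using the current estimates $(\hat u_i,\hat v_i,\hat a_{(i)})$ so that the true cube becomes a small perturbation of the standard cube $[-1,1]^d$, and then re-examining the robust gradient analysis of Section~\ref{sec:update-rotation-vec} with an extra additive $O(\delta_i)$ bias in the truncated mean along each direction $e_i$.

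First I would apply the affine map that sends the estimated parallelepiped $\{x:\hat v_i\le \hat a_{(i)}\cdot x\le\hat u_i\}$ onto $[-1,1]^d$. The $\eps$-corruption of the sample stays an $\eps$-corruption. Under this map the true cube $H$ is sent to a parallelepiped $H'$ whose $i$-th pair of facets (in the $e_i$ direction, up to the $O(\sqrt{\eps})$ tilt inherited from Theorem~\ref{thm:robust-ica}) lies at $\pm(1+O(\delta_i))$. Thus in normalized coordinates the target rotation is the identity, and we can hope to run Algorithm~\ref{alg:rotation} as if the cube were exactly $[-1,1]^d$.

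The main technical step is to rerun the proof of Lemma~\ref{lemma:robust-update-step} with $H'$ in place of $[-1,1]^d$. Writing $\delta=\norm{a-e_i}_2$ for the current rotation gap, the set $\{x\in H':x\cdot a>1\}$ differs from $\{x\in[-1,1]^d:x\cdot a>1\}$ by slivers of width $O(\delta_i)$ adjacent to each facet. These slivers are concentrated at $x\cdot e_i\approx 1$, so the population mean $\mu$ of the enlarged set satisfies
\[
\mu\cdot e_i\;\le\;1-\delta/32+O(\delta_i),
\]
an additive perturbation of Lemma~\ref{fact:mean-proj-bound}. The robust mean error $\norm{\tilde\mu-\mu}_2\le O(\sqrt{c})\sigma(e_i)$ from Theorem~\ref{thm:robust-mean} remains of the same order, since the variance along $e_i$ is perturbed by only a constant factor from Lemma~\ref{fact:variance-in-coordinate}. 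Tracing these two changes through the chain of inequalities in Lemma~\ref{lemma:robust-update-step}, the per-iteration progress becomes
\[
a'\cdot e_i - a\cdot e_i \;\ge\; \tfrac{1}{128}\beta\delta \;-\; O(\beta\delta_i),
\]
which stays strictly positive as long as $\delta\ge C(\delta_i+\eps_i)$ for a suitable absolute constant $C$.

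Iterating $2^{12}d\log d$ times and picking the iterate minimizing the outside-band count as in Corollary~\ref{coro:update-rotation-vec} then yields $\norm{\hat a_{(i)}-e_i}_2\le 2^{20}\eps_i+\delta_i$ in normalized coordinates, and since undoing the normalization is an affine map of condition number $O(1)$ the same bound holds for $\norm{\hat a_{(i)}-a_{(i)}^*}_2$ (with any constant factor absorbed into $2^{20}$). Proposition~\ref{prop:sum-of-noise} supplies $\sum_i\eps_i\le 2\eps$. The main obstacle I foresee is the perturbed-volume calculation underlying the displayed bound on $\mu\cdot e_i$: one must show that the sliver $H'\setminus[-1,1]^d$ contributes only $O(\delta_i)$ to $\mu\cdot e_i$, rather than $O(\delta_i/\delta)$ which would degrade the dependence on $\delta_i$ from additive to multiplicative. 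This comes from the fact that the sliver has mass of order $\delta_i$ relative to the whole cube (not relative to the band of mass $\sim\delta$), together with the coordinate of its points being bounded by $1+O(\delta_i)$, and it requires a volume estimate analogous to Lemma~\ref{fact:bound-out} but for the thin slab around $x\cdot e_i\approx 1$.
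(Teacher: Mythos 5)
Your overall strategy is the same as the paper's: reduce to the per-vector robust gradient step of Section~\ref{sec:update-rotation-vec}, show that the gradient signal survives an $O(\delta_i)$ perturbation so that each iteration still makes progress whenever the rotation error $\delta$ exceeds a constant times $\delta_i+\eps_i$, iterate as in Corollary~\ref{coro:update-rotation-vec}, and import $\sum_i\eps_i\le 2\eps$ from Proposition~\ref{prop:sum-of-noise}. The packaging differs: you normalize by the current estimate so the threshold becomes exactly $1$ and the perturbation lives in the cube (a sliver $H'\setminus[-1,1]^d$), and you perturb Lemma~\ref{fact:mean-proj-bound} to $\mu\cdot e_i\le 1-\delta/32+O(\delta_i)$; the paper instead keeps the true cube standard, treats $\hat u_i=1+\Delta$ with $|\Delta|\le\delta/2$ as a shifted threshold, and proves dedicated perturbed lemmas: Lemma~\ref{lemma:true-mean-difference} (the signal $\mu\cdot\hat a_{(i)}-\mu\cdot e_i\ge\delta/64$ directly, rather than via an upper bound on $\mu\cdot e_i$), together with Lemmas~\ref{fact:variance-in-coordinate-general}, \ref{fact:variance-dir-a-general} and \ref{lem:tail-bound-dir-a} for the variance and mass of the shifted band. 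Your sliver computation is the right resolution of the obstacle you flag: the sliver has weight $O(\delta_i/\delta)$ inside the band but its excess in the $e_i$-coordinate over the rest of the band is only $O(\delta)$, so the damage to $\mu\cdot e_i$ is additive $O(\delta_i)$, matching the paper's conclusion.

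Two concrete points are missing. First, you never establish a lower bound on the fraction of genuine points in the perturbed band, which is the role of Lemma~\ref{lem:tail-bound-dir-a} ($\ge\delta/64$ even when the threshold overshoots the true facet by up to $\delta/2$). This is not a cosmetic issue: the robust-mean guarantee you invoke is $O(\sqrt{c})\,\sigma$, and $c$ is the corruption fraction \emph{inside the band}, i.e.\ $c\lesssim\eps_i/(\text{band mass})$. When $\hat u_i$ overestimates the facet, the band shrinks, and controlling the variance alone (your perturbed Lemma~\ref{fact:variance-in-coordinate}) does not control $c$. Your restriction to the regime $\delta\ge C(\delta_i+\eps_i)$ makes the fix available, but the mass estimate has to be proved, and it is a genuinely separate computation from the sliver bound you describe. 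Second, the claim that the constant from undoing the normalization can be "absorbed into $2^{20}$" is not quite right for the $\delta_i$ term: in the proof of Theorem~\ref{thm:main-restate} the coefficient of $\delta_i$ here is multiplied against the $0.7$ from Proposition~\ref{prop:alg1-with-rotaion-error} to obtain a contraction, so it must stay close to $1$ and cannot be hidden in the $\eps_i$ coefficient. You are saved only because the normalizing map is within $1+O(\sqrt{\eps})$ of an isometry (via Theorem~\ref{thm:robust-mean} and the warm start), not merely of bounded condition number; this needs to be said and tracked.
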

We can now prove the main theorem.
For the purpose of analysis, we assume that the true cube is the standard cube $H=[-1,1]^d$ in the remaining of Section~\ref{sec:general}.

\begin{proof}[Proof of Theorem~\ref{thm:main-restate}.]
Let $\delta_{s,i} = |\hat u_i-1|+|\hat v_i-1|$ be the scaling error in coordinate $i$ and
$\delta_{r,i} \le \norm{\hat a_{(i)} -e_i}$ be the rotation error in coordinate $i$.
By Theorems~\ref{thm:robust-mean} and \ref{thm:robust-ica}, we start with $\delta_{s,i} = |\hat u_i-1|+|\hat v_i-1|\le C\sqrt{\eps}$ and $\delta_{r,i}=\norm{\hat a_{(i)}-e_i}\le C\sqrt{\eps}$. 
By Proposition~\ref{prop:alg1-with-rotaion-error}, Step 3(a) of Algorithm~\ref{alg:general} improves $\delta_{s,i}$ to $\delta_{s,i}'\le 44\eps_{s,i}+0.7\delta_{r,i}$. 
With this updated $\delta_{s,i}$, by Proposition~\ref{prop:alg2-with-scaling-error}, Step 3(b) improves $\delta_{r,i}$ to $\delta_{r,i}' \le c\eps_{r,i}+\delta_{s,i}'\le 2^{20}\eps_{r,i}+44\eps_{s,i}+0.7\delta_{r,i}$. If $\delta_{r,i}\ge 220 \eps_{s,i} + 5\cdot 2^{20}\eps_{r,i}$, we have $\delta_{r,i}'\le 0.9 \delta_{r,i}$. Otherwise, we have 
\begin{equation}
\label{eqn:rotation-error}
\sum_{i\in[d]}\norm{\hat a_{(i)}-e_i} =\sum_{i\in[d]}\delta_{r,i} \le \sum_{i\in[d]} 220\eps_{s,i}+5\cdot 2^{20}\eps_{r,i}\le (440+10\cdot 2^{20})\eps
\end{equation}
where the last equality follows from the upper bounds on the sum of $\eps_{s,i}$ and $\eps_{r,i}$ in Propositions~\ref{prop:alg1-with-rotaion-error} and \ref{prop:alg2-with-scaling-error}. Then \begin{equation}
\label{eqn:scaling-error}
\sum_{i\in[d]}|\hat u_i-1|+|\hat v_i-1|=\sum_{i\in[d]}\delta_{s,i}\le \sum_{i\in[d]} 44\eps_{s,i}+0.7\delta_{r,i}\le (396+7\cdot 2^{20})\eps.
\end{equation}
Thus, adding up (\ref{eqn:rotation-error}) and (\ref{eqn:scaling-error}), we get $d_{TV}(\widehat H,H)= O(\eps)$. Next we show that the number of iterations is bounded.
We start with $\delta_{r,i}\le C\sqrt{\eps}$ and end with $\delta_{r,i}\ge \eps/d$. Suppose we run Step 3 for $t$ iterations. Then 
$0.9^t C\sqrt{\eps}\le \eps/d$, which implies $t\le C\log(d/\eps)$.
\end{proof}

\subsection{Proof of Proposition~\ref{prop:alg1-with-rotaion-error}}

In this section, we will prove Proposition~\ref{prop:alg1-with-rotaion-error}. The idea is similar with the proof of Theorem~\ref{thm:shift-and-scaling} in Section~\ref{sec:shift-and-scaling}. But we need to deal with the error from the rotation matrix besides outliers.  The following two lemmas are general versions of Lemmas~\ref{lemma:small-outside} and \ref{lemma:small-inside} with rotation error.
\begin{lemma}
\label{lemma:small-outside-general}
Suppose a cube $\widehat H=\{x:\hat v_i\le \hat a_{(i)}\cdot x\le \hat u_i\}$ satisfies (\ref{eqn:one-coord-max}),(\ref{eqn:one-coord-min}),(\ref{eqn:two-coord}) in Algorithm~\ref{alg:shift-and-scaling} and $\delta_i=\norm{\hat a_{(i)}-a_{(i)}}$, then with high probability, for all $\hat u_i>1$,
\[
\hat u_i-1\le 0.35\delta+22\eps_i
\]
and for all $\hat  v_i<-1$,
\[
-1- \hat v_i\le 0.35\delta+22\eps_i
\]
where $\eps_i>0$ and $\sum_{\{i:\hat u_i>1 \text{ or } \hat v_i<-1\}} \eps_i\le 2\eps$.
\end{lemma}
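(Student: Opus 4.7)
The plan is to adapt the proof of Lemma~\ref{lemma:small-outside} to the rotated setting, accounting for the fact that uncorrupted samples can now legitimately populate the strip $\{x:\hat a_{(i)}\cdot x>1\}$ with density $\Theta(\delta_i)$; in the original axis-aligned lemma the analogous strip lay entirely outside $H$ and so contained only outliers, whereas here we must carefully separate this ``rotation background'' from the outlier mass.

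Fix an index $i$ with $\hat u_i>1$ and set $c_i:=\hat u_i-1$. I would choose the smallest integer $k\in[d]$ with $\frac{k\eps}{d}(\hat u_i-\hat v_i)\ge c_i$, so that the strip
\[
S_{i,k}^{+}:=\Set{x:\hat a_{(i)}\cdot x\ge \hat u_i-\tfrac{k\eps}{d}(\hat u_i-\hat v_i)}
\]
extends down past $\hat a_{(i)}\cdot x=1$. The certificate~(\ref{eqn:one-coord-max}) then guarantees $|S_{i,k}^{+}|\ge \frac{k\eps}{2d}n$. Since uncorrupted samples live in $H=[-1,1]^d$ and the lower edge of the strip is at most $1$, the one-sided form of Lemma~\ref{fact:bound-out} (together with standard concentration, valid w.h.p.\ for $n=\poly(d/\eps)$) bounds the fraction of uncorrupted samples in $S_{i,k}^{+}$ by $(1+\delta_i)\delta_i/8+o(1)$. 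Letting $\eps_i$ denote the fraction of outliers lying in $S_{i,k}^{+}$, subtraction yields $\eps_i\ge \frac{k\eps}{2d}-(1+\delta_i)\delta_i/8-o(1)$, and combining with $c_i\le \frac{k\eps}{d}(\hat u_i-\hat v_i)$ and the invariant $\hat u_i-\hat v_i\le 2+o(1)$ that holds throughout the iterative algorithm (after the warm start) gives $c_i\le 0.35\delta_i+22\eps_i$ after rearrangement. The case $\hat v_i<-1$ is symmetric.

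For the sum bound $\sum_i\eps_i\le 2\eps$ (with the sum restricted to the coordinates where the estimate overshoots the truth), I would mimic the intersection argument in Lemma~\ref{lemma:small-outside}: the 2D density check~(\ref{eqn:two-coord}) forces $|S_{i,k}^{+}\cap S_{j,k'}^{+}|$ to be at most a constant times $|S_{i,k}^{+}|\,|S_{j,k'}^{+}|/n$, and this bound transfers to the outlier subsets, whose union has size at most $\eps n$. Applying Lemma~\ref{lem:intersection-sum} then yields $\sum\eps_i\le \eps/(1-\alpha\eps)\le 2\eps$ for a universal constant $\alpha$ and sufficiently small $\eps$.

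The main obstacle is obtaining the tight coefficient $0.35$ on $\delta_i$: a crude estimate such as $(1+\delta_i)\delta_i/8\le \delta_i/4$ will only produce a $0.5\delta_i$ term, so the proof must use the sharper expression $\mu_y/(4a_i)$ that appears inside the proof of Lemma~\ref{fact:bound-out}, in tandem with the algorithm's invariant $\hat u_i-\hat v_i\le 2+o(1)$. The coefficient $22$ on $\eps_i$ is slack and is chosen to keep the statement clean and uniform across the subsequent analysis.
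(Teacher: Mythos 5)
Your proposal follows essentially the same route as the paper: lower-bound the sample mass in the overshoot strip between level $1$ and $\hat u_i$ via the density certificate, upper-bound the uncorrupted mass there by the one-sided version of Lemma~\ref{fact:bound-out}, attribute the difference to an outlier fraction $\eps_i$, and control $\sum_i\eps_i$ by combining the two-dimensional check (\ref{eqn:two-coord}) with Lemma~\ref{lem:intersection-sum} applied to the outlier sets. The one place where you diverge is the constant bookkeeping, and there your proposed fix is not the paper's: the paper does not sharpen the $\mu_y/(4a_i)$ expression inside Lemma~\ref{fact:bound-out} at all. It simply writes $\frac{1}{2}(\hat u_i-1)\le \delta_i/8+\eps_i$ and then splits into two cases: if $\eps_i\le 0.05\,\delta_i$ the outlier term is absorbed to give $\hat u_i-1\le \delta_i/4+0.1\delta_i=0.35\delta_i$, and if $\eps_i>0.05\,\delta_i$ the whole overshoot is charged to $\hat u_i-1\le 22\eps_i$; since the lemma's bound is the \emph{sum} $0.35\delta_i+22\eps_i$, no sharpening of the geometric estimate is needed. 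This case split also quietly supplies the hypothesis you glossed over in the sum bound: transferring the pairwise intersection bound from the strips to their outlier subsets (so that Lemma~\ref{lem:intersection-sum} applies with a universal $\alpha$, the paper takes $\alpha=40\cdot 22^2$) requires the outliers to be a constant fraction of each strip's mass, which is exactly what the regime $\eps_i>0.05\,\delta_i$ (equivalently $\hat u_i-1\le 22\eps_i$) guarantees; in the complementary regime no outlier bound is needed because the overshoot is already $\le 0.35\delta_i$. With that case split inserted, your argument matches the paper's; without it, neither the coefficient $0.35$ nor the legitimacy of the Lemma~\ref{lem:intersection-sum} step is actually established by the steps you describe.
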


\begin{proof}
Suppose $\delta_i=\norm{\hat a_{(i)}-e_i}$ is the error of rotation vector in direction $e_i$ and $\eps_i$ is the fraction of outliers in the set $\{x:1< \hat a_{(i)}\cdot x\le \hat u_i\}$ for $\hat u_i>1$. By (\ref{eqn:one-coord-max}), we know that the fraction of samples in $\{x:1< \hat a_{(i)}\cdot x\le \hat u_i\}$ is at least $(1/2)(\hat u_i-1)$. These points not in $\hat H$ are either outliers or because of the rotation error. By Lemma~\ref{fact:bound-out}, the fraction of original uncorrupted sample in $\{x:1< \hat a_{(i)}\cdot x\le \hat u_i\}$ is at most $\delta/8$. So we have
\[
\frac{1}{2}(\hat u_i -1) \le \delta_i/8 + \eps_i.
\]
Now we consider the case $\eps_i \le 0.05 \delta_i$. Then we get for $i$ s.t. $\hat u_i>1$ and $\eps_i\le 0.05\delta_i$,
\begin{equation}
\label{eqn:large-delta}
\hat u_i-1\le \delta_i/4 + 0.1\delta_i = 0.35\delta_i.
\end{equation}
The above inequality also applies to $v_i$ by symmetry.
Next if $\eps_i> 0.05\delta_i$, we have $(\hat u_i -1) \le 22\eps_i$. Set $\alpha=40\cdot22^2$. Then by (\ref{eqn:two-coord}), we have the fraction of the intersection of $\{x:1< \hat a_{(i)}\cdot x\le \hat u_i\}$ and $\{x:1< \hat a_{(j)}\cdot x\le \hat u_j\}$ is at most $\alpha \eps_i\eps_j$. Since the fraction of corruption is at most $\eps$, we can apply Lemma~\ref{lem:intersection-sum} to outliers in $\{x:1< \hat a_{(i)}\cdot x\le \hat u_i\}\cup\{x: \hat v_i\le \hat a_{(i)}\cdot x< -1\}$.
and get an upper bound of $\sum \eps_i$ 
\[
\sum_{\{i:\hat u_i>1,\eps_i \le 0.05 \delta_i\}}\eps_i
\le \frac{\eps}{1-\alpha\eps}.
\]
If $\alpha\eps\le 1/2$, we get $\sum_{\{i:\hat u_i>1,\eps_i \le 0.05 \delta_i\}}\eps_i\le 2\eps$. Then for $i$ s.t. $\hat u_i>1$ and $\eps_i\le 0.05\delta_i$,
\begin{equation}\label{eqn:small-delta}
\hat u_i-1 \le 22\eps_i.
\end{equation}
Add (\ref{eqn:large-delta}) and (\ref{eqn:small-delta}) gives the conclusion.
\end{proof}

\begin{lemma}\label{lemma:small-inside-general}
With high probability, Step 3(a) of Algorithm~\ref{alg:general} outputs $\hat u_i$ and $\hat v_i$ for all $\hat u_i<1$,
\[
1-\hat u_i\le 0.35\delta+22\eta_i
\]
and for all $\hat  v_i>-1$,
\[
\hat v_i-(-1)\le 0.35\delta+22\eta_i
\]
where $\eta_i>0$ and $\sum_{\{i:\hat u_i>1 \text{ or } \hat v_i<-1\}} \eta_i\le 2\eps$.
\end{lemma}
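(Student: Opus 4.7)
The plan is to mirror the proof of Lemma~\ref{lemma:small-outside-general}, dualizing it to the case where the estimated facet is too close to the origin ($\hat u_i < 1$ or $\hat v_i > -1$) rather than too far. (Without loss of generality we assume the true rotation is the identity, so $H = [-1,1]^d$.) For each index $i$ with $\hat u_i < 1$, set $c_i := 1 - \hat u_i > 0$ and define the \emph{missing slab}
\[
T_i := \{x \in H : \hat u_i < \hat a_{(i)} \cdot x \le 1\},
\]
the portion of the true cube that lies outside $\widehat H$ in the direction $\hat a_{(i)}$. Let $\eta_i n$ denote the number of originally uncorrupted sample points that belong to $T_i$ but are absent from the final (post-adversary, post-algorithm) sample. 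The case $\hat v_i > -1$ is handled symmetrically.

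My first step is a volume estimate: by an integral computation analogous to the proof of Lemma~\ref{fact:bound-out}, the symmetric difference between $T_i$ and the axis-aligned slab $\{x\in H : \hat u_i < x_i \le 1\}$ (which has volume fraction exactly $c_i/2$) is bounded by $\delta_i/8$, so $\vol(T_i)/\vol(H) \ge c_i/2 - \delta_i/8$. A Chernoff bound with $n \ge \poly(d,1/\eps)$ then gives that the uncorrupted sample places at least $(c_i/2 - \delta_i/8)n$ points in $T_i$ with high probability.

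Next, I mirror the cumulative-shrink argument of Lemma~\ref{lemma:small-inside} in the rotated coordinates: since Algorithm~\ref{alg:shift-and-scaling} (with the coordinate directions replaced by $\hat a_{(i)}$) only shrinks $\hat u_i$, and each shrink from $\hat u_i^{(j-1)}$ to $\hat u_i^{(j)}$ was triggered by a failure of~\eqref{eqn:one-coord-max} witnessing fewer than $(k\eps/2d)n$ samples in the slab just shrunk past, and since later deletions only remove samples, the final corrupted-sample count remaining in $\{x : \hat u_i < \hat a_{(i)}\cdot x \le 1\}$ is at most $(c_i/2)n$ after the normalization $\hat u_i - \hat v_i = 1$. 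Combining this with the lower bound from the previous paragraph gives the key inequality
\[
\tfrac{1}{2}c_i \;\le\; \tfrac{\delta_i}{8} + \eta_i,
\]
which is exactly the dual of the inequality $\tfrac{1}{2}(\hat u_i - 1) \le \delta_i/8 + \eps_i$ used in Lemma~\ref{lemma:small-outside-general}. From here the case analysis proceeds identically: if $\eta_i \le 0.05\delta_i$ then $c_i \le 0.35\delta_i$ directly, whereas if $\eta_i > 0.05\delta_i$ then $c_i \le 22\eta_i$.

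The summation bound $\sum_i \eta_i \le 2\eps$ then comes from applying Lemma~\ref{lem:intersection-sum} to the family of sets of \emph{removed} uncorrupted points across coordinates: the adversary contributes at most $\eps n$ such points, Lemma~\ref{lemma:delete-noise}'s rotated-direction analog bounds the algorithm's deletions by another $\eps n$, and the two-dimensional density check~\eqref{eqn:two-coord} verifies the intersection hypothesis $\mathrm{frac}(R_i \cap R_j) \le \alpha\,\mathrm{frac}(R_i)\mathrm{frac}(R_j)$ for $\alpha = 4$. The main technical obstacle is the second step: carefully justifying the $(c_i/2)n$ bound when the density checks operate in the rotated direction $\hat a_{(i)}$ rather than $e_i$, since the cumulative-shrink slab needs to be reckoned against the rotated width and the initial robust-range-finding estimate must be shown to satisfy $\hat u_i^{(0)} - \hat v_i^{(0)} \ge 1$ even after the warm-start rotation has been applied; I expect this to require an explicit normalization step and a short argument that rotation error within the warm start does not degrade the density-check constants.
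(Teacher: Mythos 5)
Your proposal follows essentially the same route as the paper's proof: you derive the same key inequality $\tfrac{1}{2}(1-\hat u_i)\le \delta_i/8+\eta_i$ by combining the density check~\eqref{eqn:one-coord-max} with the rotation-error bound from Lemma~\ref{fact:bound-out}, run the identical case split at $\eta_i \lessgtr 0.05\,\delta_i$, and obtain $\sum_i \eta_i = O(\eps)$ via Lemma~\ref{lemma:delete-noise} together with Lemma~\ref{lem:intersection-sum}, exactly as in the paper (which reuses the argument of Lemma~\ref{lemma:small-outside-general}). The normalization/warm-start issue you flag is real but is glossed over in the paper's own (equally terse) proof, so your write-up is, if anything, slightly more careful on that point.
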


\begin{proof}
If $\hat u_i<1$, by (\ref{eqn:one-coord-max}), the fraction of points in $\{x:\hat u_i\le \hat a_{(i)}\cdot x\le 1\}$ is less than $\frac{1}{2}(1-\hat u_i)$. So at least half of points in the region are either removed (by adversary or by the algorithm) or because of the error from the rotation vector $\hat a_{(i)}$. By Lemma~\ref{fact:bound-out}, the fraction of error from rotation vector is at most $\delta_i/8$. 
\[
\frac{1}{2}(1-\hat u_i) \le \delta_i/8 +\eta_i.
\]
By Lemma~\ref{lemma:delete-noise}, the fraction of points are removed is at most $2\eps$, i.e., the fraction of the union of the removed points is at most $2\eps$. By the same argument in the proof of Lemma~\ref{lemma:small-outside-general}, we can upper bound $\sum \eta_i$ by $4\eps$ for $\eta_i\ge 0.05\delta$.
\end{proof}

\begin{proof}[Proof of Proposition~\ref{prop:alg1-with-rotaion-error}]
Lemma~\ref{lemma:small-outside-general} proves the conclusion in the cases that $\hat u_i>1$ or $\hat v_i<-1$. 
Lemma~\ref{lemma:small-inside-general} proves the conclusion in the cases that $\hat u_i<1$ or $\hat v_i>-1$.
\end{proof}

\subsection{Proof of Proposition~\ref{prop:alg2-with-scaling-error}}
In this section, we prove Proposition~\ref{prop:alg2-with-scaling-error}. The proof follows the same idea as the proofs of the rotation case in Section~\ref{sec:rotation}.
\begin{lemma}\label{fact:variance-in-coordinate-general}
Suppose $\norm{a-e_i}=\delta$. Then the variance of the subset of the standard cube $\{x:x \cdot a >1-\delta/2\}$ in the direction $e_i$ is at most $c\delta^2$ for a constant $c>0$.
\end{lemma}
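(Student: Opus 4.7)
The plan is to mimic the proof of Lemma~\ref{fact:variance-in-coordinate} essentially verbatim, absorbing the replacement of the threshold $1$ by $1-\delta/2$ by shifting the reference point at which we compare the marginal density of $x_i$ to its maximum. Let $\bar x, \bar a$ be the $(d-1)$-vectors obtained by deleting the $i$-th entries of $x$ and $a$, and set $y=\bar x\cdot\bar a$. Conditioning a uniform $x\in H$ on $x\cdot a>1-\delta/2$, the marginal density of $x_i$ is
\[
p(x_i)\;\propto\;\Pr_{\bar x}\!\left(y>1-\tfrac{\delta}{2}-a_ix_i\right),
\]
which is monotone nondecreasing in $x_i$. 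As in Lemma~\ref{fact:variance-in-coordinate}, $y$ is one-dimensional logconcave, symmetric about $0$, with variance $\norm{\bar a}^2/3=\delta^2/3+O(\delta^4)$, and $a_i=1-\delta^2/2$.

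I would then evaluate $p$ at $x_i=1$ and at $x_i=1-\tfrac{3}{2}\delta$. At $x_i=1$ the threshold equals $-\delta/2+O(\delta^2)\le 0$, so $p(1)\propto\Pr(y>-\delta/2)\ge 1/2$. At $x_i=1-\tfrac{3}{2}\delta$ a direct expansion gives threshold $\delta+O(\delta^2)$, so the same logconcave slab estimate used in the proof of Lemma~\ref{fact:variance-in-coordinate} yields $p(1-\tfrac{3}{2}\delta)\propto\Pr(y>\delta)\le(3-2\sqrt 2)/4<\tfrac{1}{8}\max p(x_i)$ for $\delta$ sufficiently small.

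The rest of the argument is identical to the original: by Lemma~\ref{lem:logconcave-mean-density}, $p(\mu)\ge\tfrac{1}{8}\max p(x_i)$, and since $p$ is monotone this forces the conditional mean to satisfy $\mu\ge 1-\tfrac{3}{2}\delta$. Since $x_i\le 1$, the support lies in $(-\infty,1]$ and the one-sided width $1-\mu=O(\delta)$ controls $\E(x_i-\mu\mid x_i\ge\mu)$; applying Lemma~\ref{lem:positive_mean} to the logconcave density of $x_i-\mu$ then bounds the standard deviation by $O(\delta)$, yielding $\Var(x_i)\le c\delta^2$. The only (mild) obstacle is locating the correct reference point $1-\tfrac{3}{2}\delta$: shifting the threshold by $\delta/2$ requires shifting the evaluation point by an extra $\delta/2$ so that the new threshold reproduces the $\delta$-value used in Lemma~\ref{fact:variance-in-coordinate}, after which all numerical constants carry over unchanged.
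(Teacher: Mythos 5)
Your proposal is correct, and it ends exactly where the paper's proof ends — establish that the conditional mean $\mu$ of $x_i$ on $\{x\cdot a>1-\delta/2\}$ is at least $1-\tfrac32\delta$, then use $x_i\le 1$ together with Lemma~\ref{lem:positive_mean} to get $\sigma=O(\delta)$ — but it reaches the mean bound by a different route. You transplant the density-comparison argument of Lemma~\ref{fact:variance-in-coordinate}: monotonicity of $p(x_i)\propto\Pr(y>1-\delta/2-a_ix_i)$, the comparison $p\bigl(1-\tfrac32\delta\bigr)\propto\Pr\bigl(y>\delta+O(\delta^2)\bigr)\le(3-2\sqrt2)/4<\tfrac18\,p(1)$, and Lemma~\ref{lem:logconcave-mean-density}; your key adjustment — moving the evaluation point from $1-\delta$ to $1-\tfrac32\delta$ so that the induced threshold on $y$ is again at least $\delta$ after the facet threshold drops to $1-\delta/2$ — is exactly what is needed, and the constants then carry over. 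The paper instead argues with mass: it asserts that all but a small fraction of the conditional measure satisfies $x_i\ge 1-\tfrac32\delta$ and invokes Lemma~\ref{lem:logconcave-balance-weight} (applied to $-x_i$, so that $\Pr(x_i\le\mu)\ge 1/e$) to force $\mu\ge 1-\tfrac32\delta$. The two arguments are of comparable length and both finish with Lemma~\ref{lem:positive_mean}; yours has the advantage of being fully explicit, since every estimate is one already verified in Lemma~\ref{fact:variance-in-coordinate} with a shifted evaluation point, whereas the paper's mass estimate is asserted without computation (and in fact only a constant bound, below $1/e$, on the complementary fraction is needed or available — a point your route sidesteps entirely).
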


\begin{proof}
Since the fraction of points in $\{x:x \cdot a >1-\delta/2\}$ that satisfies $x_i\ge 1-(3\delta/2)$ is at least $1-c\delta$,
by Lemma~\ref{lem:logconcave-balance-weight}, 
we know that the mean of $x_i$ on the subset $\{x:x \cdot a >1-\delta/2\}$ is at least $1-(3\delta/2)$. Then by Lemma~\ref{lem:positive_mean}, the variance of $x_i$ on the whole subset is upper-bounded by the variance of one side of the mean. 
\end{proof}

\begin{lemma}\label{fact:variance-dir-a-general}
Suppose $\norm{a-e_i}=\delta$. Then the variance of the subset of the standard cube $\{x:x \cdot a >1-\delta/2\}$ in the direction $a$ is at most $c\delta^2$ for a constant $c>0$.
\end{lemma}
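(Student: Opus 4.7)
The plan is to reduce this apparently two-dimensional statement to a one-dimensional moment bound via the law of total variance, in the spirit of the companion Lemma~\ref{fact:variance-in-coordinate-general}. Write $y := x \cdot a = a_i x_i + t$ where $t := \bar a \cdot \bar x$ and $\bar a := a - a_i e_i$, so $\|\bar a\|^2 = 1 - a_i^2 = \Theta(\delta^2)$. For fixed $\bar x$, the constraint $y > 1 - \delta/2$ becomes $x_i > (1 - \delta/2 - t)/a_i$, which intersected with $x_i \leq 1$ yields a nonempty subinterval exactly when $t \geq t_0 := 1 - \delta/2 - a_i = \delta^2/2 - \delta/2$; on this subinterval $x_i$ is uniform, and consequently, conditional on $\bar x$ (and $S := \{x \cdot a > 1 - \delta/2\}$), the projection $y$ is uniform on an interval of length exactly $u := t - t_0$.

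Applying the law of total variance across $\bar x$ conditional on $S$ then yields
\[
\Var(y \mid S) = \tfrac{1}{12}\,\E_{t \mid S}[(t - t_0)^2] + \tfrac{1}{4}\,\Var_{t \mid S}(t) \leq \tfrac{1}{3}\,\E_{t \mid S}[(t-t_0)^2],
\]
using $\Var(t \mid S) \leq \E[(t - t_0)^2 \mid S]$. The claim thus reduces to $\E[u^2 \mid S] = O(\delta^2)$. The structural point is that the conditional density of $u \geq 0$ under $S$ has the explicit form $p(u \mid S) \propto g(t_0 + u) \cdot u$, where $g$ is the unconditional density of $\bar a \cdot \bar x$: logconcave, symmetric about $0$, with variance $\|\bar a\|^2/3 = \Theta(\delta^2)$.

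To finish, I would invoke Lemma~\ref{lem:logconcave-moments} on the logconcave function $g(t_0 + u)$ on $u \geq 0$. Writing $I_n := \int_0^\infty u^n g(t_0+u)\,du$, logconcavity of the sequence $\{I_n/n!\}$ applied at $n=1$ and $n=2$ gives $I_3/I_1 \leq 6\,(I_1/I_0)^2$, and by construction $\E[u^2 \mid S] = I_3/I_1$. The ratio $I_1/I_0 = \E[t - t_0 \mid t > t_0]$ is $O(\delta)$: $|t_0| = O(\delta)$, $\Pr(t > t_0) \geq 1/2$ because $g$ is symmetric with $t_0 < 0$, and $\E[|t|] = O(\sqrt{\Var(t)}) = O(\delta)$ by Lemma~\ref{lem:positive_mean}. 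Combining yields $\E[u^2 \mid S] = O(\delta^2)$ and hence $\Var(y \mid S) = O(\delta^2)$.

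The main obstacle is obtaining the sharp $O(\delta^2)$ bound on $\E[u^2 \mid S]$ rather than a weaker $O(\delta)$. A naive Markov estimate --- using that the unconditional variance of $\bar a \cdot \bar x$ is $\Theta(\delta^2)$ while $\Pr(S) = \Theta(\delta)$ --- would lose a factor of $\delta$. The improvement genuinely relies on the explicit form of the conditional density: the event $S$ reweights $g(t)$ by the positive linear factor $(t - t_0)$, and the moment-logconcavity inequality of Lemma~\ref{lem:logconcave-moments} applied to $g(t_0 + u)$ is precisely what converts this structure into the sharp bound.
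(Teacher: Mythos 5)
Your proposal is correct in substance but takes a genuinely different route from the paper's. The paper argues directly about the one-dimensional conditional law of $y=a\cdot x$ on the slice: at least a $1-O(\delta)$ fraction of the conditional mass lies in $[1-\delta/2,\,1+3\delta/2]$, so by Lemma~\ref{lem:logconcave-balance-weight} the conditional mean is within $O(\delta)$ of the left endpoint $1-\delta/2$ of the support, and then the variance is controlled by the one-sided deviation about the mean via Lemma~\ref{lem:positive_mean} and logconcavity of the conditional density. You instead exploit the product structure of the cube: conditional uniformity of $y$ along each $x_i$-fiber, the law of total variance, and the observation that conditioning on the slice reweights the projection density $g$ of $t=\bar a\cdot\bar x$ by the linear factor $u=t-t_0$, so the needed quantity is $I_3/I_1$, which Lemma~\ref{lem:logconcave-moments} (applied at $n=1,2$) bounds by $6(I_1/I_0)^2=O(\delta^2)$; your arithmetic there checks out. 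Your version is more explicit and quantitative --- the paper's final passage from ``mean near the boundary'' to ``variance $O(\delta^2)$'' is left implicit --- at the price of using the cube structure rather than only one-dimensional logconcavity. One caveat you should close: the claim that, given $\bar x$, the fiber has length exactly $u$ uses only the constraint $x_i\le 1$ and ignores $x_i\ge -1$. When $t>t_0+2a_i$ (possible once $\norm{\bar a}_1\ge 2-O(\delta)$, i.e.\ when $\delta\gtrsim 1/\sqrt{d}$) the whole fiber lies in the slice, the conditional length is capped at $2a_i$, and the conditional density of $u$ is proportional to $g(t_0+u)\min(u,2a_i)$ rather than $g(t_0+u)\,u$, with conditional mean $t_0+u$ rather than $1-\delta/2+u/2$. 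This does not break the argument: the conditional variance is still at most $u^2/12$, the conditional mean is still $1$-Lipschitz in $u$ so the total-variance step survives with a worse constant, and by Hoeffding the mass with $t\ge 2-O(\delta)$ is $e^{-\Omega(1/\delta^2)}$ while $I_1\ge |t_0|\Pr(t>0)=\Omega(\delta)$, so replacing $u$ by $\min(u,2a_i)$ changes the denominator by a negligible amount --- but the case needs to be mentioned for the proof to be complete.
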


\begin{proof}
Using the similar argument as in the proof of Lemma~\ref{fact:variance-in-coordinate-general}, it suffices to show that the mean of $x\cdot a $ on the subset $\{x:x \cdot a >1-\delta/2\}$ is at most $1+\delta/2$, which follows from Lemma~\ref{lem:logconcave-balance-weight} and the fact that the fraction of points in $\{x:x \cdot a >1-\delta/2\}$ that satisfies $ x\cdot a \le 1+(3\delta/2)$ is at least $1-c\delta$.
\end{proof}

\begin{lemma}
\label{lem:tail-bound-dir-a}
Suppose $\norm{a-e_i}=\delta$. The fraction of original uncorrupted sample of the standard cube in $\{x:|x\cdot a|>1+\delta/2\}$ is at least $\delta/64$ with high probability.
\end{lemma}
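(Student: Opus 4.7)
My plan is to adapt the integration in the proof of Lemma~\ref{fact:bound-out} to the threshold $1 + \delta/2$ in place of $1$. Writing $a = a_i e_i + \bar a$ with $a_i := a \cdot e_i = 1 - \delta^2/2$ and $\bar a \in \R^{d-1}$ the orthogonal part (so $\|\bar a\|_2^2 = \delta^2 - \delta^4/4$), and setting $y := \bar x \cdot \bar a$ for $\bar x$ uniform on $[-1,1]^{d-1}$, the same computation as in Lemma~\ref{fact:bound-out}---integrating out $x_i$ at each fixed $\bar x$---gives
\[
\Pr_{x \sim [-1,1]^d}\bigl(x\cdot a > 1 + \delta/2\bigr) \;=\; \frac{\E[(y - t_0)_+]}{2\,a_i}, \qquad t_0 \;:=\; \frac{\delta}{2} + \frac{\delta^2}{2}.
\]
Doubling by the symmetry of the cube, the claim reduces to showing $\E[(y-t_0)_+] \ge \delta/64$ (the factor $1/a_i \ge 1$ only helps).

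The variable $y$ is symmetric log-concave with variance $\sigma^2 = \|\bar a\|^2/3 \ge (\delta^2/3)(1-\delta^2/4)$, and I combine three ingredients to lower-bound $\E[(y-t_0)_+]$. First, Lemma~\ref{lem:logconcave-moments} applied to the restriction of $y$'s density to $[0,\infty)$ (exactly the argument used in the proof of Lemma~\ref{fact:bound-out}, with $M_0 = 1/2$ and $M_2 = \sigma^2/2$) gives $\E[y\cdot \mathbf{1}_{y\ge 0}] \ge \sigma/(2\sqrt 2)$. Second, since $y$ is symmetric log-concave, its density is unimodal at $0$ and hence decreasing on $[0,\infty)$; by the elementary fact that a decreasing density on an interval has mean at most its midpoint, $\E[y\mid 0\le y<t_0]\le t_0/2$, so $\E[y\cdot \mathbf{1}_{0\le y<t_0}]\le (t_0/2)\Pr(0\le y<t_0)$. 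Third, applying Lemma~\ref{lem:slab-vol-upper-bound} to the unit projection $y/\|\bar a\|_2$ of $\bar x$ (after the standard rescaling to the cube $[-1/2,1/2]^{d-1}$) yields $\Pr(|y|>t_0) \le 1 - t_0/\|\bar a\|_2 \le 1/2$, since $t_0/\|\bar a\|_2 \ge 1/2$ for all $\delta > 0$; hence $\Pr(y>t_0) \le 1/4$ by symmetry. Decomposing $\E[y\cdot\mathbf{1}_{y\ge 0}] = \E[y\cdot\mathbf{1}_{0\le y<t_0}] + \E[y\cdot\mathbf{1}_{y\ge t_0}]$ and subtracting $t_0\Pr(y\ge t_0)$, these combine to give
\[
\E[(y - t_0)_+] \;\ge\; \frac{\sigma}{2\sqrt 2} - \frac{t_0}{4} - \frac{t_0}{2}\cdot\frac{1}{4} \;=\; \frac{\sigma}{2\sqrt 2} - \frac{3\,t_0}{8} \;\ge\; \frac{\delta}{64},
\]
which holds for all $\delta$ below an absolute constant (using $\sigma \ge (\delta/\sqrt 3)(1 - O(\delta^2))$ and $t_0 \le \delta(1+\delta)/2$). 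The probabilistic version then follows from a standard Chernoff bound: since $n \ge \poly(d, 1/\epsilon)$, the empirical fraction of uncorrupted samples in $\{|x\cdot a|>1+\delta/2\}$ lies within $o(\delta/d)$ of the true probability with high probability.

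The main technical difficulty is the tight constants-chasing at the end: the moment inequality $\sigma/(2\sqrt 2)$ is saturated by the Laplace distribution, while the slab bound $\Pr(y>t_0)\le 1/4$ is saturated by uniform $y$ (i.e.~when $\bar a$ has a single nonzero coordinate), and since these extremizers differ, the combined margin $\sigma/(2\sqrt 2) - 3t_0/8$ clears $\delta/64$ only when $\delta$ is at most a small absolute constant. In the algorithm's context this is acceptable because the warm-start (Theorems~\ref{thm:robust-mean} and~\ref{thm:robust-ica}) ensures $\delta$ is small, and the iterative updates only shrink it further. A cleaner constant could be obtained by either (a) invoking a sharper Khintchine-type inequality $\E|y|\ge \sqrt{2/\pi}\,\sigma$ that holds because $y$ is a linear combination of independent uniforms rather than an arbitrary symmetric log-concave random variable, or (b) choosing a threshold slightly below $\|\bar a\|_2/2$ and applying Lemma~\ref{lem:truncated-cube-mean} to the unit projection $y/\|\bar a\|_2$ to obtain $\E[y\mid y\ge t_0]-t_0 = \Omega(\delta)$ directly.
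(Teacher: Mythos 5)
Your proof is correct and takes a genuinely different route at the key step, so let me compare. Both arguments reduce to the same one-dimensional quantity: with $y=\bar x\cdot\bar a$ and $t_0=\delta/2+\delta^2/2$, the population mass of $\{|x\cdot a|>1+\delta/2\}$ is $\E[(y-t_0)_+]/a_i$ (up to the benign factor $1/a_i\ge 1$ and an exponentially negligible capping of the $x_i$-interval at length $2$, a convention the paper's own computation also adopts). The paper lower-bounds this as $\Pr(y\ge t_0)\cdot\big(\E[y\mid y\ge t_0]-t_0\big)$, asserting the tail bound $\Pr(y\ge t_0)\ge 1/8-\delta$ and invoking Lemma~\ref{lem:truncated-cube-mean} for the excess of the truncated mean; you instead bound $\E[(y-t_0)_+]$ directly by combining the half-line moment inequality of Lemma~\ref{lem:logconcave-moments} (the same device the paper uses inside Lemma~\ref{fact:bound-out}, giving $\E[y\,\mathbf{1}_{y\ge 0}]\ge \sigma/(2\sqrt 2)$), the midpoint bound for the decreasing density on $[0,t_0)$, and the slab bound of Lemma~\ref{lem:slab-vol-upper-bound} to get $\Pr(y\ge t_0)\le 1/4$; your arithmetic checks out via the grouping $\frac{t_0}{2}\big(\Pr(0\le y<t_0)+\Pr(y\ge t_0)\big)+\frac{t_0}{2}\Pr(y\ge t_0)\le \frac{3t_0}{8}$. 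What each buys: the paper's route has comfortable constants but leans on the unproved estimate $\Pr(y\ge t_0)\ge 1/8-\delta$ and on the unspecified constant in Lemma~\ref{lem:truncated-cube-mean}; your route uses only the stated preliminaries with explicit constants, but the margin over $1/64$ is razor-thin ($\tfrac{1}{2\sqrt 6}-\tfrac{3}{16}\approx 0.0166$ versus $1/64\approx 0.0156$), so as written it establishes the lemma only for $\delta$ below a small absolute constant (roughly $\delta\lesssim 1/200$) rather than the nominally unrestricted range of the statement. You flag this, and it is harmless where the lemma is used (the warm start and the standing assumption that $\eps$ is a small constant keep $\delta=O(\sqrt{\eps})$), and the paper's own derivation also implicitly assumes $\delta$ small; note that your alternative (b) — rescaling and applying Lemma~\ref{lem:truncated-cube-mean} at the threshold — is essentially the paper's actual argument and would remove the constant-chasing.
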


\begin{proof}
We can compute the volume of $S_1 =\{x \in[-1,1]^d:x\cdot a>1+\delta/2\}$ by integrating the length of $x_i$ over $\bar x$ where $\bar x$ is the vector $x$ without entry $x_i$. 
\[
\vol(S_1) = \int_{\bar x \in [-1,1]^{d-1}} \max \left\{0, 1- \frac{1+\delta/2-\bar x\cdot \bar a}{a_i} \right\}\; d \bar x
\]
Let $y=\bar x \cdot \bar a$. Suppose $p(y)$ is the density of $y$. We can write the integral above as 
\begin{equation}
\label{eqn:vol-s1}
\vol(S_1) = 2^{d-1}\left(
\Pr\left(y\ge\frac{\delta^2}{2}+\delta/2\right)\left(1-\frac{1+\delta/2}{a_i}\right)+\frac{1}{a_i}\int_{y\ge \frac{\delta^2}{2}+\delta/2} y \cdot p(y)\; dy
\right).
\end{equation} 
Since $\bar x$ is a uniform random variable on $[-1,1]^{d-1}$, we have $y$ is 1d projection of the standard $d-1$ cube onto an arbitrary direction with zero mean and variance $\norm{\bar a}^2/3=\delta^2/3$. Then we can bound the probability $1/8-\delta\le \Pr\left(y\ge\frac{\delta^2}{2}+\delta/2\right)\le 1/4$ and $\Pr(y\ge \frac{3}{4}\delta)\ge 1/8-\delta$. By Lemma~\ref{lem:truncated-cube-mean}, we have $\mu_y:=\E(y \mid y\ge \delta^2/2+\delta/2) \ge(\delta/2) + (\delta^2/2)+ c\delta$.
Plugging into the inequality (\ref{eqn:vol-s1}), we get 
\begin{align*}
\vol(S_1) &\ge 2^{d-1}\left( \Pr\left(y\ge\frac{\delta^2}{2}+\delta/2\right) \left( 1-\frac{1+\delta/2}{a_i} +\frac{\mu_y}{a_i}
\right)\right)\\
&\ge 2^{d-1}(1/8-\delta)(\delta/8+O(\delta^2))\\
&= 2^{d-7}\delta.
\end{align*}
Thus the probability a uniformly random sample is in $S_1$ is at least $2^{d-7}\delta/2^d=\delta/128$. By symmetry of the cube, the probability of a point that $x\cdot a < -1-\delta/2$ is the same. So the fraction of samples in $\{x:|x\cdot a|> 1+\delta/2\}$ is at least $\delta/64$.
\end{proof}

\begin{lemma}\label{lemma:true-mean-difference}
Suppose $\norm{a-e_i}=\delta$. Let $\mu$ be the mean of the subset of the cube $\{x\in [-1,1]^d: a\cdot x>1+\Delta\}$, where $\Delta\in[-\delta/2,\delta/2]$. Then $\mu\cdot a-\mu\cdot e_i\ge \delta/64$.
\end{lemma}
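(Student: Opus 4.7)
The plan is to mimic the decomposition from the proof of Lemma~\ref{fact:mean-proj-bound} with the threshold $1$ replaced by $1+\Delta$. Writing $a = a_i e_i + \bar{a}$ with $a_i = 1 - \delta^2/2$ and $\norm{\bar{a}}^2 = \delta^2(1 - \delta^2/4)$ gives
\[
\mu \cdot a - \mu \cdot e_i \;=\; -\tfrac{\delta^2}{2}\,\mu_i \;+\; \E\!\left[\bar{a}\cdot\bar{x} \,\big|\, x\cdot a > 1 + \Delta\right],
\]
where $\mu_i := \mu\cdot e_i \in [-1,1]$. Since $\mu_i \le 1$, the first term costs at most $\delta^2/2$, so it suffices to show $\E[y \mid S^+] \ge c\delta$ for some universal $c > 1/64$ when $\delta$ is small, where $y := \bar{a}\cdot\bar{x}$ and $S^+ := \{x \in [-1,1]^d : a \cdot x > 1 + \Delta\}$.

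To analyze the conditional expectation, use that $x_i$ and $\bar{x}$ are independent under the uniform measure on the cube. Parametrizing by $s := 1 - x_i \in [0,2]$, the event $S^+$ becomes $y > T(s) := \Delta + \delta^2/2 + a_i s$, so
\[
\E[y \mid S^+] \;=\; \frac{\int_0^2 m(s)\,p(s)\,ds}{\int_0^2 p(s)\,ds}, \qquad m(s) := \E[y \mid y > T(s)],\quad p(s) := \Pr[y > T(s)].
\]
It therefore suffices to show $m(s) \ge c\delta$ uniformly in $s$ whenever $p(s) > 0$.

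The uniform bound on $m(s)$ follows from a case split on $T(s)$. When $T(s) \in [0, \norm{\bar{a}}/2]$, Lemma~\ref{lem:truncated-cube-mean} applied after rescaling $y \mapsto y/\norm{\bar{a}}$ (whereupon $y/\norm{\bar{a}}$ becomes a projection of the uniform $(d-1)$-cube onto a unit vector) yields $m(s) \ge T(s) + c\norm{\bar{a}}$. When $T(s) > \norm{\bar{a}}/2$, the trivial bound $m(s) \ge T(s)$ already exceeds $\norm{\bar{a}}/2$. When $T(s) \le 0$, the symmetry of the density $g$ of $y$ allows one to write
\[
m(s)\cdot \Pr[y > T(s)] \;=\; \tfrac{1}{2}\mu^+ \;-\; \int_{T(s)}^0 |y|\,g(y)\,dy,
\]
and bound the integral by $|T(s)|^2 g(0)$. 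Since $|\Delta| \le \delta/2$ forces $|T(s)| \le |T_0| = |\Delta + \delta^2/2| \le \norm{\bar{a}}/2 + O(\delta^2)$ in this regime, combining $\mu^+ \ge c_1 \norm{\bar{a}}/\sqrt{3}$ from Lemma~\ref{lem:positive_mean} with the standard estimate $g(0) = O(1/\norm{\bar{a}})$ for the peak of a symmetric logconcave density of variance $\norm{\bar{a}}^2/3$ produces $m(s) \ge c\norm{\bar{a}}$.

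All three sub-cases yield $m(s) \ge c\delta$ for a universal $c > 0$, so $\E[y\mid S^+] \ge c\delta$ and $\mu\cdot a - \mu\cdot e_i \ge c\delta - \delta^2/2 \ge \delta/64$ once $\delta$ is small enough. The main obstacle is the sub-case $T(s) \le 0$ with $|T(s)|$ close to $\norm{\bar{a}}/2$, where the universal constants from Lemma~\ref{lem:positive_mean} and the peak-density bound must compose favorably; if a purely abstract logconcave estimate leaves too little slack, one can instead exploit the explicit cube structure of $y$ as a projection of the uniform $(d-1)$-cube (where $\mu^+$ and $g(0)$ are both controlled via Lemmas~\ref{lem:section-vol} and \ref{lem:slab-vol-upper-bound}).
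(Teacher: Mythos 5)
Your argument is correct in outline but takes a genuinely different route from the paper's. The paper slices the region $\{x\cdot a>1+\Delta\}$ by the sections $\{x\cdot a=1+t\}$: it shows that for $t\in[-\delta/2,\delta/2]$ the section mean in the direction $e_i$ is at most $1+t$ (so these sections contribute nonnegatively to $(a-e_i)\cdot x$), and extracts the actual gain from the deep region $\{x\cdot a>1+\delta/2\}$, on which $(a-e_i)\cdot x\ge\delta/2$ pointwise and whose mass is at least $\delta/64$ by Lemma~\ref{lem:tail-bound-dir-a}; it then normalizes by an upper bound on the total mass obtained from Lemma~\ref{fact:bound-out} and Ball's section bound (Lemma~\ref{lem:section-vol}). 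You instead write $a=a_ie_i+\bar a$, absorb the $-(\delta^2/2)\mu_i$ term, and disintegrate over $x_i$, reducing the lemma to a lower bound $\E[y\mid y>T]\ge c\norm{\bar a}$ for the one-dimensional projection $y=\bar a\cdot\bar x$, uniformly over the relevant thresholds $T$. That reduction is valid (the conditional mean is a $p(s)$-weighted average of the truncated means $m(s)$), and your cases $T\in[0,\norm{\bar a}/2]$ (Lemma~\ref{lem:truncated-cube-mean} after rescaling to a unit-vector projection of $[-1,1]^{d-1}$) and $T>\norm{\bar a}/2$ (trivial) are solid. Your route is more modular: it needs no analogue of Lemma~\ref{lem:tail-bound-dir-a} and no section-mean monotonicity argument, only one-dimensional truncated-mean estimates; the paper's route reuses mass estimates it needs elsewhere anyway.

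The one real soft spot is the case $T\le 0$, which you flag yourself, and it is not merely cosmetic: with the integral bounded by $g(0)\abs{T}^2$ as you write, even the sharpest abstract symmetric-logconcave constants do not decide the sign — e.g.\ for the two-sided exponential one has $\mu^+=\sigma/\sqrt{2}$ and $g(0)=1/(\sqrt{2}\sigma)$, and then $\tfrac12\mu^+<g(0)\abs{T}^2$ once $\abs{T}$ is near $\norm{\bar a}/2=\sigma\sqrt{3}/2$; the paper's Lemma~\ref{lem:positive_mean} moreover supplies no explicit $c_1$. Keeping the exact $\int_{T}^{0}\abs{y}\,dy=T^2/2$ together with those sharp constants just barely yields a positive margin, but the cleaner fix is exactly your fallback: for this specific $y$, the moment argument already used in the proof of Lemma~\ref{fact:bound-out} (via Lemma~\ref{lem:logconcave-moments}) gives $\mu^+\ge\norm{\bar a}/\sqrt{6}$, and Ball's bound (Lemma~\ref{lem:section-vol}) gives $g(0)\le\sqrt{2}/(2\norm{\bar a})$, which together leave a comfortable constant-times-$\delta$ margin. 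Finally, note that the literal constant $1/64$ cannot be certified from your case analysis because the constant in Lemma~\ref{lem:truncated-cube-mean} is unspecified — a looseness the paper's own proof shares — but any universal constant suffices for how the lemma is used downstream.
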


\begin{proof}
Let $S_1 =\{x \in[-1,1]^d:x\cdot a>1+\delta/2\}$. By the definition, we have $x\cdot a >1+\delta/2$ and $x\cdot e\le 1$ for all $x\in S_1$. So $\mu(S_1)\cdot(a-e)\ge \delta/2$. Then we will show that for $t\in[-\delta/2,\delta/2]$, the mean of the section of the cube $\mu(1+t):=\E(x\mid x\cdot a =1+t)$ in the direction $e_i$ is at most $1+t$. This is trivial for $t\ge 0$. So we assume $-\delta/2\le t<0$.
\begin{align*}
    \mu(1+t)\cdot e_i &= \E (x_i\mid x\cdot a =1+t)\\
    & = \int_{-1}^1 x_i \Pr(\bar x \cdot \bar a=1+t-a_ix_i)\;dx_i
\end{align*}
where $\bar x\sim[-1,1]^{d-1}$. Then $\bar x\cdot \bar a$ is symmetric at zero, that is, at $x_i = 1+t+O(\delta^2)$. So the truncated mean $\E(x_i\mid x\cdot a = 1+t, x_i\ge 1+2t)$ is $1+t$. Since $\E(x_i\mid x\cdot a = 1+t, x_i< 1+2t)<1+2t$, we conclude that the mean of these two parts $\E (x_i\mid x\cdot a =1+t)$ is at most $1+t$.

By Lemma~\ref{lem:tail-bound-dir-a}, the fraction of points in $S_1=\{x \in[-1,1]^d:x\cdot a>1+\delta/2\}$ is at least $\delta/64$. By Lemma~\ref{fact:bound-out}, the fraction of $\{x:x\cdot a>1\}$ is at most $\delta/8$. By Lemma~\ref{lem:section-vol}, the maximum of the section of the cube is $\sqrt 2$. Then the fraction of $\{x:1+t\le x\cdot a \le1\}$ is upper bounded by $\sqrt 2 t/2$. Thus
\[
\mu \cdot a -\mu \cdot e \ge 
\frac{\mu(S_1)\cdot (a-e)\frac{\delta}{32}}{(\delta/8)- (t/\sqrt {2}))}
\ge \delta/64.
\]
\end{proof}

\begin{proof}[Proof of Proposition~\ref{prop:alg2-with-scaling-error}]
Suppose $\delta_i = \norm{\hat a_{(i)}-e_i}$. In the proof of Lemma~\ref{lemma:robust-update-step}, we show that if $\tilde\mu\cdot \hat a_{(i)} - \tilde\mu\cdot e\ge c\delta_i$, then Algorithm~\ref{alg:update-step-non-robust} outputs a unit vector $\hat a_{(i)}'$ such that $\hat a_{(i)}'\cdot e_i-\hat a_{(i)}\cdot e_i\ge c'\beta\delta_i$. 
In the general algorithm, the only difference is we replace the true threshold in the direction $e_i$ by $\hat u_i$, i.e., $\tilde\mu$ is the robust mean of the subset $\{x: \hat a_{(i)}\cdot x >\hat u_i \}$. Let $\mu_i$ be the true mean of $\{x: \hat a_{(i)}\cdot x >\hat u_i \}$. 
By Lemma~\ref{lemma:true-mean-difference}, we have $\mu\cdot \hat a_{(i)}-\mu\cdot e_i\ge \delta_i/64$. 
We apply the algorithm in Theorem~\ref{thm:robust-mean} to compute the robust mean, which give an error guarantee that $(\tilde\mu-\mu)\cdot \hat a_{(i)}\le O(\sqrt c)\sigma(\hat a_{(i)})$ and 
$(\tilde\mu-\mu)\cdot e_i\le O(\sqrt c)\sigma(e_i)$ where $\sigma^2(\hat a_{(i)})$ is the variance of $\{x: \hat a_{(i)}\cdot x >\hat u_i \}$ in the direction $\hat a_{(i)}$ and $\sigma^2(e_i)$ is the variance in the direction $e_i$. By Lemmas~\ref{fact:variance-in-coordinate-general} and \ref{fact:variance-dir-a-general}, the variances are at most $\delta^2$. By Lemma~\ref{lem:tail-bound-dir-a}, the number of true points in $\{x: \hat a_{(i)}\cdot x >\hat u_i \}$ is at least $(\delta/64)n$. Thus if 
\[
c\le\frac{\eps_i}{\delta/64}\le 2^{14},
\]
the algorithm can improve $\hat a_{(i)}$ to $\hat a_{(i)}'$ such that
\[
\hat a_{(i)}'\cdot e_i-\hat a_{(i)}\cdot e_i\ge \frac{1}{256}\beta\delta_i.
\]
This implies that the algorithm make progress until $\delta < 2^20\eps_i$. Then the argument in the proofs of Corollary~\ref{coro:update-rotation-vec} and Theorem~\ref{thm:rotation} follows.
\end{proof}

\section{Total size of almost pairwise disjoint sets}
In this section, we prove Lemma~\ref{lem:intersection-sum}.
\label{sec:sum-of-intersections}
\intersectionsum*

To prove Lemma~\ref{lem:intersection-sum}, we will use the following technical result on almost pairwise independent Bernoulli random variables.
\begin{lemma}\label{lem:rv-sum}
Let $X=(X_1,X_2,\dots,X_d)\in\{0,1\}^d$ be a random variable on $\{0,1\}^d$. If for any $i,j\in[d]$, 
\[
\E X_iX_j\le \eps\E X_i \E X_j
\]
where $0\le\eps<1$, then 
\[
\sum_{i=1}^d\E X_i\le\frac{1}{1-\eps}.
\]
\end{lemma}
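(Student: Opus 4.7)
The plan is to reduce the bound to a clean second-moment (variance/Cauchy--Schwarz) argument applied to the counting variable $Y = \sum_{i=1}^d X_i$. Write $S = \sum_i \E X_i = \E Y$. This is what we want to bound, and the natural handle on $S$ is to compare the first and second moments of $Y$.

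Expand
\[
\E Y^2 \;=\; \sum_{i=1}^d \E X_i^2 \;+\; \sum_{i\ne j} \E X_i X_j.
\]
The first sum equals $S$ because $X_i \in \{0,1\}$ makes $X_i^2 = X_i$. The second sum is exactly where the almost-pairwise-independence hypothesis is used: $\sum_{i\ne j}\E X_i X_j \le \eps \sum_{i\ne j}\E X_i\,\E X_j \le \eps S^2$. (I am reading the hypothesis as applying for $i\ne j$; the $i=j$ case would degenerate since $\E X_i^2 = \E X_i$ cannot generally be at most $\eps(\E X_i)^2$ when $\eps<1$.) Combining these gives
\[
\E Y^2 \;\le\; S + \eps S^2.
\]

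Now apply Cauchy--Schwarz (or Jensen) in the form $(\E Y)^2 \le \E Y^2$, which yields
\[
S^2 \;\le\; S + \eps S^2,
\]
i.e., $(1-\eps) S^2 \le S$. If $S = 0$ the conclusion is trivial, and otherwise dividing by $S$ gives $S \le 1/(1-\eps)$, as required.

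I do not anticipate a real obstacle: the only subtlety is spotting that the quadratic inequality comes for free by bounding $\E Y^2$ term-by-term and then invoking $(\E Y)^2 \le \E Y^2$. Once Lemma~\ref{lem:rv-sum} is in hand, Lemma~\ref{lem:intersection-sum} follows by taking $X_i$ to be the indicator of membership of a uniformly random element of $\bigcup_j S_j$ in $S_i$ (so that $\E X_i = \mathrm{frac}(S_i)/\eps$ and $\E X_i X_j = \mathrm{frac}(S_i\cap S_j)/\eps$), applying the bound, and multiplying through by $\eps$.
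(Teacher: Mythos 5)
Your proof is correct and follows essentially the same route as the paper: expand $\E\bigl(\sum_i X_i\bigr)^2$, use $X_i^2=X_i$ and the near-independence hypothesis on the cross terms, compare with $\bigl(\E\sum_i X_i\bigr)^2$ via Jensen, and rearrange. Your explicit remarks that the hypothesis is applied only for $i\ne j$ and that the $S=0$ case is handled separately are small clarifications of steps the paper leaves implicit, not a different argument.
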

\begin{proof} By Jensen's inequality, we have:
\begin{align*}
\left(\sum_{i=1}^d\E X_i\right)^2
&=\left(\E\left(\sum_{i=1}^d X_i\right)\right)^2 \le \E\left(\sum_{i=1}^d X_i\right)^2\\
&= \sum_{i=1}^d\E X_i^2+\sum_{i,j\in[d]}\E X_iX_j\\
&\le \sum_{i=1}^d \E X_i + \eps\sum_{i,j\in[d]}\E X_i \E X_j \le \sum_{i=1}^d \E X_i + \eps\left(\sum_{i=1}^d\E X_i\right)^2
\end{align*}
Rearranging yields that $\sum_{i=1}^d\E X_i\le\frac{1}{1-\eps}.$
\end{proof}

We now proceed to the proof of Lemma~\ref{lem:intersection-sum}.
\begin{proof}[Proof of Lemma~\ref{lem:intersection-sum}]
Let $y$ be distributed uniformly on $\bigcup S_i$. Define a random variable $X=(X_1,X_2,\dots,X_d)\in\{0,1\}^d$ as a function of $y$ as follows:
\[
X_i = 
\begin{cases}
0 \quad\text{if } y \notin S_i\\
1 \quad\text{if } y \in S_i
\end{cases}
\]
Then by (\ref{eqn:union}), $\E X_i = \Pr(y\in S_i)=\frac{|S_i|}{|\bigcup S_i|}=\mathrm{frac}(S_i)/\eps$ and $
\E X_iX_j = \Pr(y\in S_i\cap S_j)=\frac{|S_i\cap S_j|}{|\bigcup S_i|}=\mathrm{frac}(S_i\cap S_j)/\eps$.

By (\ref{eqn:intersection}), we have 
\[
\E X_i X_j \le \frac{\size(S_i \cap S_j)}{\epsilon} \le \frac{\alpha}{\epsilon} \size(S_i) \size(S_j) =\alpha \eps \E X_i\E X_j.
\]
By Lemma~\ref{lem:rv-sum}, this implies $\sum_{i=1}^d\E X_i\le\frac{1}{1-\alpha \eps}.$
Using that
\[
\sum \E X_i = \frac{1}{\eps}\sum_{i=1}^d \mathrm{frac}(S_i)\,
\]
gives $\sum_{i=1}^d \mathrm{frac}(S_i)\le \frac{\eps}{1-\alpha \eps}$.
\end{proof}

\bibliographystyle{plain}
\bibliography{allrefs,stat_algs,ICA_bibliography,custom2,custom3,custom}

\begin{thebibliography}{10}

\bibitem{AroraGMS12}
Sanjeev Arora, Rong Ge, Ankur Moitra, and Sushant Sachdeva.
\newblock Provable {ICA} with unknown gaussian noise, with implications for
  gaussian mixtures and autoencoders.
\newblock In {\em NIPS}, pages 2384--2392, 2012.

\bibitem{bakshi2022robustly}
Ainesh Bakshi, Ilias Diakonikolas, He~Jia, Daniel~M Kane, Pravesh~K Kothari,
  and Santosh~S Vempala.
\newblock Robustly learning mixtures of k arbitrary gaussians.
\newblock In {\em Proceedings of the 54th Annual ACM SIGACT Symposium on Theory
  of Computing}, pages 1234--1247, 2022.

\bibitem{bakshi2021robust}
Ainesh Bakshi and Adarsh Prasad.
\newblock Robust linear regression: Optimal rates in polynomial time.
\newblock In {\em Proceedings of the 53rd Annual ACM SIGACT Symposium on Theory
  of Computing}, pages 102--115, 2021.

\bibitem{ball1986cube}
Keith Ball.
\newblock Cube slicing in rn.
\newblock {\em Proceedings of the American Mathematical Society}, pages
  465--473, 1986.

\bibitem{barthe2003extremal}
Franck Barthe and Alexander Koldobsky.
\newblock Extremal slabs in the cube and the laplace transform.
\newblock {\em Advances in Mathematics}, 174(1):89--114, 2003.

\bibitem{BelkinRV12}
Mikhail Belkin, Luis Rademacher, and James Voss.
\newblock Blind signal separation in the presence of {Gaussian} noise.
\newblock In {\em Proc. of COLT}, 2013.

\bibitem{Cardoso1998multidimensional}
J-F Cardoso.
\newblock Multidimensional independent component analysis.
\newblock In {\em Acoustics, Speech and Signal Processing, 1998. Proceedings of
  the 1998 IEEE International Conference on}, volume~4, pages 1941--1944. IEEE,
  1998.

\bibitem{CDGW19}
Y.~Cheng, I.~Diakonikolas, R.~Ge, and D.~P. Woodruff.
\newblock Faster algorithms for high-dimensional robust covariance estimation.
\newblock In {\em Conference on Learning Theory, {COLT} 2019}, pages 727--757,
  2019.

\bibitem{Comon91}
P.~Comon.
\newblock Independent {C}omponent {A}nalysis.
\newblock In {\em Proc. Int. Sig. Proc. Workshop on Higher-Order Statistics},
  pages 111--120, Chamrousse, France, July 10-12 1991.
\newblock Keynote address. Republished in {\em Higher-Order Statistics},
  J.L.Lacoume ed., Elsevier, 1992, pp 29--38.

\bibitem{ComonJutten}
Pierre Comon and Christian Jutten, editors.
\newblock {\em Handbook of Blind Source Separation}.
\newblock Academic Press, 2010.

\bibitem{DKKLMS16}
I.~Diakonikolas, G.~Kamath, D.~M. Kane, J.~Li, A.~Moitra, and A.~Stewart.
\newblock Robust estimators in high dimensions without the computational
  intractability.
\newblock In {\em Proc.\ 57th IEEE Symposium on Foundations of Computer Science
  (FOCS)}, pages 655--664, 2016.

\bibitem{FJK:96}
A.~Frieze, M.~Jerrum, and R.~Kannan.
\newblock Learning linear transformations.
\newblock In {\em focs1996}, pages 359--368, 1996.

\bibitem{GVX14}
N.~Goyal, S.~Vempala, and Y.~Xiao.
\newblock Fourier pca and robust tensor decomposition.
\newblock In {\em Proceedings of the forty-sixth annual ACM symposium on Theory
  of computing}, pages 584--593, 2014.

\bibitem{KlivansKM18}
A.~Klivans, P.~Kothari, and R.~Meka.
\newblock Efficient algorithms for outlier-robust regression.
\newblock In {\em Proc.\ 31st Annual Conference on Learning Theory (COLT)},
  pages 1420--1430, 2018.

\bibitem{KothariSS18}
P.~K. Kothari, J.~Steinhardt, and D.~Steurer.
\newblock Robust moment estimation and improved clustering via sum of squares.
\newblock In {\em Proc.\ 50th Annual ACM Symposium on Theory of Computing
  (STOC)}, pages 1035--1046, 2018.

\bibitem{KS17}
P.~K. Kothari and D.~Steurer.
\newblock Outlier-robust moment-estimation via sum-of-squares.
\newblock {\em CoRR}, abs/1711.11581, 2017.

\bibitem{LaiRV16}
K.~A. Lai, A.~B. Rao, and S.~Vempala.
\newblock Agnostic estimation of mean and covariance.
\newblock In {\em Proc.\ 57th IEEE Symposium on Foundations of Computer Science
  (FOCS)}, pages 665--674, 2016.

\bibitem{LV07}
L.~Lov\'asz and S.~Vempala.
\newblock The geometry of logconcave functions and sampling algorithms.
\newblock {\em Random Structures and Algorithms}, 30(3):307--358, 2007.

\bibitem{NguyenR09}
Phong~Q. Nguyen and Oded Regev.
\newblock Learning a parallelepiped: Cryptanalysis of {GGH} and {NTRU}
  signatures.
\newblock {\em J. Cryptology}, 22(2):139--160, 2009.

\end{thebibliography}

\section*{Appendix}

\begin{proof}[Proof of Lemma~\ref{lem:intersection_of_bands}.]
We can project the density to the span of $u,v$ to get a center-symmetric, isotropic, two-dimensional logconcave density $f:\R^2\rightarrow \R_+$. 
Let us assume that $u,v$ are orthogonal, the general case when they are at least at a constant angle will be similar.

Consider the level set $L_1$ of $f$ of function value at least  $1/100$. We claim that it intersects the line $\ell_1$ defined by $u^Tx =1$ in a segment of length at least $1/10$. Let $\delta=\nu(H_u)$. Moreover, the line $\ell_2$ defined by $u^Tx = 1+c\delta$ does not intersect $L_1$, else the measure in between the two lines is too large, and we know it is at most $\delta$. 

Now consider $L_2$, the level set of function value at least $1/1000$. Now we claim that the intersection of $\ell_2$ with $L_2$ is of length most $1/10$. Moreover, the line $\ell_3$ defined by $u^Tx = 1+2c\delta$ does not intersect $L_2$. 

The same bounds apply along $v$. 

To bound the measure of $\{x:\, u^Tx \ge 1, v^Tx \ge 1\}$, we divide up the region $u^T x \ge 1$ into strips perpendicular to $u$ so that the measure of the bounding lines decreases by a constant factor in each strip. Each strip has length $c\delta$ We do the same for $v$. So the intersection of the first two strips, a square $S$, has measure $O(\delta^2)$. Now, using the previous claims about level sets, it follows that along any line starting at the intersection of $u^Tx=1$ and $v^Tx =1$ and continuing in the region $u^Tx \ge 1, v^Tx \ge 1$, the value of $f$ decreases by a constant factor every $c_2 \delta$ distance along the line. From this it follows that the measure of the entire regions is $O(\delta^2)$. To see this we consider the polar integral of the region 
$H_u \cap H_v$,
\begin{equation*}
\int_{\theta = 0}^{\pi/2}\int_{r=0}^\infty r f(1+ r\cos\theta, 1+ r \sin\theta)\,dr\,d\theta \le \int_{\theta = 0}^{\pi/2}\int_{r=0}^\infty c_1re^{-r/(c_2\delta)}\,dr\,d\theta = O(\delta^2).
\end{equation*}

When $u$ and $v$ are at some constant angle (instead of orthogonal), then the intersection of bands induced by intervals beomes a parallelogram with area $O(\delta^2)$. The rest of the argument remains the same.
\begin{figure}
 \centering
    \includegraphics[width = 0.6\textwidth]{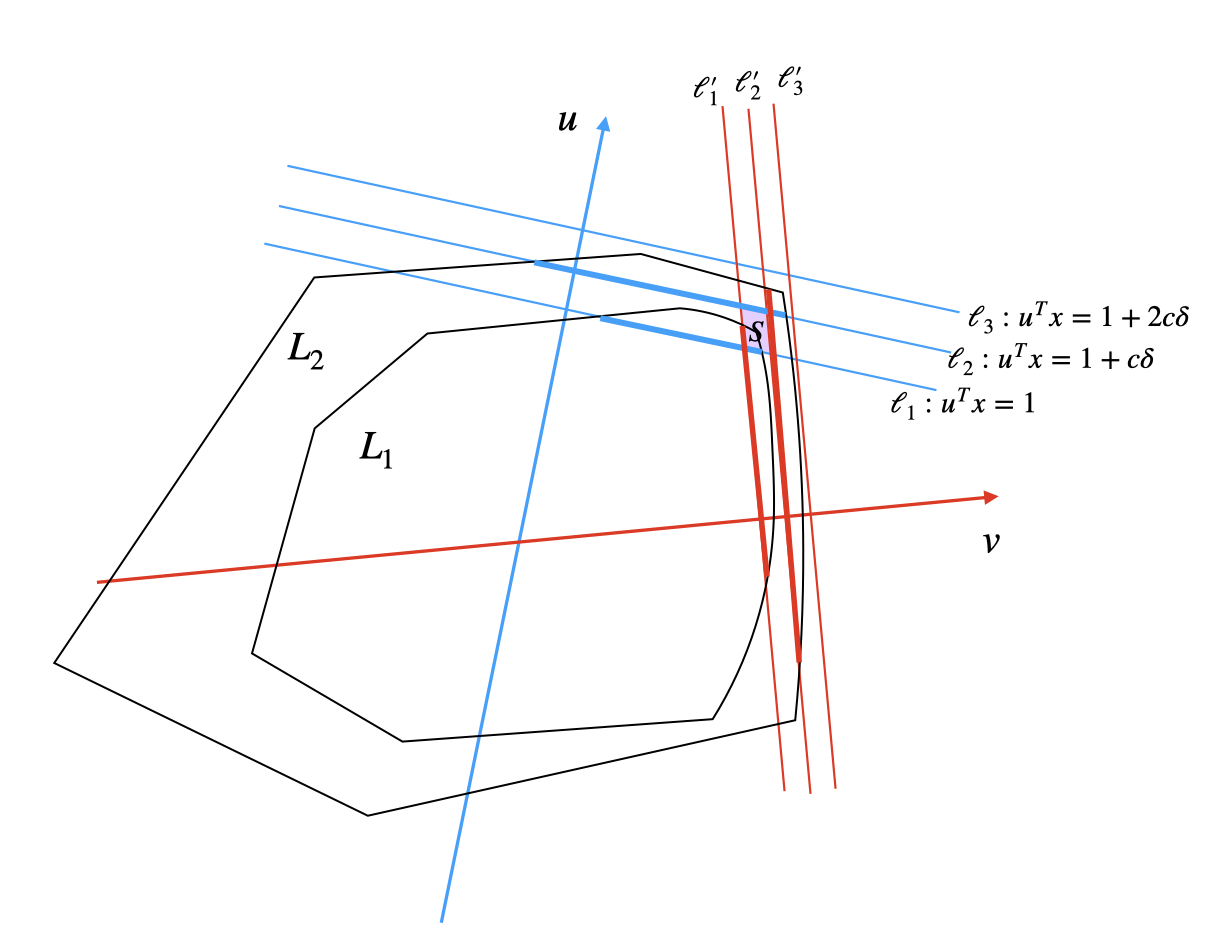}
	\caption{The intersection of halfspaces has measure $O(\delta^2)$.}
	\label{fig:intersection-measure}
	\end{figure}
\end{proof}

\end{document}